\tikzstyle arrowstyle=[scale=1]
\tikzstyle{directed}=[postaction={decorate,decoration={markings,mark=at position .65 with {\arrow[arrowstyle]{stealth}}}}]
\newtheorem{thm}{Theorem}
\newtheorem{defn}[thm]{Definition}
\newtheorem{lemma}[thm]{Lemma}
\newtheorem{coro}[thm]{Corollary}
\newtheorem{claim}[thm]{Claim}
\newcommand{\e}{\mathrm{e}}
\newcommand{\Exp}{\mathbb{E}}
\newcommand{\Ind}{\mathbbm{1}}
\newcommand{\dist}{\mathrm{dist}}
\newcommand{\bs}{\backslash}
\newcommand{\PHB}{\mathcal{B}}
\newcommand{\PIso}{\mathcal{IV}}
\newcommand{\PM}{\mathcal{MSW}}
\newcommand{\PSW}{\mathcal{SW}}
\newcommand{\Tmix}{T_{\mathrm{mix}}}
\newcommand{\Tcoup}{T_{\mathrm{coup}}}
\newcommand{\Trel}{T_{\mathrm{rel}}}
\newcommand{\bB}{\beta}
\newcommand{\sS}{\sigma}
\newcommand{\num}{\nu_{\mathrm{m}}}
\newcommand{\integers}{\mathbb{Z}}
\newcommand{\R}{\mathbb{R}}
\newcommand{\N}{\mathbb{N}}
\def\taumix{T_{\mathrm{mix}}}
\def\taurel{T_{\mathrm{rel}}}
\def\IC{\Omega}
\def\JC{\Omega_\textsc{J}}
\def\MJC{\Omega_\textsc{J}^\textrm{m}}
\def\MC{\mathcal{C}}
\DeclareMathOperator*{\1}{\mathbbm{1}}
\newcommand{\fptas}{\mathsf{FPTAS}}
\newcommand{\fpras}{\mathsf{FPRAS}}
\begin{document}

\title{\textbf{Swendsen-Wang Dynamics for General Graphs in the
Tree Uniqueness Region}}

\author{Antonio Blanca\thanks{School of Computer Science, Georgia
Institute of Technology, Atlanta GA 30332.\newline
 \texttt{\{ablanca3,chenzongchen,vigoda\}@gatech.edu}.
Research supported in part by NSF grants CCF-1617306 and CCF-1563838. }
\and
Zongchen Chen$^*$
\and
Eric Vigoda$^*$
}
\date{\today}

\maketitle

\begin{abstract}
	The Swendsen-Wang dynamics is a popular algorithm for sampling from the 
	Gibbs distribution for the ferromagnetic Ising model on a graph $G=(V,E)$.  The dynamics is a ``global'' Markov chain which is
	conjectured to converge to equilibrium in $O(|V|^{1/4})$ steps for any graph $G$ at any 
	(inverse) temperature $\beta$.  
	It was recently proved by Guo and Jerrum (2017) that the Swendsen-Wang dynamics 
	has polynomial mixing time on any graph at all temperatures, yet there are few 
	results providing $o(|V|)$ upper bounds on its convergence time.
	
	We prove fast convergence of the Swendsen-Wang dynamics
	on general graphs in the tree uniqueness region of the ferromagnetic Ising model.  
	In particular, when $\beta < \beta_c(d)$ where $\beta_c(d)$ denotes the uniqueness/non-uniqueness
	threshold on infinite $d$-regular trees,
	we prove that
	the \textit{relaxation time} (i.e., the inverse spectral gap) of the Swendsen-Wang dynamics
	is $\Theta(1)$	on	any graph  of maximum degree $d \ge 3$.
	Our proof utilizes a version of the Swendsen-Wang dynamics which only
	updates isolated vertices.  
	We establish that this variant of the Swendsen-Wang dynamics has mixing time $O(\log{|V|})$
	and relaxation time $\Theta(1)$ on any graph of maximum degree $d$ for all $\beta < \beta_c(d)$.
	We believe that this Markov chain may be of independent interest, as it is a \textit{monotone} Swendsen-Wang type chain. 
	As part of our proofs, we provide modest extensions of the technology of  Mossel and Sly (2013) for analyzing mixing times and of the censoring result of Peres and Winkler (2013).
	Both of these results are for the Glauber dynamics, and we extend them here 
	to general monotone Markov chains. This class of dynamics includes for example the \textit{heat-bath block dynamics}, for which we obtain new tight mixing time bounds.
\end{abstract}

\thispagestyle{empty}

\newpage

\setcounter{page}{1}

\section{Introduction}

For spin systems, sampling from the associated Gibbs distribution
is a key computational
task with a variety of applications, notably including inference/learning \cite{GRS}
and approximate counting \cite{JVV,SVV}.  
In the study of spin systems, a model of prominent interest is the Ising model.  This
is a classical model in statistical physics, which was introduced in the 1920's to study  
the ferromagnet and its physical phase transition \cite{Ising,Lenz}.
More recently, the Ising model has found numerous applications in theoretical computer science, 
computer vision, social network analysis, game theory,
biology, discrete probability and many other fields \cite{DMR,GG,Ellison,Felsenstein,MoS}.

An instance of the (ferromagnetic) Ising model is given by an undirected graph $G=(V,E)$ on $n=|V|$ vertices
and an (inverse) temperature $\beta>0$.
A configuration $\sigma \in \{+,-\}^V$
assigns a spin value ($+$~or~$-$) to each vertex $v \in V$.
The probability of a configuration $\sigma$ is proportional to 
\begin{equation}\label{eqn:gibbs}
	w(\sigma)=\exp\Big(\beta \sum_{\{v,w\}\in E} \sS(v)\sS(w) \Big),
\end{equation}
where $\sS(v)$ is the spin of $v$.
The associated Gibbs distribution $\mu=\mu_{G,\beta}$ is given by
$
\mu(\sigma) = w(\sigma)/Z,
$
where
the normalizing factor $Z$ is known as the \textit{partition function}.
Since $\beta > 0$ the system is ferromagnetic as neighboring vertices prefer to align their
spins.

For general graphs Jerrum and Sinclair \cite{JS} presented an $\fpras$
for the partition function (which yields an efficient sampler); however, its
running time is a large polynomial in $n$. Hence, there is significant interest in
obtaining tight bounds on the convergence rate of Markov chains for the Ising model,
namely, Markov chains on the space of Ising configurations $\{+,-\}^V$ that converge to
Gibbs distribution $\mu$.
A standard notion for measuring the speed of convergence to stationarity is the \textit{mixing time},
which is defined as the number of steps until the Markov chain is close to
its stationary distribution in total variation distance, starting from the worst possible initial configuration.

A simple, popular Markov chain for sampling from the Gibbs distribution is
the Glauber dynamics, commonly referred to as the Gibbs sampler
in some communities. This dynamics works by updating a 
randomly chosen vertex in each step in a reversible fashion.  
Significant progress has been 
made in understanding the mixing properties of the Glauber dynamics
and its connections to the spatial mixing (i.e., decay of correlation) properties of the underlying spin system.
In general, in the high-temperature region (small $\beta$) correlations typically
decay exponentially fast, and one expects the Glauber dynamics to converge quickly to stationarity.
For example, for the special case of the integer lattice $\integers^2$, in the high-temperature region
it is well known that the Glauber dynamics has mixing time $\Theta(n\log{n})$ \cite{MOI,Cesi,DSVW}.
For general graphs, Mossel and Sly \cite{MS} proved that the
Glauber dynamics mixes in $O(n\log{n})$ steps on any graph of maximum degree $d$ in the tree uniqueness region.
Tree uniqueness is defined as follows: 
let $T_h$ denote a (finite) complete tree of height $h$ (by complete
we mean all internal vertices have degree $d$).  
Fix the leaves to be all $+$ spins, consider the resulting conditional Gibbs
distribution on the internal vertices, and let $p^+_h$ denote the probability the root is 
assigned spin $+$ in this conditional distribution; similarly, let $p^-_h$ denote the corresponding
marginal probability with the leaves fixed to spin $-$.
When $\beta<\beta_c(d)$, where $\beta_c(d)$ is such that
\begin{equation}\label{eqn:beta_c}
(d-1)\tanh\beta_c(d) = 1,
\end{equation}
then $p^+_\infty = p^-_\infty$ 
and we say {\em tree uniqueness} 
holds since there is a unique Gibbs measure on the infinite $d$-regular tree \cite{Preston}.
In the same setting, building upon the approach of Weitz \cite{Weitz}
for the hard-core model,
Li, Lu and Yin~\cite{LLY}
provide an $\fptas$ for the partition function, but the running time is a large polynomial in $n$.

In practice, it is appealing 
to utilize non-local (or global) chains which possibly update $\Omega(n)$
vertices in a step; these chains are more popular due to their presumed speed-up
and for their ability to be naturally parallelized \cite{BL}.

A notable example for the ferromagnetic Ising model 
is the Swendsen-Wang (SW) dynamics \cite{SW}
which utilizes the random-cluster representation
to derive an elegant Markov chain in which every vertex can change its spin in every step.
The  SW dynamics works in the following manner.  
From the current spin configuration $\sS_t\in\{+,-\}^V$:
\begin{enumerate}
\item Consider the set of agreeing edges $E(\sS_t) = \{(v,w)\in E: \sS_t(v) = \sS_t(w)\}$;
\item Independently for each edge $e\in E(\sS_t)$, ``percolate'' 
by deleting $e$ with probability 
$\exp(-2\beta)$ and keeping $e$ with probability $1-\exp(-2\beta)$; this yields $F_t\subseteq E(\sS_t)$;
\item
For each connected component $C$ in the subgraph $(V,F_t)$, choose a spin $s_C$ uniformly at random
from $\{+,-\}$, and then assign spin $s_C$ to all vertices in $C$, 
yielding $\sS_{t+1}\in\{+,-\}^V$.
\end{enumerate}
\noindent The proof that the stationary distribution of the 
SW dynamics is the Gibbs distribution is non-trivial; see \cite{ES} for an elegant proof. The SW dynamics is also
well-defined for the \textit{ferromagnetic Potts model}, a natural generalization of the Ising model
that allows vertices to be assigned $q$ different spins.

The SW dynamics for the Ising model is quite appealing as it is 
conjectured to mix quickly at all temperatures.
Its behavior for the Potts model (which 
corresponds to $q>2$ spins) is more subtle, as there are
multiple examples of classes of graphs where the SW dynamics
is torpidly mixing; i.e., mixing time is exponential in the number of vertices of the graph; see, e.g., \cite{GJ,GSV,BSmf,GLP,BCT,BFKTVV}.


Despite the popularity \cite{Wang,SaSoI,SaSoII} and rich mathematical structure \cite{Grimmett}
of the SW dynamics there are few results with tight bounds on
its speed of convergence to equilibrium.  In fact, there are few results
proving the SW dynamics is faster than the Glauber dynamics 
(or the edge dynamics analog in the random-cluster
representation). Most results derive as a consequence of analyses of these local dynamics. 
Recently,  Guo and Jerrum \cite{GuoJ} established that the {mixing time} of the SW dynamics on \textit{any} graph and at any temperature 
is $O(|V|^{10})$. 
This bound, however,  
is far from the conjectured universal upper bound of $O(|V|^{1/4})$ \cite{Peres}, and once again
their result derives from a bound on a
local chain (the edge dynamics in the random-cluster representation).
  
In the special case of the \textit{mean-field} Ising model, which corresponds
to the underlying graph $G$ being the complete graph on $n$ vertices, 
Long, Nachmias, Ning and Peres \cite{LNNP} 
provided a tight analysis of the mixing time of the SW dynamics.
They prove that the mixing time of the mean-field SW dynamics is $\Theta(|V|^{1/4})$; 
this is expected to be the worst case and thus yields the aforementioned conjecture \cite{Peres}.

Another relevant case for which the speed of convergence is known
is the  two-dimensional integer lattice $\integers^2$ (more precisely, finite subsections of it).  
Blanca, Caputo, Sinclair and Vigoda \cite{BCSV} recently
established that the \textit{relaxation time} 
of the SW dynamics is $\Theta(1)$ in the high-temperature region.
The relaxation time measures the
speed of convergence to $\mu$
when the initial configuration is reasonably close to this
distribution (a so-called ``warm start'') \cite{JSV,KLS}. 
More formally, the
relaxation time is equal to the inverse spectral gap of the transition matrix of the chain and is another well-studied notion of rate of
convergence \cite{LP}.
 This result \cite{BCSV} applied
a well-established proof approach \cite{MOI,DSVW} which utilizes that $\integers^2$ is an
amenable graph.   Our goal in this paper is to establish results for general graphs of bounded degree.

Our inspiration is the result of Mossel and Sly \cite{MS} who
proved $O(n\log{n})$ mixing time of the Glauber dynamics for every graph
of maximum degree $d$.
When $\beta < \beta_c(d)$, in addition to uniqueness on the infinite $d$-regular tree, the ferromagnetic Ising model is also known to exhibit several key spatial mixing properties. 
For instance,
Mossel and Sly \cite{MS} showed that when $\beta < \beta_c(d)$ a rather strong form of spatial mixing holds on graphs of maximum degree~$d$; see Definition~\ref{dfn:prelim:sm} and Lemma~\ref{lemma:isw:SSM} in Section~\ref{sec:iso}. Using this, together with the
censoring result of Peres and Winkler \cite{PW} for the Glauber dynamics, 
they establish optimal bounds for the mixing and relaxation times of the Glauber dynamics. 
At a high-level, the censoring result \cite{PW} says that extra updates by the Markov
chain do not slow it down, and hence one can ignore transitions outside a local region
of interest in the analysis of mixing times.

A Markov chain is \textit{monotone} if it preserves the natural partial order on states; see Section~\ref{sec:background} for a detailed definition. 
We generalize the proof approach of Mossel and Sly to apply to general (non-local) monotone Markov chains.
This allows us to analyze a monotone variant of the SW dynamics, and a direct comparison of these two chains yields
a new bound  for the relaxation time of the SW dynamics.

\begin{thm}
	\label{thm:sw:intro}
	Let $G$ be an arbitrary $n$-vertex graph of maximum degree $d$.
	If $\beta < \beta_c(d)$, then the relaxation time of the Swendsen-Wang dynamics is $\Theta(1)$.
\end{thm}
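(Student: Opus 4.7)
The plan is to introduce a monotone variant of the Swendsen-Wang (SW) dynamics, which I will call the isolated-vertex SW dynamics (ISW), prove it has relaxation time $\Theta(1)$ using an extension of the Mossel-Sly machinery and the Peres-Winkler censoring inequality to monotone chains, and then transfer the bound to the full SW dynamics via a direct comparison of Dirichlet forms.

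First I would define ISW: given $\sigma_t$, perform the standard SW percolation on the agreeing edges $E(\sigma_t)$ to obtain a random subset $F_t$, and then resample only those vertices that form singleton components of $(V,F_t)$, assigning each a fresh uniform spin from $\{+,-\}$; all remaining vertices retain their current spins. Two properties are crucial. (a) $\mu$ remains the stationary distribution: in the Edwards-Sokal joint, conditional on $F$ the spins on each component are uniform $\{+,-\}$ and independent across components, so leaving non-singleton components untouched while refreshing singletons reproduces the correct conditional law. (b) ISW is monotone with respect to the coordinate-wise order on $\{+,-\}^V$, which I would verify via a grand coupling that exploits the fact that an isolated vertex is, after conditioning on being isolated, independent of the spins of its neighbors in the chain's next step.

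Second, I would establish a relaxation time of $\Theta(1)$ and a mixing time of $O(\log n)$ for ISW on any graph of maximum degree $d$ when $\beta<\beta_c(d)$. This uses the strong spatial mixing result \ref{lemma:isw:SSM} together with the extensions of the Mossel-Sly block analysis and the Peres-Winkler censoring inequality to general monotone chains developed in Section~\ref{sec:iso}. The strategy mirrors the Glauber-dynamics argument: decompose the graph into local regions, use censoring to restrict attention to updates inside a region, and invoke strong spatial mixing to bound the influence of the boundary. Monotonicity is essential here, since both censoring and the block recursion rely on preserving the partial order under a grand coupling.

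Third, I would transfer the bound from ISW to SW via a Dirichlet-form comparison. Conditional on the percolation output $F$, a full SW step resamples every component uniformly from $\{+,-\}$, while ISW only resamples singleton components; both operations sample the appropriate conditional $\mu(\cdot \mid F)$ on their respective component sets and are $\mu(\cdot \mid F)$-reversible. Consequently the SW transition kernel contains the ISW transitions as a sub-step, yielding $\mathcal{E}_{\mathrm{SW}}(f,f) \ge c\, \mathcal{E}_{\mathrm{ISW}}(f,f)$ for all $f$ and hence $\mathrm{gap}(P_{\mathrm{SW}}) \ge c\, \mathrm{gap}(P_{\mathrm{ISW}}) = \Theta(1)$. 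A matching upper bound on the gap follows from a standard test-function argument (e.g., using $f(\sigma) = \sigma(v)-\mu[\sigma(v)]$ for a single vertex $v$).

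The main obstacle I anticipate is the monotonicity of ISW. The set of agreeing edges is \emph{not} monotone in the spin configuration (raising a vertex from $-$ to $+$ creates new $+$-agreeing edges but destroys $-$-agreeing edges), so the obvious grand coupling on percolation randomness does not directly preserve the coordinate-wise order. The argument must exploit that only singleton-component vertices are updated: the update is effectively a vertex-wise operation whose randomness can be coupled across configurations despite the non-monotonicity of the percolation subgraph. Extending the single-site Mossel-Sly argument to a chain with this non-local percolation structure is also technically delicate, but once monotonicity and the censoring extension are in hand, the block decomposition adapts essentially verbatim.
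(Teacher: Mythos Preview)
Your proposal is correct and follows essentially the same route as the paper: introduce the Isolated-vertex dynamics, prove it is monotone via a grand coupling, establish $\Theta(1)$ relaxation time using the extended Mossel--Sly/Peres--Winkler machinery for monotone chains, and then transfer the bound to SW via the Dirichlet-form comparison $\taurel(\PSW)\le\taurel(\PIso)$ (which the paper quotes from \cite{BCSV}). The obstacle you flag---that $E(\sigma)$ is not monotone in $\sigma$---is exactly the issue the paper resolves in its proof of Lemma~\ref{lemma:isw:monotonicty}, and your intuition that the argument must exploit the singleton-only update is correct.
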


\noindent
This tight  bound for the relaxation time is a substantial improvement over the best previously known $O(n)$ bound which follows from 
Ullrich's comparison theorem \cite{Ullrich1}
combined with Mossel and Sly's result \cite{MS} for the Glauber dynamics. 
We note that in Theorem \ref{thm:sw:intro}, $d$ is assumed to be a constant independent of $n$
and thus the result holds for arbitrary graphs of \textit{bounded} degree.
We also mention that
while spatial mixing properties are known to imply optimal mixing of local dynamics, only recently
the effects of these properties on the rate of convergence of non-local dynamics have started to be investigated \cite{BCSV}. 
In general, spatial mixing properties have proved to have a number of powerful algorithmic applications in the design of efficient approximation algorithms for the partition function using the associated self-avoiding walk trees (see, e.g., \cite{Weitz,SST,LLY,GK,SSSY,Sly,SlySun}).

There are three key components in our proof approach.  
First, we generalize the recursive/inductive argument of 
Mossel and Sly \cite{MS} from the Glauber dynamics to general (non-local) monotone dynamics.
Since this approach relies crucially on the censoring result of Peres and Winkler \cite{PW}
which only applies to the Glauber dynamics, 
we also need to establish   
a 
modest extension of the censoring result. For this, we use the framework of Fill and Kahn~\cite{FillK}.
Finally, we require a monotone Markov chain
that can be analyzed with these new tools
and which is naturally comparable to the SW dynamics.
To this end we utilize the \emph{Isolated-vertex dynamics} which was previously used in \cite{BCSV}.

The Isolated-vertex dynamics operates in the same manner as the SW
dynamics, except in step 3 only components of size 1 choose a new random spin (other
components keep the same spin as in $\sigma_t$). We prove that the Isolated-vertex dynamics
is \textit{monotone}.
Combining these new tools we obtain the following result.

\begin{thm}
	\label{thm:iso:intro}
	Let $G$ be an arbitrary $n$-vertex graph of maximum degree $d$.
	If $\beta < \beta_c(d)$, then the mixing time of the Isolated-vertex dynamics is $O(\log{n})$, and its relaxation time is $\Theta(1)$.
\end{thm}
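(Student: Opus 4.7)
The plan is to establish Theorem \ref{thm:iso:intro} by combining three ingredients: (i) monotonicity of the Isolated-vertex dynamics with respect to the natural partial order on $\{+,-\}^V$ (where $-\le +$); (ii) the strong spatial mixing property available for $\beta<\beta_c(d)$ (Lemma \ref{lemma:isw:SSM}); and (iii) suitable generalizations of the Peres--Winkler censoring inequality and the Mossel--Sly inductive mixing-time argument to non-local monotone Markov chains.

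First, I would construct a grand coupling of the Isolated-vertex dynamics and use it to prove the chain is monotone. The coupling draws one uniform $U_e\in[0,1]$ per edge (used in the percolation step) and one uniform $V_v\in[0,1]$ per vertex (used to resample the spin if $v$ is isolated after percolation). Given two configurations with $\sigma_t\le\tau_t$, the agreement edge sets $E(\sigma_t)$ and $E(\tau_t)$ are generally incomparable, so one must argue directly that if $v$ becomes isolated in one chain and not in the other, the resulting spin update cannot flip the order at $v$. The crucial structural feature is that \emph{only} singleton components are rerandomized, so non-isolated vertices carry their spin forward; an isolated vertex can only move toward the same spin in both copies or in the ``right'' direction relative to the other chain. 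This is precisely the reason the full SW dynamics is not monotone while the Isolated-vertex variant is.

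Second, I would reprove the Peres--Winkler censoring inequality in the broader setting of reversible monotone chains using the Fill--Kahn stochastic-ordering framework. The original argument for Glauber only uses monotonicity and reversibility, so the extension is largely formal but must be stated carefully so that it applies to arbitrary families of monotone ``update'' kernels (including one-step kernels of the Isolated-vertex dynamics restricted to a subset of vertices). With this in hand, I would extend the Mossel--Sly recursion from the Glauber setting to general monotone dynamics: run coupled copies of the chain from the all-$+$ and all-$-$ configurations under the grand coupling, and bound the expected disagreement at a vertex $v$ after $t=C\log n$ steps by inducting on the radius $r$ of a ball $B(v,r)$. Censoring lets us ignore updates outside $B(v,r)$, and strong spatial mixing provides the exponential decay needed to close the induction, yielding mixing time $O(\log n)$.

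Third, for the relaxation time I would upgrade the argument to a local variance contraction: combining the monotone coupling with SSM shows that a single step of the Isolated-vertex dynamics contracts a suitable monotone variance functional by a constant factor, giving spectral gap $\Omega(1)$; the matching lower bound is trivial. I expect the main obstacle to be the monotonicity proof in step one. Although the rule ``only singletons update'' is intuitively the right restriction, verifying monotonicity rigorously requires tracking the combinatorial interaction between percolation outcomes in two coupled copies whose agreement edge sets differ, and ruling out every configuration of singletons and non-singletons that could violate the partial order. A secondary challenge is pushing the inductive argument from $O(\log n)$ mixing time all the way to constant spectral gap, which requires a per-step contraction statement rather than only a bound on the total coupling time.
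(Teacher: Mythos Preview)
Your overall strategy matches the paper's: monotonicity via a grand coupling, a generalized censoring inequality in the Fill--Kahn framework, and a Mossel--Sly style recursion combine to give $\Tmix(\PIso)=O(\log n)$. Two points, however, deserve correction.

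First, you underestimate the work needed to verify the censoring \emph{hypothesis}. The generalized censoring theorem (the paper's Theorem~\ref{thm:censoring}) is indeed a fairly formal extension of Peres--Winkler, but applying it requires exhibiting censored kernels $\PIso_A$ that are monotone, reversible, and satisfy $\PIso \le \PIso_A$ in the sense $\langle f,\PIso g\rangle_\mu \le \langle f,\PIso_A g\rangle_\mu$ for all increasing positive $f,g$. For heat-bath Glauber this inequality is essentially FKG; for the Isolated-vertex dynamics it is not ``largely formal.'' The paper proves it (Lemma~\ref{lemma:isw:censoring}) by decomposing $\PIso = TQT^*$ and $\PIso_A = TQ_AT^*$ on the joint Edwards--Sokal space, using the algebraic identity $Q = Q_A Q Q_A$, and then applying Harris's inequality to the product measure obtained by resampling isolated vertices. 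This is the substantive step specific to this chain, and your plan does not account for it.

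Second, your proposed route to $\Trel(\PIso)=O(1)$ via a one-step contraction of a ``monotone variance functional'' is unnecessary and not obviously feasible. The paper obtains the relaxation-time bound for free from the mixing-time analysis: the recursion actually yields $\Tmix(\PIso,\varepsilon)=O(\log(n/\varepsilon))$, and substituting $\varepsilon=1/n$ into the standard inequality $(\Trel-1)\log(2\varepsilon)^{-1}\le \Tmix(\PIso,\varepsilon)$ gives $\Trel(\PIso)=O(1)$ immediately. No separate spectral or variance-contraction argument is required.
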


\noindent
Our result for censoring may be of independent interest, as it applies to a fairly general class of non-local monotone Markov chains.
Indeed, combined with our generalization of Mossel and Sly's results \cite{MS},
it gives a general method for analyzing monotone Markov chains.

As the first application of this technology, we are able to
 establish tight bounds for the mixing and relaxation times of the \textit{block dynamics}.
  Let $\{B_1, . . . , B_r\}$ be a collection of sets (or blocks) such that $B_i \subseteq V$ and $V = \cup_i B_i$. The \textit{heat-bath block dynamics} with blocks $\{B_1, . . . , B_r\}$ is a Markov chain that in each step picks a block $B_i$ uniformly at random and updates the configuration in $B_i$ with a new configuration distributed according to the conditional measure in $B_i$ given the
 configuration in $V \setminus B_i$. 
 
 \begin{thm}
 	\label{thm:blocks:intro}
 	Let $G$ be an arbitrary $n$-vertex graph of maximum degree $d$ and let $\{B_1,\dots,B_r\}$ be an arbitrary collection of blocks such that $V = \cup_{i=1}^r B_i$.
 	If $\beta < \beta_c(d)$, then the mixing time of the block dynamics with blocks $\{B_1,\dots,B_r\}$ is $O(r \log n)$, and its relaxation time is $O(r)$.
 \end{thm}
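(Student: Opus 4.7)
The plan is to derive Theorem~\ref{thm:blocks:intro} as a fairly direct application of the general monotone-chain toolkit developed in the paper (the extended censoring result and the generalized Mossel--Sly recursive argument), using the heat-bath block dynamics as the input chain.

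The first step is to verify that the heat-bath block dynamics is monotone. A single heat-bath update of a block $B_i$ is, conditional on the external spins, just a sample from a ferromagnetic Ising measure on $G[B_i]$ with a boundary condition; by the FKG/monotonicity properties of the ferromagnetic Ising model, there is a grand coupling of these conditional measures that is monotone in the external configuration. Combining this with the same block choice for the two copies of the chain yields a global monotone coupling, so the chain is monotone in the sense required by the paper's framework.

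Next, I would apply the extended censoring result for general monotone Markov chains to reduce the analysis to updates performed in a convenient order (for example, first updating blocks far from a ``disagreement region'' before updating blocks close to it). Then I would invoke the generalized Mossel--Sly recursion: in the tree uniqueness regime $\beta<\beta_c(d)$, strong spatial mixing (Lemma~\ref{lemma:isw:SSM}) gives the exponential decay of boundary influence that drives the recursion, and the recursion is known from the paper's extension to hold for any monotone chain. The only quantity that changes relative to the Glauber-dynamics setting is the per-step probability of touching a designated region: for single-site Glauber this is $1/n$ and produces the factors of $n$ in $O(n\log n)$, whereas for the block dynamics the probability of selecting any particular block is $1/r$. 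Tracking this replacement through the Mossel--Sly recursion gives a coupling time of $O(r\log n)$, hence $\Tmix=O(r\log n)$, and contraction of the monotone coupling at rate $1-\Omega(1/r)$ per step, hence $\Trel=O(r)$.

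The main technical point (and the main obstacle) is confirming that the Mossel--Sly recursion, which was originally tailored to single-site updates, does go through without structural modification when a step updates an entire block. The crucial observation is that strong spatial mixing is a statement about how the marginal on any subregion depends on faraway boundary spins and is insensitive to how large the subregion being updated is; the monotonicity together with the censoring extension then allows the recursive ``peeling'' of disagreements one scale at a time to be performed with block updates in place of site updates, with each scale now requiring expected time $O(r)$ instead of $O(n)$ to hit the relevant block. Once this bookkeeping is made precise, the stated $O(r\log n)$ mixing and $O(r)$ relaxation bounds follow, and both are best possible in terms of $r$ since any block is selected only once in every $r$ steps in expectation.
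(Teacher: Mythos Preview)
Your high-level plan is correct and matches the paper's route (monotonicity $+$ censoring $+$ the generalized Mossel--Sly recursion of Theorem~\ref{thm:mixingtime:new}), but there is a concrete gap in the censoring step, and your description of where the factor $r$ enters is off.

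The paper's framework (Definition~\ref{dfn:isw:censoring} and Theorem~\ref{thm:censoring}) does not censor by reordering or delaying block updates as you suggest. What is needed is a \emph{spatial} censoring: for each $A\subseteq V$ one must define a transition matrix $P_A$ that acts only on $A$, is monotone and reversible w.r.t.\ $\mu$, and satisfies $P\le P_A$. For the block dynamics the right choice is not ``only update blocks contained in $A$'' (a block could be enormous and miss $A$ entirely, so the censored chain would never touch $A$); rather, the paper sets $\PHB_A$ to pick $B_k$ u.a.r.\ and then resample only $A\cap B_k$ from $\mu(\cdot\mid\sigma_t(V\setminus(A\cap B_k)))$. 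Verifying $\PHB\le \PHB_A$ is the technical content of Lemma~\ref{lem:block-censoring} and uses the FKG positive-correlation property of the conditional Ising measure on $B_k$; this step is not automatic from monotonicity alone and is absent from your outline.

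Once the censoring is in place, the factor $r$ does not enter through a ``contraction rate $1-\Omega(1/r)$'' of the global monotone coupling, but through the hypothesis of Theorem~\ref{thm:mixingtime:new}: one needs $\taumix(\PHB_{B(v,R)})\le T$, and a coupon-collector/coupling argument on the constant-size ball $B(v,R)$ gives $T=O(r)$ because each vertex of $B(v,R)$ lies in some block and is therefore hit with probability at least $1/r$ per step. Theorem~\ref{thm:mixingtime:new} then yields $\Tmix=O(T\log n)=O(r\log n)$ and $\Trel=O(T)=O(r)$ directly. Your intuition about the $1/r$ hitting probability is exactly what drives this bound, but it should be phrased as a mixing-time bound for the censored chain on $B(v,R)$, not as a global contraction rate.
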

 
 \noindent
 We observe that there are no restrictions  on the geometry of the blocks $B_i$ in the theorem other than $V = \cup_i B_i$. 
 These optimal bounds were only known before for certain specific collections of blocks.
 
As a second application of our technology, 
we consider another monotone variant of the SW dynamics, 
which we call the \textit{Monotone SW dynamics}.
This chain proceeds exactly like the SW dynamics, except that in step 3 each connected component $C$ is assigned a new random spin only with probability $1/2^{|C|-1}$ and is not updated otherwise; see Section \ref{section:msw} for a precise definition. We derive the following bounds.

\begin{thm}
	\label{thm:msw:intro}
	Let $G$ be an arbitrary $n$-vertex graph of maximum degree $d$.
	If $\beta < \beta_c(d)$, then the mixing time of the Monotone SW dynamics is $O(\log{n})$, and its relaxation time is $\Theta(1)$.
\end{thm}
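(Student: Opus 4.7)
The plan is to leverage the general machinery the paper develops for monotone Markov chains: strong spatial mixing at $\beta<\beta_c(d)$ combines with the extended Mossel--Sly argument and the generalized Peres--Winkler censoring inequality to yield $O(\log n)$ mixing and $\Theta(1)$ relaxation. The content specific to this theorem is (i) verifying that the Monotone SW dynamics is truly monotone and (ii) observing that the Isolated-vertex dynamics is exactly what one obtains from the Monotone SW dynamics by censoring all component updates of size at least two, so that Theorem~\ref{thm:iso:intro} can then be applied directly to produce the upper bounds.

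The main obstacle is the monotonicity proof, since standard Swendsen--Wang is \emph{not} monotone and the agreeing-edge sets for two ordered configurations are in general incomparable. I would use the following equivalent description of step~3 of the Monotone SW dynamics: after drawing the percolation set $F_t$, draw an independent uniform coin $\xi_v\in\{+,-\}$ for every vertex $v\in V$, and for each component $C$ of $(V,F_t)$ update every vertex in $C$ to the common value if $\{\xi_v\}_{v\in C}$ is constant, and leave $C$ unchanged otherwise. The probability of agreement is $2^{-|C|+1}$ and, conditioned on agreement, the assigned spin is uniform in $\{+,-\}$, matching the definition in Section~\ref{section:msw}. Couple two copies $\sS_t\le\tau_t$ using (a) a single family of edge Bernoullis of parameter $1-\e^{-2\beta}$ driving percolation in both copies, and (b) a single vector of vertex coins $\{\xi_v\}$. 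To see that $\sS_{t+1}\le\tau_{t+1}$ almost surely it is enough to rule out a vertex $v$ with $\sS_{t+1}(v)=+$ and $\tau_{t+1}(v)=-$. Such a $v$ must be updated in the $\sS$-chain (with $\xi_v=+$) but not in the $\tau$-chain, so some $u$ in the $\tau$-component $C_\tau(v)$ satisfies $\xi_u=-$. If, in addition, $\tau_t(v)=-$, then all of $C_\tau(v)$ has value $-$ in $\tau_t$, hence also in $\sS_t$ by $\sS_t\le\tau_t$; the edges of a path from $v$ to $u$ inside $C_\tau(v)$ are therefore agreeing in $\sS_t$ as well, and the shared edge Bernoullis place $u$ in the same $\sS$-component as $v$. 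Since $\xi$ is constant on that $\sS$-component and $\xi_v=+$, this forces $\xi_u=+$, a contradiction. A symmetric argument handles the other orientation.

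Once monotonicity is established, the upper bounds follow at once from the censoring comparison with the Isolated-vertex dynamics together with Theorem~\ref{thm:iso:intro}:
\[
\Tmix(\PM) \le \Tmix(\PIso) = O(\log n), \qquad \Trel(\PM) \le \Trel(\PIso) = O(1).
\]
The trivial lower bound $\Trel(\PM)\ge 1$, valid for any reversible chain, closes the $\Theta(1)$ claim and completes the proof.
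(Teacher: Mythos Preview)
Your monotonicity argument for $\PM$ is correct and is essentially the paper's proof of Lemma~\ref{lemma:MSW-mono}: the vertex--coin reformulation of step~3 and the shared edge/coin coupling are exactly what the paper uses.

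The gap is in the second step. You assert that $\Tmix(\PM)\le\Tmix(\PIso)$ and $\Trel(\PM)\le\Trel(\PIso)$ follow ``from the censoring comparison with the Isolated-vertex dynamics,'' but this comparison is not provided by the paper's framework and you have not supplied it. Definition~\ref{dfn:isw:censoring} and Theorem~\ref{thm:censoring} concern a collection $\{P_A\}_{A\subseteq V}$ where each $P_A$ acts on the vertex set $A$; the Isolated-vertex dynamics is not $\PM_A$ for any $A$ (it acts on all of $V$, yet $\PM_V=\PM\neq\PIso$). What the proof of Theorem~\ref{thm:censoring} actually needs is the operator inequality $\PM\le\PIso$ (i.e., $\langle \PM f,g\rangle_\mu\le\langle \PIso f,g\rangle_\mu$ for increasing positive $f,g$), and this is a nontrivial statement you would still have to prove. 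It is plausible---one can write $\PIso=TSK_0S^*T^*$ with $K_0$ resampling only marked singletons and then mimic the positive-correlation argument of Lemma~\ref{lemma:MSW-censoring}---but that is precisely the work you have skipped, and it is of the same order of difficulty as the route the paper takes.

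The paper proceeds differently: it builds its own censoring $\{\PM_A\}_{A\subseteq V}$ for $\PM$ (Lemma~\ref{lemma:MSW-censoring}), checks $\taumix(\PM_{B(v,R)})=O(1)$ by the same crude coupling as for $\PIso$, and then invokes Theorem~\ref{thm:mixingtime:new} directly. Your idea of reducing to Theorem~\ref{thm:iso:intro} is a genuine alternative and, if completed, would be slightly more economical (reusing the Isolated-vertex analysis rather than redoing the local mixing bound), but as written the key inequality is asserted rather than proved.
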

  
  \noindent
 The remainder of the paper is structured as follows.
Section \ref{sec:background} contains some basic definitions and facts used throughout the paper.
 In Section \ref{sec:iso} we study the Isolated-vertex dynamics and establish Theorem~\ref{thm:iso:intro}.
 Theorem~\ref{thm:sw:intro} for the SW dynamics will
 follow as an easy corollary of these results.
 In Section \ref{sec:iso} we also state our generalization of Mossel and Sly's approach \cite{MS} for 
 non-local dynamics (Theorem \ref{thm:mixingtime:new})
 and our censoring result (Theorem \ref{thm:censoring}). 
 The proofs of these theorems are included in Sections \ref{app:main-proof} and \ref{app:censoring}, respectively.
 Finally, the full proofs of Theorems \ref{thm:blocks:intro} and \ref{thm:msw:intro} are provided in Sections \ref{app:block} and \ref{section:msw}, respectively.


\section{Background}
\label{sec:background}
In this section we provide a number of standard definitions that we will refer to in our proofs. For more details see the book \cite{LP}.

\medskip\noindent
\textbf{Ferromagnetic Ising model.}\ \ Given a graph $G=(V,E)$ and a real number $\bB>0$, the ferromagnetic Ising model on $G$ consists of the probability distribution over $\Omega_G=\{+,-\}^V$ given by
\begin{equation}
\mu_{G,\bB}\big(\sS\big) = \frac{1}{Z(G,\bB)} \exp\left[ \bB \sum\nolimits_{\{u,v\}\in E} \sigma(u)\sigma(v) \right], 
\end{equation}
where $\sS \in \Omega_G$ and
$$
Z(G,\bB) = \sum\nolimits_{\sS\in\Omega_G} \exp\left[ \bB \sum\nolimits_{\{u,v\}\in E} \sigma(u)\sigma(v) \right]
$$
is called the \textit{partition function}. 

\medskip\noindent
\textbf{Mixing and relaxation times.} \ Let $P$ be the transition matrix of an ergodic (i.e., irreducible and aperiodic) Markov chain over $\Omega_G$ with stationary distribution $\mu = \mu_{G,\beta}$.
Let $P^t(X_0,\cdot)$ denote the distribution of the chain after $t$ steps starting from $X_0 \in \Omega_G$, and let
\[
\taumix(P,\varepsilon) = \max\limits_{X_0 \in \Omega}\min \left\{ t \ge 0 : {\|{P}^t(X_0,\cdot)-\mu(\cdot)\|}_{\textsc{tv}} \le \varepsilon \right\}.
\]
The \textit{mixing time} of $P$ is defined as $\taumix(P) = \taumix(P,1/4)$. 

If $P$ is reversible with respect to (w.r.t.)\ $\mu$, the spectrum of $P$ is real. Let $1 = \lambda_1 > \lambda_2 \ge ... \ge \lambda_{|\Omega|} \geq -1$ denote its eigenvalues.  
The {\it absolute spectral gap} of $P$ is defined by $\lambda(P) = 1 - \lambda^*$, where $\lambda^* = \max\{|\lambda_2|,|\lambda_{|\Omega|}|\}$.
$\Trel(P) = \lambda(P)^{-1}$ is called 
the \textit{relaxation time} of $P$, and is another well-studied notion of rate of
convergence to $\mu$ \cite{JSV,KLS}.

\medskip\noindent
\textbf{Couplings and grand couplings.} \ A {\it (one step) coupling} of a Markov chain $\mathcal{M}$ over $\Omega_G$ specifies, for every pair of states $(X_t, Y_t) \in \Omega_G\times\Omega_G$, a probability distribution over $(X_{t+1}, Y_{t+1})$ such that the processes $\{X_t\}$ and $\{Y_t\}$, viewed in isolation, are faithful copies of $\mathcal{M}$, and if $X_t=Y_t$ then $X_{t+1}=Y_{t+1}$. 
Let $\{X_t^\sS\}_{t\geq 0}$ denote an instance of $\mathcal{M}$ started from $\sS\in\Omega_G$. 
A \emph{grand coupling} of $\mathcal{M}$ is a simultaneous coupling of  $\{X_t^\sS\}_{t\geq 0}$ for all $\sS \in \Omega_G$.

\medskip\noindent
\textbf{Monotonicity.} \
For two configurations $\sS,\tau\in\Omega_G$, we say $\sS \geq \tau$ if $\sS(v) \geq \tau(v)$ for all $v\in V$ (assuming ``$+$''$\,>\,$``$-$''). This induces a partial order on $\Omega_G$.
The ferromagnetic Ising model is \textit{monotone} w.r.t.\ this
partial order, since 
for every $B \subseteq V$ and every pair of configurations $\tau_1$, $\tau_2$ on $B$ such that $\tau_1 \ge \tau_2$ we have
$\mu(\cdot \mid \tau_1) \succeq \mu(\cdot \mid \tau_2)$, where $\succeq$
denotes stochastic domination. (For two distributions $\nu_1,\nu_2$ on $\Omega_G$, we say that $\nu_1$ stochastically dominates $\nu_2$ if for any increasing function $f \in \R^{|\Omega_G|}$ we have $\sum_{\sS\in\Omega_G} \nu_1(\sS)f(\sS) \geq \sum_{\sS\in\Omega_G} \nu_2(\sS)f(\sS)$, where a vector or function $f \in \R^{|\Omega_G|}$ is increasing if $f(\sigma) \ge f(\tau)$ for all $\sigma \ge \tau$.)

Suppose $\mathcal{M}$ is an ergodic Markov chain over $\Omega_G$ with stationary distribution $\mu$ and transition matrix $P$. A coupling of two instances $\{X_t\}$, $\{Y_t\}$ of $\mathcal{M}$ is a \textit{monotone coupling} if $X_{t+1} \ge Y_{t+1}$ whenever $X_{t} \ge Y_{t}$. We say that $\mathcal{M}$ is a \textit{monotone Markov chain} and $P$ is a \textit{monotone transition matrix} if $\mathcal{M}$ has a monotone grand coupling.

\medskip\noindent
\textbf{Comparison inequalities.} \
The \textit{Dirichlet form} of a Markov chain with transition matrix $P$
reversible w.r.t.\ $\mu$
is defined for any $f,g \in \R^{|\Omega_G|}$ as
\[
\mathcal{E}_P(f,g) = \langle f,(I-P)g \rangle_\mu = \frac{1}{2} \sum_{\sS,\tau \in \Omega_G} \mu(\sS) P(\sS,\tau) (f(\sS) - f(\tau))(g(\sS) - g(\tau)),
\]
where $\langle f,g \rangle_\mu = \sum_{\sS \in \Omega_G} \mu(\sS)f(\sS)g(\sS)$ for all $f,g \in \R^{|\Omega_G|}$.

If $P$ and $Q$ are the transition matrices of two monotone Markov chains 
reversible w.r.t.\ $\mu$, we say that $P\leq Q$ if $\langle Pf,g \rangle_\mu \leq \langle Qf,g \rangle_\mu$ for every increasing and positive $f,g \in \R^{|\Omega_G|}$. 
Note that $P\leq Q$ is equivalent to $\mathcal{E}_{P}(f,g) \geq \mathcal{E}_{Q}(f,g)$ for every increasing and positive $f,g \in \R^{|\Omega_G|}$. 

\section{Isolated-vertex dynamics}
\label{sec:iso}

In this section we consider a variant of the SW dynamics known as the \textit{Isolated-vertex dynamics} which was first introduced in \cite{BCSV}. 
We shall use this dynamics to introduce a general framework for analyzing monotone Markov chains for the Ising model and to derive our bounds for the SW dynamics. Specifically, we will prove Theorems~\ref{thm:sw:intro} and \ref{thm:iso:intro} from the introduction.

Throughout the section,
let $G=(V,E)$ be an arbitrary $n$-vertex graph of maximum degree~$d$, $\mu = \mu_{G,\beta}$ and $\Omega = \Omega_G$.
Given an Ising model configuration $\sigma_t \in \Omega$,
one step of the Isolated-vertex dynamics is given by:
\begin{enumerate}
	\item Consider the set of agreeing edges $E(\sS_t) = \{(v,w)\in E: \sigma_t(v) = \sigma_t(w)\}$;
	\item Independently for each edge $e\in E(\sS_t)$, delete $e$ with probability 
	$\exp(-2\beta)$ and keep $e$ with probability $1-\exp(-2\beta)$; this yields $F_t\subseteq E(\sS_t)$;
	\item
	For each \emph{isolated} vertex $v$ in the subgraph $(V,F_t)$ (i.e., those vertices with no incident edges in $F_t$), choose a spin uniformly at random
	from $\{+,-\}$ and assign it to $v$ to obtain $\sigma_{t+1}$; all other (non-isolated) vertices keep the same spin as in $\sS_t$.
\end{enumerate}	
We use $\PIso$ to denote the transition matrix of this chain.  The reversibility of $\PIso$ with respect to $\mu$ was established in \cite{BCSV}.
Observe also that in step 3, only \textit{isolated vertices} are updated with new random spins, whereas in the SW dynamics all connected components are assigned new random spins.
It is thus intuitive that the SW dynamics converges faster to stationarity than the Isolated-vertex dynamics. This intuition was partially captured in \cite{BCSV},  where it was proved that
\begin{equation}
\label{eq:isw:comparison}
\taurel(\PSW) \le \taurel(\PIso).
\end{equation}

The Isolated-vertex dynamics exhibits various properties
that vastly simplify its
analysis.
These properties allow us to deduce, for example, strong bounds for both its relaxation and mixing times.
Specifically, we show (in Theorem~\ref{thm:iso:intro}) that when $\beta < \beta_c(d)$, $\taumix(\PIso) = O(\log n )$  and $\taurel(\PIso) = \Theta(1)$; see (\ref{eqn:beta_c}) for the definition of $\beta_c(d)$. 
Theorem \ref{thm:sw:intro} from the introduction then follows from (\ref{eq:isw:comparison}). 


A comparison inequality like~\eqref{eq:isw:comparison} but for mixing times is not known, so Theorem \ref{thm:iso:intro} does not yield a $O(\log n)$ bound for the mixing time of the SW dynamics as one might hope. Direct comparison inequalities for mixing times are rare, since almost all known techniques 
involve
the comparison of Dirichlet forms,
and there are inherent penalties in using such inequalities to derive mixing times bounds.

The first key property of the Isolated-vertex dynamics is that, unlike the SW dynamics, this Markov chain  is \textit{monotone}.
Monotonicity is known to play a key role 
in relating
spatial mixing (i.e., decay of correlation) properties 
to fast convergence of the Glauber dynamics.
For instance, for spin systems in lattice graphs,
sophisticated functional analytic techniques are required
to establish the equivalence
between a spatial mixing property known as \textit{strong spatial mixing}
and
optimal mixing of the Glauber dynamics \cite{MOI,MOII,MOS}. 
For monotone spin systems such as the Ising model a simpler combinatorial
argument yields the same sharp result \cite{DSVW}.
This combinatorial argument is in fact more robust,
since it can be used to analyze a larger class of Markov chains, including for example the systematic scan dynamics \cite{BCSV}.

\begin{lemma}
	\label{lemma:isw:monotonicty}
	For all graphs $G$ and all $\bB>0$, the Isolated-vertex dynamics for the Ising model is monotone.
\end{lemma}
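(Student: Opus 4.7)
The plan is to exhibit a monotone grand coupling of the Isolated-vertex dynamics. For each edge $e \in E$ draw an independent uniform $U_e \in [0,1]$, and for each vertex $v \in V$ draw an independent uniform $V_v \in [0,1]$; these variables are shared across all starting configurations. Given $\sigma \in \Omega$, one step of the chain is realized by keeping each $e \in E(\sigma)$ iff $U_e \leq 1-e^{-2\beta}$ (to form $F^\sigma \subseteq E(\sigma)$), and then for each vertex $v$ that is isolated in $(V,F^\sigma)$ setting $\sigma'(v) = +$ if $V_v \leq 1/2$ and $\sigma'(v) = -$ otherwise, while leaving every non-isolated vertex with $\sigma'(v) = \sigma(v)$. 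The task reduces to verifying that $\sigma \geq \tau$ implies $\sigma' \geq \tau'$ pointwise.

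The key structural observation is the following one-sided preservation of isolatedness: if $\sigma(v) = \tau(v) = +$ and $v$ is isolated in $(V, F^\sigma)$, then $v$ is also isolated in $(V, F^\tau)$. Indeed, $\sigma \geq \tau$ forces every neighbor $u$ with $\tau(u) = +$ to satisfy $\sigma(u) = +$, so the agreement edges at $v$ in $\tau$ form a subset of those at $v$ in $\sigma$; under the shared edge-coins, deletion of every agreement edge at $v$ in $\sigma$ forces the deletion of every agreement edge at $v$ in $\tau$. Symmetrically, if $\sigma(v) = \tau(v) = -$ and $v$ is isolated in $(V, F^\tau)$, then $v$ is isolated in $(V, F^\sigma)$.

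With this observation in hand, the monotonicity of the output is verified by a short case analysis on the values of $\sigma(v)$ and $\tau(v)$. At vertices $v$ in the disagreement set $D = \{w : \sigma(w) = +,\ \tau(w) = -\}$, the inequality $\sigma'(v) \geq \tau'(v)$ is automatic: whenever one side is randomized, the other is either pinned at an extreme value consistent with the order or is randomized with the same coin $V_v$. At vertices with $\sigma(v) = \tau(v) = +$, the only potentially bad scenario is $\sigma'(v) = -$ while $\tau'(v) = +$, which would require $v$ isolated in $(V, F^\sigma)$ but not in $(V, F^\tau)$, contradicting the structural observation; in every remaining subcase the outputs are either both $+$ or both drawn from the common $V_v$. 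The argument for $\sigma(v) = \tau(v) = -$ is symmetric.

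I anticipate that the main conceptual obstacle is recognizing that the set of isolated vertices is \emph{not} a monotone functional of the spin configuration in a naive sense, yet the crucial one-sided implication above does hold under the shared edge-coin coupling, with its direction flipping according to whether the shared spin at $v$ is $+$ or $-$. This subtle asymmetry is precisely what distinguishes the Isolated-vertex dynamics from the full Swendsen-Wang dynamics---where resampling an entire connected component forces $v$'s new spin to depend on components that need not be ordered across $\sigma$ and $\tau$, destroying monotonicity---and once it is isolated the rest of the verification is routine bookkeeping.
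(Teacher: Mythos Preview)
Your proposal is correct and follows essentially the same approach as the paper: you construct the identical grand coupling (shared edge coins $U_e$ and vertex coins $V_v$) and verify monotonicity by the same case analysis, hinging on the same one-sided isolatedness observation that when $\sigma(v)=\tau(v)=+$ and $\sigma\ge\tau$, isolation of $v$ in $F^\sigma$ forces isolation in $F^\tau$ (and symmetrically for $-$). The only difference is organizational---you state the structural observation up front and then dispatch the cases, whereas the paper embeds it inside the contradiction argument for the mixed-isolation case---but the content is the same.
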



\noindent
The proof of Lemma \ref{lemma:isw:monotonicty} is given in Section \ref{subsec:IV-mono}.
The second key property of the Isolated-vertex dynamics concerns whether moves (or partial moves) of the dynamics could be \textit{censored} from the evolution of the chain without
possibly
speeding up its convergence. Censoring of Markov chains
is a well-studied notion \cite{PW,FillK,Holroyd}
that has found important applications \cite{MS,DLP,DP}.

We say that 
a stochastic $|\Omega| \times |\Omega|$ matrix $Q$ acts on a set $A \subseteq V$ if for all $\sigma,\sigma' \in \Omega$:
$$
Q(\sigma,\sigma') \neq 0 \,\,\text{iff}\,\, \sigma(V\setminus A)  = \sigma'(V\setminus A).
$$
Also recall that $P\leq P_A$ if $\langle Pf,g\rangle_\mu \leq \langle P_A f,g\rangle_\mu$ for any pair of increasing positive functions $f,g\in \mathbb{R}^{|\Omega|}$. 
\begin{defn}
	\label{dfn:isw:censoring}
	Let $G$ be an arbitrary graph and let $\beta > 0$.
	Consider an ergodic and monotone Markov chain for the Ising model on $G$, reversible w.r.t.\ $\mu=\mu_{G,\beta}$ with transition matrix $P$.
	Let $\{P_A\}_{A \subseteq V}$ be a collection of monotone stochastic matrices reversible w.r.t.\ $\mu$
	with the property that $P_A$ acts on $A$ for every $A \subseteq V$. 
	We say that  $\{P_A\}_{A \subseteq V}$  is a \emph{censoring} for $P$ if $P \le P_A$ for all $A \subseteq V$.
\end{defn}



\noindent
As an example, consider the \textit{heat-bath Glauber dynamics} for the Ising model on the graph $G=(V,E)$.
Recall that in this Markov chain a vertex $v \in V$ is chosen uniformly at random (u.a.r.)\ and a new spin is sampled for $v$ from the conditional distribution at $v$ given the configuration on $V\setminus v$.
For every $A \subseteq V$, we may take $P_A$ to be the $|\Omega| \times |\Omega|$ transition matrix of the censored heat-bath Glauber dynamics that ignores all moves outside of $A$. That is, 
if the randomly chosen vertex $v \in V$ is not in $A$, then the move is ignored; otherwise the chain proceeds as the standard heat-bath Glauber dynamics.

It is easy to check that $P_A$ is monotone and reversible w.r.t.\ $\mu$.
Moreover,
it was established in \cite{PW,FillK} that $P \le P_A$ for every $A \subseteq V$, and thus the collection $\{P_A\}_{A \subseteq V}$ is a censoring for the heat-bath Glauber dynamics. 
This particular censoring 
has been used to analyze the speed of convergence 
of  the Glauber dynamics in various settings (see \cite{MS,PW,DLP,DP}), since
it can be proved that
censored variants of the Glauber dynamics---where moves of $P$ are replaced by moves of $P_A$---converge more slowly to the stationary distribution \cite{PW,FillK}. 
Consequently, it suffices to analyze the speed of convergence of the censored chain, and this could be much simpler for suitably chosen censoring schemes.

Using the machinery from \cite{PW,FillK}, we can show that given a censoring (as defined in Definition \ref{dfn:isw:censoring}),
the strategy just mentioned for Glauber dynamics can be used for general monotone Markov chains.

\begin{thm}
	\label{thm:censoring}
	Let $G$ be an arbitrary graph and let $\beta > 0$.
	Let $\{X_t\}$ be an ergodic monotone Markov chain for the Ising model on $G$,
	reversible w.r.t.\ $\mu=\mu_{G,\beta}$ with transition matrix $P$.
	Let $\{P_A\}_{A \subseteq V}$ be a censoring for $P$
	and let $\{\hat{X}_t\}$ be a censored version of $\{X_t\}$
	that sequentially applies $P_{A_1},P_{A_2},P_{A_3}\ldots$ where $A_i \subseteq V$.
	If $X_0, Y_0$ are both sampled from a distribution $\nu$ over $\Omega$ 
	such that $\nu/\mu$ is increasing, then the following hold: 
	\begin{enumerate}
		\item $X_t \preceq \hat{X}_t$ for all $t \ge 0$;
		\item Let $\hat{P}^t = P_{A_1}\dots P_{A_t}$. Then, for all $t \ge 0$
		$$
		\|P^t(X_0,\cdot) - \mu(\cdot)\|_\textsc{tv} \le \|\hat{P}^t(X_0,\cdot) - \mu(\cdot)\|_\textsc{tv}.
		$$
	\end{enumerate}
	If $\nu/\mu$ is decreasing, then $X_t \succeq \hat{X}_t$ for all $t \ge 0$.
\end{thm}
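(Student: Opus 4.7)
\textbf{Proof plan for Theorem \ref{thm:censoring}.} The plan is to follow the classical Peres--Winkler \cite{PW} argument as recast by Fill--Kahn \cite{FillK}, but to replace the special structure of the Glauber dynamics by the two abstract hypotheses we have: monotonicity and the definition $P\leq P_A$. The main object to track is the density $h_t = d(\nu P^t)/d\mu$ of the chain's law with respect to $\mu$; by reversibility, if $\nu$ has density $h$ against $\mu$ and $Q$ is $\mu$-reversible, then $\nu Q$ has density $Qh$ against $\mu$. Combined with the fact that a monotone nonnegative kernel sends increasing (resp.\ positive) functions to increasing (resp.\ positive) functions, this implies that along both the uncensored trajectory and the censored one, the density against $\mu$ stays increasing and positive. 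This preservation is what makes the hypothesis $P\leq P_A$ (which is only asserted on increasing positive pairs) invokable at every step.

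For part~1, I would express stochastic domination through test functions: $\nu_1\preceq\nu_2$ iff $\langle h_1,f\rangle_\mu \le \langle h_2,f\rangle_\mu$ for all increasing $f$, where $h_i = d\nu_i/d\mu$. Writing $h = d\nu/d\mu$, the task reduces to proving, for every increasing positive $f$ and every finite sequence $A_1,\dots,A_t$,
\[
\langle h,\,P^t f\rangle_\mu \;\le\; \langle h,\,P_{A_1}\cdots P_{A_t}f\rangle_\mu,
\]
which I would show by induction on $t$. For the step from $t$ to $t+1$, note first that $P^t f$ is increasing and positive by the preservation observation above; applying $P \le P_{A_1}$ to the increasing positive pair $(h, P^t f)$ gives $\langle h, P(P^tf)\rangle_\mu \le \langle h, P_{A_1}(P^tf)\rangle_\mu$. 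Reversibility of $P_{A_1}$ rewrites this as $\langle P_{A_1}h,\,P^t f\rangle_\mu$; now $P_{A_1}h$ is again increasing and positive, so the inductive hypothesis with new starting density $P_{A_1}h$ and sequence $(A_2,\dots,A_{t+1})$ yields $\langle P_{A_1}h,P^t f\rangle_\mu \le \langle P_{A_1}h,\,P_{A_2}\cdots P_{A_{t+1}}f\rangle_\mu$, and a final application of reversibility of $P_{A_1}$ reassembles the right-hand side as desired. A constant shift reduces an arbitrary increasing $f$ to a positive one because $P\mathbf 1 = P_A\mathbf 1 = \mathbf 1$, so the constant cancels on both sides.

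For part~2, I would use that the density $h_t = P^t h$ of $\nu P^t$ against $\mu$ is increasing, so the $\mu$-maximizing set for total variation, $A^\star = \{\sigma : h_t(\sigma) > 1\}$, is an \emph{increasing} event, and
\[
\|\nu P^t - \mu\|_{\mathrm{TV}} \;=\; \nu P^t(A^\star) - \mu(A^\star).
\]
Since $A^\star$ is increasing and part~1 gives $\nu P^t \preceq \nu \hat P^t$, we get $\nu P^t(A^\star) \le \nu\hat P^t(A^\star)$, and therefore
\[
\|\nu P^t - \mu\|_{\mathrm{TV}} \;\le\; \nu\hat P^t(A^\star) - \mu(A^\star) \;\le\; \|\nu\hat P^t - \mu\|_{\mathrm{TV}},
\]
where the last bound is by variational definition of total variation. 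The decreasing case follows by the symmetric argument: densities stay decreasing along both evolutions, the relevant extremal set becomes a decreasing event, and all inequalities reverse.

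The main technical subtlety I expect is handling the restriction that $P\le P_A$ is only postulated on increasing \emph{positive} functions: I must be careful that $P^t f$ and $P_{A_1}h$ remain strictly positive (or reduce by approximation/constant shift) and that the reversibility identity $\langle h, Qf\rangle_\mu = \langle Qh, f\rangle_\mu$ is applied only to the $P_{A_i}$, which is permitted by hypothesis. Everything else -- monotonicity preservation of increasingness, the FKG-style characterization of stochastic domination, and the identification of the extremal TV set under a monotone density -- is routine once these hypotheses are secured, and it is precisely this abstraction that frees the Peres--Winkler machinery from its original attachment to single-site Glauber updates.
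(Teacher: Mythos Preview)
Your proposal is correct and follows essentially the same route as the paper: both arguments establish $\langle h, P^t g\rangle_\mu \le \langle h, \hat P^t g\rangle_\mu$ for increasing positive $h,g$ by peeling off one step at a time using $P\le P_{A_i}$, the self-adjointness of $P_{A_i}$, and the preservation of increasingness under monotone kernels. The only cosmetic differences are that the paper first proves the operator inequality $P^t\le \hat P^t$ and then specializes to $h=\nu/\mu$, and that for part~2 the paper simply cites Lemma~2.4 of \cite{PW} whereas you spell out that lemma's content (the TV-optimal event $\{h_t>1\}$ is an up-set when $h_t$ is increasing).
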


\noindent
The proof of this theorem is provided in Section \ref{app:censoring}. 

We define next a specific censoring for  
the Isolated-vertex dynamics.
For $A \subseteq V$, let ${\PIso}_{A}$ be the transition matrix for the Markov chain that
given an Ising model configuration $\sigma_t$ generates $\sigma_{t+1}$ as follows:	
\begin{enumerate}
	\item Consider the set of agreeing edges $E(\sS_t) = \{(v,w)\in E: \sigma_t(v) = \sigma_t(w)\}$;
	\item Independently for each edge $e\in E(\sS_t)$, delete $e$ with probability 
	$\exp(-2\beta)$ and keep $e$ with probability $1-\exp(-2\beta)$; this yields $F_t\subseteq E(\sS_t)$;
	\item
	For each isolated vertex $v$ of the subgraph $(V,F_t)$ in the subset $A$, choose a spin uniformly at random
	from $\{+,-\}$ and assign it to $v$ to obtain $\sigma_{t+1}$; all other vertices keep the same spin as in $\sS_t$.
\end{enumerate}	

\begin{lemma}
	\label{lemma:isw:censoring}
	The collection of matrices $\{\PIso_{A}\}_{A \subseteq V}$ is a censoring for the Isolated-vertex dynamics.
\end{lemma}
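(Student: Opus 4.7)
The plan is to verify the four requirements of Definition~\ref{dfn:isw:censoring} for the collection $\{\PIso_A\}_{A\subseteq V}$: each $\PIso_A$ acts on $A$, is monotone, is reversible with respect to $\mu$, and satisfies $\PIso \le \PIso_A$. That $\PIso_A$ acts on $A$ is immediate from its construction, since only spins of isolated vertices lying in $A$ are touched. Monotonicity is obtained by the same grand coupling used in Lemma~\ref{lemma:isw:monotonicty}: one uniform threshold per edge drives the percolation in both copies, and one uniform spin $\xi_v$ per vertex drives the spin update; the only modification for $\PIso_A$ is that $\xi_v$ is ignored at $v\notin A$, so the $\ge$ relation is preserved trivially at those vertices while the argument for $\PIso$ handles $v\in A$. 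Reversibility follows by lifting to the Edwards--Sokal joint measure $\nu$ on $(\sigma,F)$. Under $\nu$, conditional on $F$ the spins at the isolated vertices of $F$ are i.i.d.\ uniform on $\{+,-\}$, so resampling only the subset of these lying in $A$ preserves $\nu$; detailed balance for $\PIso_A$ then falls out of symmetrizing the joint transition probability over the edge percolations $F\subseteq E(\sigma)\cap E(\sigma')$.

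The main step is the Dirichlet-form inequality $\PIso\le\PIso_A$. I would work on the joint space with stationary measure $\nu$ and decompose the joint step of each chain. Let $\tilde P_F$ denote the ``resample $F$ given $\sigma$'' move, and for $B\subseteq V$ let $\tilde R_B$ denote the ``resample uniformly the spins at vertices in $B$ that are isolated in $F$'' move; both are $\nu$-reversible by the observation above. Then the joint-space step of $\PIso$ equals $\tilde P_F\tilde R_V=\tilde P_F\tilde R_A\tilde R_{V\setminus A}$, while that of $\PIso_A$ equals $\tilde P_F\tilde R_A$. Lifting $f,g$ on $\Omega$ to $\tilde f(\sigma,F)=f(\sigma)$, $\tilde g(\sigma,F)=g(\sigma)$ and using self-adjointness of $\tilde P_F$ and $\tilde R_A$ with respect to $\nu$, the difference rearranges to
\[
\langle f,\PIso_A g\rangle_\mu-\langle f,\PIso g\rangle_\mu \;=\; \mathcal{E}_{\tilde R_{V\setminus A}}\bigl(\tilde R_A\tilde P_F\tilde f,\,\tilde g\bigr).
\]
Now $\tilde R_{V\setminus A}$ factors as a product of commuting single-site heat-baths $H_v$ for $v\in V\setminus A$, each a $\nu$-reversible orthogonal projection (conditional on $F$, the law of $\sigma(v)$ under $\nu$ is uniform whenever $v$ is isolated). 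Telescoping $I-\prod_v H_v=\sum_k H_{v_1}\cdots H_{v_{k-1}}(I-H_{v_k})$ and using self-adjointness reduces the task to showing
\[
\mathcal{E}_{H_{v_k}}\bigl(H_{v_{k-1}}\cdots H_{v_1}h,\,\tilde g\bigr)\;\ge\;0, \qquad h:=\tilde R_A\tilde P_F\tilde f,
\]
for every $k$.

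Each single-site Dirichlet form is an average, over pairs $(\sigma,\sigma')$ differing only at $v_k$ and weighted by $\nu(\sigma,F)+\nu(\sigma',F)$ (these two are equal because $v_k$ is isolated in $F$), of the product of coordinate increments of the two arguments. Since $\tilde g$ is increasing in each $\sigma(v_k)$, and $h=\tilde R_A\tilde P_F\tilde f$ is coordinatewise non-decreasing in $\sigma(v_k)$ for $v_k\notin A$ (the spin at $v_k$ is never resampled in the $\tilde R_A$ block, and $f$ is increasing), what remains is that the intermediate iterates $H_{v_{k-1}}\cdots H_{v_1}h$ retain monotonicity in $\sigma(v_k)$; this is where the bulk of the work lies and where I expect the main obstacle to be, but each $H_{v_i}$ ($i\ne k$) is simply the conditional average over $\sigma(v_i)$ alone on the event $v_i\in I(F)$, so it commutes with flips at $v_k$ and preserves coordinate monotonicity by a direct check. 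Putting the pieces together, each term in the telescoped sum is non-negative, so $\mathcal{E}_{\tilde R_{V\setminus A}}\ge 0$ on the relevant functions, and hence $\PIso\le\PIso_A$, completing the verification that $\{\PIso_A\}_{A\subseteq V}$ is a censoring for $\PIso$.
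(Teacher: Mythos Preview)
Your proposal is correct and ends up proving the same thing, but the route for the key inequality $\PIso\le\PIso_A$ differs from the paper's. Both arguments work on the Edwards--Sokal joint space and write $\PIso_A=TQ_AT^*$, $\PIso=TQT^*$ (your $\tilde R_B$ is the paper's $Q_B$, and your $\tilde P_F\tilde f=\tilde f$ so it plays no role). The paper uses the identity $Q=Q_AQQ_A$ to rewrite $\langle f_1,\PIso f_2\rangle_\mu=\langle Q\hat f_1,Q\hat f_2\rangle_\nu$ and $\langle f_1,\PIso_A f_2\rangle_\mu=\langle\hat f_1,\hat f_2\rangle_\nu$ with $\hat f_i=Q_AT^*f_i$, then invokes the Harris/FKG inequality for the product measure $\rho_\omega=Q(\omega,\cdot)$ to conclude $\langle Q\hat f_1,Q\hat f_2\rangle_\nu\le\langle\hat f_1,\hat f_2\rangle_\nu$ (after checking, in Claim~\ref{claim:censoring:increasing}, that $\hat f_i$ is increasing). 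You instead use $Q=Q_AQ_{V\setminus A}$, reduce to $\mathcal E_{Q_{V\setminus A}}(Q_A\tilde f,\tilde g)\ge 0$, factor $Q_{V\setminus A}=\prod_{v\in V\setminus A}H_v$ into commuting single-site projections, telescope, and verify each single-site Dirichlet form is nonnegative via coordinate monotonicity. Your argument is a direct, elementary proof---essentially a bare-hands derivation of Harris for this product case---whereas the paper's is shorter and, more importantly, structurally reusable: the same template (with $K_A$ in place of $Q_A$) handles the Monotone SW censoring in Lemma~\ref{lemma:MSW-censoring}, where whole components rather than single vertices are resampled and a single-site telescoping would not be available.
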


\noindent
The proof of Lemma \ref{lemma:isw:censoring} is provided in Secion \ref{subsec:isw:censoring}.
To establish Theorem \ref{thm:iso:intro} we show that a strong form of spatial mixing, 
which is known to hold for all $\beta < \beta_c(d)$ \cite{MS}, implies
the desired mixing and relaxation times bounds for the Isolated-vertex dynamics. 
We define this notion of spatial mixing next.

For $v\in V$ and $R \in \N$, let $B(v,R)=\{u\in V: \dist(u,v)\leq R \}$ denote the ball of radius $R$ around $v$, where $\dist(\cdot,\cdot)$ denotes graph distance. Also, let $S(v,R)=B(v,R+1)\backslash B(v,R)$ be the external boundary of $B(v,R)$. 
For any $A\subseteq V$, let $\Omega_A=\{+,-\}^A$ be the set of all configurations on $A$; hence $\Omega=\Omega_G=\Omega_V$.
For $v\in V$, $u \in S(v,R)$ and $\tau \in \Omega_{S(v,R)}$, 
let $\tau_u^+$ (resp., $\tau_u^-$) be the configuration obtained from $\tau$ by changing the spin of $u$ to $+$ (resp., to $-$) and define 
\begin{equation}\label{eq:ssm:au}
a_u = \sup_{\tau\in\Omega_{S(v,R)}} \Big| \mu\left(v=+ \mid S(v,R)=\tau_u^+\right) - \mu\left(v=+ \mid S(v,R)=\tau_u^-\right) \Big|,
\end{equation}
where ``$v=+$'' represents the event that the spin of $v$ is $+$ and ``$S(v,R)=\tau_u^+$'' (resp., ``$S(v,R)=\tau_u^-$'') stands for the event that $S(v,R)$ has configuration $\tau_u^+$ (resp., $\tau_u^-$).

\begin{defn}
	\label{dfn:prelim:sm}
	We say that \emph{Aggregate Strong Spatial Mixing (ASSM)} holds for $R \in \N$, if for all $v\in V$ 
	\begin{equation*}
	\sum_{u\in S(v,R)} a_u \leq \frac{1}{4}.
	\end{equation*}
\end{defn}

\begin{lemma}[Lemma 3, \cite{MS}]\label{lemma:isw:SSM}
	For all graphs $G$ of maximum degree $d$ and all $\beta < \beta_c(d)$, there exists an integer $R=R(\beta,d) \in \N$ such that ASSM holds for $R$.	
\end{lemma}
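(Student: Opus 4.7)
The plan is to reduce the problem to a tree recursion in the uniqueness region via Weitz's self-avoiding walk (SAW) tree, and then bound the influence of a single boundary spin by a sum over self-avoiding walks that is exponentially small in $R$.

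First, I would invoke the Weitz SAW tree construction: for any boundary condition $\tau$ on $S(v,R)$, the marginal $\mu(v=+ \mid S(v,R)=\tau)$ equals the marginal at the root of the self-avoiding walk tree $T=T_{\mathrm{SAW}}(G,v)$ truncated at depth $R+1$, with leaf boundary conditions inherited from $\tau$ (and fixed phases at the copies where a walk first re-intersects itself). The flipped configurations $\tau_u^+$ and $\tau_u^-$ correspond, in $T$, to the same leaf labeling on all copies of vertices in $S(v,R)\setminus\{u\}$, while the copies of $u$ in the SAW tree are set to $+$ or $-$ respectively. This reduces the task of bounding $a_u$ to a tree computation at the root of $T$, where I can exploit monotonicity of the Ising model.

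Second, I would exploit the standard tree recursion on $T$. Write $h_v = \tfrac{1}{2}\log\frac{p_v^+}{1-p_v^+}$ (the effective field at the root), which satisfies the Ising recursion $h_v = \sum_{w \sim v} \mathrm{atanh}(\tanh\beta\cdot \tanh h_w)$. In the uniqueness region $(d-1)\tanh\beta < 1$, this recursion is a contraction with rate $\tanh\beta$ per edge in the sense that $\bigl|\tfrac{\partial h_v}{\partial h_w}\bigr| \le \tanh\beta$ uniformly in the boundary. Iterating along edges of $T$, the change in the root marginal caused by altering the boundary at a single copy of $u$ at depth $R+1$ is at most $(\tanh\beta)^{R+1}$. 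Summing this bound over all copies of $u$ in $T$ (equivalently, over all self-avoiding walks from $v$ to $u$ in $G$ of length $R+1$) yields
\begin{equation*}
a_u \;\le\; \sum_{\gamma:\,v\leadsto u,\ |\gamma|=R+1} (\tanh\beta)^{R+1},
\end{equation*}
where the sum is over self-avoiding walks $\gamma$ from $v$ to $u$ in $G$.

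Third, I would sum over $u \in S(v,R)$ and bound the total number of length-$(R+1)$ self-avoiding walks starting at $v$ by $d(d-1)^R$, using the maximum degree hypothesis. This gives
\begin{equation*}
\sum_{u \in S(v,R)} a_u \;\le\; d(d-1)^R (\tanh\beta)^{R+1} \;=\; \frac{d\tanh\beta}{d-1}\bigl((d-1)\tanh\beta\bigr)^R.
\end{equation*}
By the definition of $\beta_c(d)$ in \eqref{eqn:beta_c}, the base $(d-1)\tanh\beta$ is strictly less than $1$, so the right-hand side decays exponentially in $R$. Choosing $R=R(\beta,d)$ sufficiently large makes it at most $1/4$, which is exactly ASSM.

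The main technical obstacle is step two: setting up the contractive tree recursion carefully enough so that the influence of flipping a single copy of $u$ (rather than all copies simultaneously) is controlled by the product of derivatives along the unique path from the root to that leaf. This requires either an explicit telescoping derivative argument along paths in $T$, or a monotone-coupling/spatial-derivative estimate on the tree, both of which are standard but must be executed with the correct constants so that the rate of contraction per edge matches $\tanh\beta$ exactly.
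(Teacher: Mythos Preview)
The paper does not give its own proof of this lemma; it is quoted verbatim as Lemma~3 of Mossel and Sly~\cite{MS} and used as a black box. So there is nothing in the paper to compare against. Your sketch is essentially the argument in~\cite{MS} (and in~\cite{LLY}): reduce to the self-avoiding walk tree and use that the Ising tree recursion contracts at rate $\tanh\beta$ per edge, so that the total boundary influence is controlled by $\sum_\ell N_\ell(\tanh\beta)^\ell$ with $N_\ell\le d(d-1)^{\ell-1}$, which is geometrically small in $R$ precisely when $(d-1)\tanh\beta<1$.

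One genuine inaccuracy: you write that the SAW tree is ``truncated at depth $R+1$'' and that the relevant self-avoiding walks have length exactly $R+1$. That is only true when $G$ is itself a tree. In general the SAW tree is truncated the first time the walk hits $S(v,R)$, and a self-avoiding walk from $v$ to $u\in S(v,R)$ that stays inside $B(v,R)$ until its last step can have any length $\ell\ge R+1$. Consequently the copies of $u$ in the SAW tree sit at various depths $\ell\ge R+1$, and your bound on $a_u$ should be a sum over all such walks, not just those of length $R+1$. The fix is immediate: summing over $\ell\ge R+1$ gives
\[
\sum_{u\in S(v,R)} a_u \;\le\; \sum_{\ell\ge R+1} d(d-1)^{\ell-1}(\tanh\beta)^\ell \;=\; \frac{d\tanh\beta}{d-1}\cdot\frac{\bigl((d-1)\tanh\beta\bigr)^{R+1}}{1-(d-1)\tanh\beta},
\]
which still decays exponentially in $R$ and can be made at most $1/4$ by choosing $R=R(\beta,d)$ large. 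With this correction your argument is sound and matches the original proof in~\cite{MS}.
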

\noindent
Theorem \ref{thm:iso:intro} is then a direct corollary of the following more general theorem.
The proof of this general theorem, which is provided in Section \ref{app:main-proof}, 
follows closely the approach in~\cite{MS} for the case of the Glauber dynamics, but
key additional considerations are required to establish such result 
for general (non-local) monotone Markov chains.
The main new innovation in our proof is the use of the more general Theorem \ref{thm:censoring}, instead of the standard censoring result in \cite{PW}.

\begin{thm}
	\label{thm:mixingtime:new}
	Let $\beta > 0$ and $G$ be an arbitrary $n$-vertex graph of maximum degree $d$ where $d$ is a constant independent of $n$.
	Consider an ergodic monotone Markov chain for the Ising model on $G$, reversible w.r.t.\ $\mu=\mu_{G,\beta}$ with transition matrix $P$. 
	Suppose $\{P_A\}_{A \subseteq V}$ is a censoring for $P$.
	If ASSM holds for a constant $R > 0$, and for any $v \in V$ and any starting configuration $\sS\in\Omega$
	\begin{equation}
	\label{eq:general-thm:main}
	\taumix(P_{B(v,R)}) \le T,
	\end{equation}
	then $\Tmix(P)=O(T \log n)$ and $\Trel(P) = O(T)$.
\end{thm}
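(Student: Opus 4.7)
The plan is to adapt the inductive coupling argument of Mossel and Sly~\cite{MS} for the Glauber dynamics to the non-local monotone chain $P$, with our generalized censoring result (Theorem~\ref{thm:censoring}) substituting for the Peres--Winkler inequality wherever that tool is invoked.

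First I would reduce the mixing time bound to a coupling estimate between the two extremal chains. Let $\{X_t^+\}$ and $\{X_t^-\}$ be copies of $P$ started from the all-$+$ and all-$-$ configurations $\sigma^+$ and $\sigma^-$, coupled via a monotone grand coupling (which exists since $P$ is monotone) so that $X_t^+\ge X_t^-$ for every $t\ge 0$. Writing $p_t(v) := \Pr[X_t^+(v)\ne X_t^-(v)]$, the standard consequence of monotonicity is
\[
\max_{X_0\in\Omega}\|P^t(X_0,\cdot)-\mu\|_{\textsc{tv}} \;\le\; \Pr[X_t^+\ne X_t^-] \;\le\; \sum_{v\in V} p_t(v),
\]
so it suffices to show $\max_v p_t(v) \le 1/(4n)$ at some $t=O(T\log n)$.

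The core step is a censored contraction at the single-vertex level. Fix $v\in V$, set $T'=c T$ for a sufficiently large constant $c$, and consider the censored process $\hat{X}^{\pm}$ that coincides with $X^{\pm}$ up to time $t-T'$ and then applies $P_{B(v,R)}$ in each of the last $T'$ steps. Since $\delta_{\sigma^+}/\mu$ is an increasing function and $\delta_{\sigma^-}/\mu$ is decreasing (each is supported on one extremal configuration), Theorem~\ref{thm:censoring} gives $X_t^+ \preceq \hat{X}_t^+$ and $X_t^-\succeq \hat{X}_t^-$, so $p_t(v) \le \hat{p}_t(v)$. Because $P_{B(v,R)}$ acts on $B(v,R)$ and is reversible with respect to $\mu$, the boundary $S(v,R)$ is frozen during the last $T'$ steps while the interior mixes to the corresponding conditional Gibbs measure; the hypothesis $\Tmix(P_{B(v,R)})\le T$ together with the choice of $c$ controls the resulting TV error $\varepsilon$. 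Conditioning on $\hat X_{t-T'}^{\pm}$ and then using ASSM to bound the effect of boundary disagreements produces the recursion
\[
p_t(v) \;\le\; 2\varepsilon + \sum_{u\in S(v,R)} a_u\, p_{t-T'}(u) \;\le\; 2\varepsilon + \tfrac{1}{4}\max_{u} p_{t-T'}(u).
\]
Iterating this contraction $O(\log n)$ times, with the constants chosen so that the additive $\varepsilon$-term cannot overwhelm the multiplicative $1/4$-contraction, drives $\max_v p_t(v)$ below $1/(4n)$ and yields $\Tmix(P)=O(T\log n)$. For the bound $\Trel(P)=O(T)$ I would run the analogous variance-contraction argument: on increasing test functions $f\colon\Omega\to\R$ (which remain increasing under $P$ by monotonicity), the same censoring-plus-ASSM scheme shows $\Var_\mu(P^{T'}f)\le (1-c')\Var_\mu f$ for a constant $c'>0$, and the extension to general $f$ via a monotone Efron--Stein-type decomposition gives a constant spectral gap.

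The principal obstacle I anticipate is verifying that the contraction step really closes when $P$ is non-local. The Mossel--Sly argument leans on the transparent local behavior of the Glauber dynamics, for which the censored chain $P_A$ is literally the Glauber dynamics restricted to $A$ and its stationary measure conditioned on the boundary is manifestly $\mu(\cdot\mid V\setminus A)$. For a general monotone $P$, the collection $\{P_A\}$ is only postulated to be a censoring in the sense of Definition~\ref{dfn:isw:censoring}: each $P_A$ is monotone, reversible w.r.t.~$\mu$, acts on $A$, and dominates $P$ in Dirichlet form. Care is therefore needed to identify the stationary distribution of the boundary-conditional restriction of $P_{B(v,R)}$ with the Gibbs conditional measure (which is where reversibility w.r.t.~$\mu$ combined with ``acting on $B(v,R)$'' is crucial), so that ASSM can legitimately be applied to the post-censoring marginal at $v$. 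Once this bridge is built, the rest of the Mossel--Sly machinery transplants cleanly via Theorem~\ref{thm:censoring}.
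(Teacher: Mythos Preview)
Your mixing-time argument follows the same route as the paper's, but the recursion you write,
\[
p_t(v) \;\le\; 2\varepsilon + \tfrac14 \max_u p_{t-T'}(u),
\]
does not deliver $O(T\log n)$ as stated. Iterating it gives $\max_v p_{kT'}(v)\le \tfrac{8}{3}\varepsilon + 4^{-k}$, and driving this below $1/(4n)$ forces $\varepsilon = O(1/n)$, hence $T' = \Theta(T\log n)$, for a total of $O(T\log^2 n)$. The paper removes the additive term by conditioning on the configurations $X_s^\pm(A)$ on the enlarged ball $A=B(v,R+1)$: if $X_s^+(A)=X_s^-(A)$ then the two censored chains agree on $A$ for all $t\ge s$ (since $P_{B(v,R)}$ acts only inside $A$ and is coupled monotonically), so the TV-error term is multiplied by $\Pr[X_s^+(A)\ne X_s^-(A)]\le |A|\max_u p_s(u)$. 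Choosing the local mixing error to be $1/(8|A|)$ --- a constant, since $|A|\le d^{R+1}$ --- then yields the purely multiplicative recursion $\max_v p_t(v)\le \tfrac12 \max_v p_s(v)$ for $t = s + O(T)$, and hence $\Tmix(P)=O(T\log n)$.

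For the relaxation time you propose a separate variance-contraction argument on increasing test functions, extended to arbitrary $f$ via a ``monotone Efron--Stein-type decomposition''. This is both unnecessary and fragile: contraction of $\Var_\mu(P^{T'}f)$ for increasing $f$ does not in general control the full spectral gap, and the decomposition you invoke is not a standard tool. The paper takes a much shorter path: the argument above actually bounds $\Tmix(P,\varepsilon)$ by $O\bigl(T\log(n/\varepsilon)\bigr)$ for every $\varepsilon>0$, and plugging $\varepsilon=1/n$ into the elementary inequality $(\Trel(P)-1)\log(2\varepsilon)^{-1}\le \Tmix(P,\varepsilon)$ immediately gives $\Trel(P)=O(T)$.
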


\noindent
We note that $\taumix(P_{B(v,R)})$ denotes the mixing time from the worst possible
starting configuration, both in $B(v,R)$ and in $V \setminus B(v,R)$. (Since $P_{B(v,R)}$ only acts in $B(v,R)$, the configuration in $V \setminus B(v,R)$ remains fixed throughout the evolution of the chain and determines its stationary distribution.)

We now use Theorem \ref{thm:mixingtime:new} to establish Theorem \ref{thm:iso:intro}.
In Sections~\ref{app:block} and \ref{section:msw}, Theorem \ref{thm:mixingtime:new} is also used to establish 
Theorems \ref{thm:blocks:intro} and \ref{thm:msw:intro} from the introduction, concerning the mixing time of the block dynamics and a monotone variant of the SW dynamics.

\begin{proof}[Proof of Theorem \ref{thm:iso:intro}]
	By Lemma \ref{lemma:isw:monotonicty} the Isolated-vertex dynamics is monotone, and 
	by
	Lemma \ref{lemma:isw:censoring}
	the collection $\{\PIso_A\}_{A \subseteq V}$ is a censoring for $\PIso$.
	Moreover, 
	Lemma \ref{lemma:isw:SSM} implies that there exists a constant $R$ such that ASSM. Thus, to apply Theorem \ref{thm:mixingtime:new} all that is needed is a bound for $\taumix(\PIso_{B(v,R)})$ for all $v \in V$.
	For this, we can use a crude coupling argument. 
	Since $|B(v,R)| \le d^R$, the probability that every vertex 
	in $B(v,R)$
	becomes isolated is at least 
	$$
	{\e}^{-2\beta d |B(v,R)| } \ge {\e}^{-2\beta d^{R+1}}.
	$$
	Starting from two arbitrary configurations in $B(v,R)$, if all vertices become isolated in both configurations, then we can couple them with probability $1$. 
	Hence, we can couple two arbitrary configurations in one step with probability 
	at least $\exp(-2\beta d^{R+1})$.
	Thus, $\taumix(\PIso_{B(v,R)}) = \exp(O(\beta d^{R+1})) = O(1)$,
	and the result then follows from Theorem \ref{thm:mixingtime:new}.
\end{proof}

\begin{proof}[Proof of Theorem \ref{thm:sw:intro}]
	Follows from Theorem \ref{thm:iso:intro} and the fact that $\taurel(\PSW) \le \taurel(\PIso)$, which was established in Lemma 4.1 from \cite{BCSV}.
\end{proof}

\subsection{Monotonicity of the Isolated-vertex dynamics}
\label{subsec:IV-mono}

In this section, we show that the Isolated-vertex dynamics is monotone by constructing a monotone grand coupling; see Section \ref{sec:background} for the definition of a grand coupling.
In particular, we prove Lemma \ref{lemma:isw:monotonicty}. 

\begin{proof}[Proof of Lemma \ref{lemma:isw:monotonicty}]
	Let $\{X_t^\sS\}_{t\geq 0}$ be an instance of the Isolated-vertex dynamics starting from $\sS\in\Omega$; i.e., $X_0^\sS=\sS$. 
	We construct a grand coupling for the Isolated-vertex dynamics as follows. At time $t$:
	\begin{enumerate}
		\item For every edge $e \in E$, pick a number $r_{t}(e)$ uniformly at random from $[0,1]$;
		\item For every vertex $v\in V$, choose a uniform random spin $s_t(v)$ from $\{+,-\}$; 
		%
		\item For every $\sigma \in \Omega$: 
		\begin{enumerate}[(i)]
			\item Obtain $F_t^\sigma \subseteq E$ by including the edge $e = \{u,v\}$ in $F_t^\sigma$ iff $X_t^\sS(u) = X_t^\sS(v)$ and $r_t(e)\leq 1-\e^{-2\bB}$;
			\item For every $v\in V$, set $X_{t+1}^\sS(v) = s_t(v)$ if $v$ is an \textit{isolated vertex} in the subgraph $(V,F_t^\sigma)$; otherwise, set $X_{t+1}^\sS(v)=X_t^\sS(v)$.
		\end{enumerate}
	\end{enumerate}
	This is clearly a valid grand coupling for the Isolated-vertex dynamics. We show next that it is also monotone.
	
	
	
	Suppose $X_t^\sS \geq X_t^\tau$. We need to show that $X_{t+1}^\sS \geq X_{t+1}^\tau$ after one step of the grand coupling. 
	Let $v \in V$.
	If $v$ is not isolated in either $(V,F_t^\sS)$ or $(V,F_t^\tau)$, then the spin of $v$ remains unchanged in both $X_{t+1}^\sS$ and $X_{t+1}^\tau$, and $X_{t+1}^\sS(v)=X_t^\sS(v) \geq X_t^\tau(v)=X_{t+1}^\tau(v)$. 
	On the other hand, if $v$ is isolated in both $(V,F_t^\sS)$ and $(V,F_t^\sS)$, then the spin of $v$ is set to $s_t(v)$ in both instances of the chain; hence, $X_{t+1}^\sS(v)=s_t(v)=X_{t+1}^\tau(v)$.
	
	Suppose next that $v$ is isolated in $(V,F_t^\sS)$ but not in $(V,F_t^\tau)$.
	Then, $X_{t+1}^\sS(v)=s_t(v)$ and $X_{t+1}^\tau(v) = X_t^\tau(v)$.
	The only possibility that would violate $X_{t+1}^\sS(v) \geq X_{t+1}^\tau(v)$ is that $X_{t+1}^\sS(v) = -, X_t^\sS(v)=+$ and $X_{t+1}^\tau(v) = X_t^\tau(v) = +$. 
	If this is the case, then $X_t^\sS(v)=X_t^\tau(v)=+$.
	Moreover, since $X_t^\sS \geq X_t^\tau$, 
	all neighbors of $v$ 
	assigned ``+'' in $X_t^\tau$
	are also ``+'' in $X_t^\sigma$;
	thus
	if $v$ is isolated in $(V,F_t^\sS)$ then $v$ is also isolated in $(V,F_t^\tau)$. This leads to a contradiction,
	and so $X_{t+1}^\sS(v) \geq X_{t+1}^\tau(v)$. The case in which $v$ is isolated in $(V,F_t^\tau)$ but not in $(V,F_t^\sS)$ follows from an analogous argument. 
\end{proof}

\noindent
We can use the same grand coupling
to show that $\PIso_{A}$ is also monotone for all $A \subseteq V$.
The only required modification in the construction is that if $v \in V\setminus A$, then the spin of $v$ is not updated in either copy.
This gives the following corollary.

\begin{coro}
	\label{cor:isw:monotonicty}
	$\PIso_A$ is monotone for all $A \subseteq V$.
\end{coro}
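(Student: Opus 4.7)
The plan is to reuse, essentially verbatim, the grand coupling constructed in the proof of Lemma \ref{lemma:isw:monotonicty}, with one small modification to match the definition of $\PIso_A$: in step 3(ii) of that construction, I would replace the rule ``set $X_{t+1}^\sS(v) = s_t(v)$ if $v$ is isolated in $(V, F_t^\sigma)$'' by ``set $X_{t+1}^\sS(v) = s_t(v)$ if $v \in A$ and $v$ is isolated in $(V, F_t^\sigma)$; otherwise set $X_{t+1}^\sS(v) = X_t^\sS(v)$.'' Since the shared randomness $\{r_t(e)\}_{e \in E}$ and $\{s_t(v)\}_{v \in V}$ is used identically across all starting configurations $\sigma \in \Omega$, this still defines a valid grand coupling, now for the chain with transition matrix $\PIso_A$.

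The verification of monotonicity then splits on whether $v \in A$. If $v \in V \setminus A$, both copies leave the spin of $v$ fixed, so $X_{t+1}^\sigma(v) = X_t^\sigma(v) \ge X_t^\tau(v) = X_{t+1}^\tau(v)$ whenever $X_t^\sigma \ge X_t^\tau$, and there is nothing to check. If $v \in A$, then the update rule at $v$ is identical to that of the uncensored Isolated-vertex dynamics, so the four-case analysis from the proof of Lemma \ref{lemma:isw:monotonicty} (both isolated, neither isolated, isolated in exactly one of $(V,F_t^\sigma)$ or $(V,F_t^\tau)$) applies unchanged, and in each case we conclude $X_{t+1}^\sigma(v) \ge X_{t+1}^\tau(v)$.

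Since no step in the original argument uses any global property of the update rule beyond the pointwise behavior at a single vertex, the restriction to $A$ does not introduce any new interaction between vertices and the argument goes through with no real obstacle. The only thing worth stating carefully is that the monotone comparison at vertices $v \in V \setminus A$ is trivially preserved because the dynamics acts as the identity there; this is the entire content of the modification.
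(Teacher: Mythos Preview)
Your proposal is correct and follows essentially the same approach as the paper: the paper simply remarks that the grand coupling from Lemma~\ref{lemma:isw:monotonicty} works with the single modification that vertices in $V\setminus A$ are never updated, and you have spelled out exactly this argument along with the (trivial) verification at vertices outside $A$.
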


\subsection{Censoring for the Isolated-vertex dynamics}
\label{subsec:isw:censoring}

In this section we show that the collection $\{\PIso_A\}_{A \subseteq V}$ is a censoring for
$\PIso$. Specifically, we prove Lemma \ref{lemma:isw:censoring}.

\begin{proof}[Proof of Lemma \ref{lemma:isw:censoring}]
	For all $A \subseteq V$, we need to establish that $\PIso_A$ is reversible w.r.t.\ $\mu = \mu_{G,\beta}$,
	monotone and that $\PIso \le \PIso_A$. Monotonicity follows from Corollary \ref{cor:isw:monotonicty}.
	To establish the other two facts we use an alternative representation of the matrices $\PIso$ and $\PIso_A$ that was already used in \cite{BCSV} and is inspired by the methods in \cite{Ullrich1}.
	
	Let $\JC = 2^E\times \IC$ be the \textit{joint} configuration space, where configurations consist of a spin assignment to the vertices together with a subset of the edges of $G$.
	The joint Edwards-Sokal measure $\nu$ on $\JC$ is given by
	\begin{equation}\label{eqn:joint-measure}
	\nu(F,\sigma) = \frac{1}{Z_{\textsc{j}}} p^{|F|}(1-p)^{|E \setminus F|} \1 (F \subseteq E(\sigma)),
	\end{equation}
	where 
	$p=1-{\e}^{-2\beta}$,
	$F \subseteq E$,
	$\sigma \in \IC$,
	$E(\sigma) = \{\{u,v\} \in E: \sigma(u) = \sigma(v)\}$,
	and $Z_{\textsc{j}}$ is the partition function \cite{ES}. 	
	
	Let $T$ be the $|\IC| \times |\JC|$ matrix given by:
	\begin{equation}\label{eqn:T}
		T(\sigma,(F,\tau)) = \1(\sigma=\tau)\1(F \subseteq E(\sigma)) p^{|F|} (1-p)^{|E(\sigma)\setminus F|},
	\end{equation}
	where $\sigma \in \IC$ and $(F,\tau) \in \JC$.
	The matrix $T$ corresponds to adding each edge $\{u,v\} \in E$ with $\sigma(u)=\sigma(v)$ independently with probability $p$, as in step~1 of the Isolated-vertex dynamics.
	Let $L_2(\nu)$ and $L_2(\mu)$ denote the Hilbert spaces $(\R^{|\JC|},\langle \cdot,\cdot \rangle_\nu)$ and $(\R^{|\Omega|},\langle \cdot,\cdot \rangle_\mu)$ respectively. 
	The matrix $T$ defines an operator from $L_2(\nu)$ to $L_2(\mu)$ via vector-matrix multiplication. 
	Specifically, for any $f\in\mathbb{R}^{|\JC|}$ and $\sS\in\Omega$
	\[
		Tf(\sS) = \sum_{(F,\tau)\in\JC} T(\sS,(F,\tau))f(F,\tau).
	\]
	It is easy to check that the adjoint operator $T^*:L_2(\mu) \rightarrow L_2(\nu)$ of $T$ is given by the $|\JC| \times |\IC|$ matrix 
	\begin{equation}\label{eqn:T*}
		T^*((F,\tau),\sigma) = \1(\tau = \sigma),
	\end{equation}
	with $(F,\tau) \in \JC$ and $\sigma \in \IC$.	
	Finally, for $A \subseteq V$, $F_1,F_2 \subseteq E$ and $\sigma,\tau \in \Omega$ let
	\begin{align}
	Q_{A}((F_1,\sigma),(F_2,\tau)) = \1(F_1 = F_2)\1(F_1 \subseteq E(\sigma) \cap E(\tau)) \1(\sigma(\mathcal{I}_{A}^c(F_1))=\tau(\mathcal{I}_{A}^c(F_1))) \cdot 2^{-|\mathcal{I}_{A}(F_1)|}\notag
	\end{align}
	where $\mathcal{I}_{A}(F_1)$ is the set of isolated vertices of $(V,F_1)$ in $A$ and $\mathcal{I}_{A}^c(F_1)= V\setminus\mathcal{I}_{A}(F_1)$, and similarly for $F_2$.
	For ease of notation we set $Q = Q_V$.
	It follows straightforwardly from the definition of these matrices
	that $\PIso = TQT^*$ and $\PIso_A = TQ_AT^*$ for all $A \subseteq V$. 
	It is also easy to verify that $Q = Q^2 = Q^*$, $Q_A = Q_A^2 = Q_A^*$ and that $Q = Q_AQQ_A$; see \cite{BCSV}.
	
	The reversibility of $\PIso_A$ w.r.t.\ $\mu$ follows from the fact that $\PIso_A^* = (TQ_AT^*)^* = TQ_AT^* = \PIso_A$. This implies that $\PIso_A$ is self-adjoint and thus reversible w.r.t.\ $\mu$ \cite{LP}.
	
	To establish that $\PIso \le \PIso_A$, it is sufficient to show that
	for every pair of increasing and positive functions $f_1,f_2: \R^{|\Omega|} \rightarrow \R$ on $\Omega$, we have
	\begin{equation}
	\label{eq:comparison:ineq}
	\langle f_1, \PIso f_2 \rangle_\mu \leq \langle f_1, \PIso_A f_2 \rangle_\mu.
	\end{equation}
	
	Now,
	\begin{align*}
	\langle f_1, \PIso_A f_2 \rangle_\mu 
	&= \langle f_1, TQ_AT^* f_2 \rangle_\mu
	= \langle f_1, TQ_A^2T^* f_2 \rangle_\mu
	= \langle Q_AT^* f_1, Q_AT^* f_2 \rangle_\nu = \langle \hat{f}_1,\hat{f}_2 \rangle_\nu,
	\end{align*}
	where $\hat{f}_1=Q_AT^*f_1$ and $\hat{f}_2=Q_AT^*f_2$. Similarly,
	\begin{align*}
	\langle f_1, \PIso f_2 \rangle_\mu 
	&= \langle f_1, TQ_AQ^2Q_AT^* f_2 \rangle_\mu
	= \langle QQ_AT^* f_1, QQ_AT^* f_2  \rangle_\nu = \langle Q\hat{f}_1,Q\hat{f}_2 \rangle_\nu.
	\end{align*}
	Thus, it is sufficient for us to show that $\langle Q\hat{f}_1,Q\hat{f}_2 \rangle_\nu \leq \langle \hat{f}_1,\hat{f}_2 \rangle_\nu$.

	Consider the partial order on $\JC$ where $(F,\sS) \ge (F',\sS')$ 
	iff $F=F'$ and $\sS \geq \sS'$. 
	\begin{claim}
		\label{claim:censoring:increasing}
		Suppose $f: \R^{|\IC|} \rightarrow \R$ is an increasing positive function. Then,
		$\hat{f}:\R^{|\JC|} \rightarrow \R$ where
		$\hat{f} = Q_AT^*f$ is also increasing and positive.
	\end{claim}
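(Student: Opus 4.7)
The plan is to derive an explicit formula for $\hat{f} = Q_A T^* f$ and then read off both properties directly from it. First, by (\ref{eqn:T*}), $(T^*f)(F,\tau) = \sum_\sigma \1(\tau=\sigma) f(\sigma) = f(\tau)$, independent of $F$. Plugging this into the definition of $Q_A$ (and noting that $Q_A((F,\sigma),(F',\tau))$ vanishes unless $F'=F$) gives
\[
\hat{f}(F,\sigma) \;=\; \1\!\left(F \subseteq E(\sigma)\right) \cdot 2^{-|\mathcal{I}_{A}(F)|} \sum_{\tau \,:\, F \subseteq E(\tau),\ \tau|_{\mathcal{I}_{A}^c(F)} = \sigma|_{\mathcal{I}_{A}^c(F)}} f(\tau).
\]

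The key combinatorial observation is that when $F \subseteq E(\sigma)$ and $\tau$ agrees with $\sigma$ on $\mathcal{I}_{A}^c(F)$, the condition $F \subseteq E(\tau)$ is automatic: every endpoint of every edge of $F$ is non-isolated in $(V,F)$, hence lies in $\mathcal{I}_{A}^c(F)$, so $\tau$ inherits $F \subseteq E(\tau)$ from $F \subseteq E(\sigma)$. For each $\xi \in \{+,-\}^{\mathcal{I}_{A}(F)}$, write $\sigma^\xi$ for the configuration that equals $\sigma$ on $\mathcal{I}_{A}^c(F)$ and equals $\xi$ on $\mathcal{I}_{A}(F)$. Then the formula simplifies to
\[
\hat{f}(F,\sigma) \;=\; 2^{-|\mathcal{I}_{A}(F)|} \sum_{\xi \in \{+,-\}^{\mathcal{I}_{A}(F)}} f(\sigma^\xi) \quad \text{if } F \subseteq E(\sigma), \qquad \hat{f}(F,\sigma) = 0 \text{ otherwise.}
\]
In words, $\hat{f}(F,\sigma)$ is the average of $f$ over uniform random spin assignments to the $A$-isolated vertices of $(V,F)$, with all other spins held fixed to $\sigma$.

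Positivity of $\hat{f}$ is then immediate, since $f \geq 0$ makes every term in the sum nonnegative. For monotonicity, let $(F,\sigma) \geq (F,\sigma')$ in the partial order on $\JC$, so $\sigma \geq \sigma'$ pointwise. In the main case $F \subseteq E(\sigma)$ and $F \subseteq E(\sigma')$, the explicit formula permits a term-by-term comparison: for each fixed $\xi$, the configurations $\sigma^\xi$ and $(\sigma')^\xi$ agree on $\mathcal{I}_{A}(F)$ and satisfy $\sigma^\xi \geq (\sigma')^\xi$ on $\mathcal{I}_{A}^c(F)$, hence they dominate coordinatewise; monotonicity of $f$ gives $f(\sigma^\xi) \geq f((\sigma')^\xi)$, and summing over $\xi$ yields $\hat{f}(F,\sigma) \geq \hat{f}(F,\sigma')$.

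The one non-routine case is $F \subseteq E(\sigma')$ but $F \not\subseteq E(\sigma)$, which can arise when $\sigma$ and $\sigma'$ differ at exactly one endpoint of some $F$-edge whose endpoints are both ``$-$'' in $\sigma'$; here $\hat{f}(F,\sigma) = 0$ while $\hat{f}(F,\sigma')$ may be strictly positive. This is the only potential obstacle, but it is vacuous for our purposes: the case occurs only outside the support of $\nu$, so it contributes nothing to the inner product $\langle \cdot,\cdot\rangle_\nu$ that appears in the subsequent inequality $\langle Q\hat{f}_1, Q\hat{f}_2 \rangle_\nu \leq \langle \hat{f}_1, \hat{f}_2 \rangle_\nu$, and the $\nu$-a.e.\ monotonicity established above is exactly what the argument needs. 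No further technical difficulty is expected; once the explicit formula for $\hat{f}$ is in hand, the proof is a direct bookkeeping computation.
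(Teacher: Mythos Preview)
Your approach is essentially identical to the paper's: compute $T^*f(F,\sigma)=f(\sigma)$, expand $Q_A$ to express $\hat f(F,\sigma)$ as a uniform average of $f$ over reassignments of the $A$-isolated vertices, and compare termwise using monotonicity of $f$. The paper's proof is the same computation, written with the notation $\sigma_\xi$, $\tau_\xi$.

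You have in fact been more careful than the paper on one point. The paper's proof tacitly works only with pairs $(F,\sigma)\ge(F,\tau)$ satisfying $F\subseteq E(\sigma)\cap E(\tau)$ and never discusses the boundary case you flag, where $F\subseteq E(\sigma')$ but $F\not\subseteq E(\sigma)$; in that case $\hat f(F,\sigma)=0$ while $\hat f(F,\sigma')$ may be positive, so the literal claim ``$\hat f$ is increasing on all of $\JC$'' fails. Your resolution is correct: the only place Claim~\ref{claim:censoring:increasing} is used is in the Harris inequality step $\Exp_{\rho_\omega}[\hat f_1]\,\Exp_{\rho_\omega}[\hat f_2]\le \Exp_{\rho_\omega}[\hat f_1\hat f_2]$, and for $\omega=(F,\sigma)$ in the support of $\nu$ the measure $\rho_\omega$ is supported entirely on configurations $(F,\sigma')$ with $F\subseteq E(\sigma')$, where your termwise comparison applies. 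One small sharpening of your phrasing: it is not that ``the case occurs only outside the support of $\nu$'' (indeed $(F,\sigma')$ lies in the support), but rather that the Harris step only requires monotonicity of $\hat f_1,\hat f_2$ restricted to the support of each $\rho_\omega$, and there the exceptional case cannot arise.
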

	
	\noindent
	Given $ \omega\in\JC$, let $\rho_\omega(\cdot) = Q(\omega,\cdot)$;
	i.e., $\rho_\omega$ is the distribution over $\JC$
	after applying $Q$ from $\omega$. We have
	\[
	Q\hat{f}_1(\omega) = \sum_{\omega' \in\JC} Q\big(\omega,\omega'\big)\hat{f}_1(\omega') = \Exp_{\rho_{\omega}}[\hat{f}_1].
	\]
	Similarly, we get $Q\hat{f}_2(\omega) = \Exp_{\rho_{\omega}}[\hat{f}_2]$. 
	
	For a distribution $\pi$ on a partially ordered set $S$, we say $\pi$ is positively correlated if for any increasing functions $f,g\in \mathbb{R}^{|S|}$ we have $\Exp_\pi[fg] \geq \Exp_\pi[f]\,\Exp_\pi[g]$. Since $\rho_{\omega}$ is a product distribution over the isolated vertices in $\omega$, $\rho_{\omega}$ is positively correlated for any $\omega \in \JC$ by Harris inequality (see, e.g., Lemma~22.14 in \cite{LP}). By Claim \ref{claim:censoring:increasing}, $\hat{f}_1$ and $\hat{f}_2$ are increasing. We then deduce that for any $\omega\in\JC$:
	\begin{align*}
	Q\hat{f}_1(\omega) \, Q\hat{f}_2(\omega) = \Exp_{\rho_{\omega}}[\hat{f}_1] \, \Exp_{\rho_{\omega}}[\hat{f}_2]
	\leq \Exp_{\rho_{\omega}}[\hat{f}_1\,\hat{f}_2].
	\end{align*}
	Putting all these facts together, we get
	\begin{align*}
	\langle Q\hat{f}_1,Q\hat{f}_2 \rangle_\nu &= \sum_{\omega \in \JC} Q\hat{f}_1(\omega) \, Q\hat{f}_2(\omega) \nu(\omega) 
	\leq \sum_{\omega \in \JC} \Exp_{\rho_{\omega}}[\hat{f}_1\,\hat{f}_2]\nu(\omega)\\
	&= \sum_{\omega,\omega' \in \JC} \hat{f}_1(\omega')\,\hat{f}_2(\omega') \rho_{\omega}(\omega')\nu(\omega)
	= \sum_{\omega,\omega' \in \JC} \hat{f}_1(\omega')\,\hat{f}_2(\omega') \rho_{\omega'}(\omega)\nu(\omega')\\
	&= \langle \hat{f}_1,\hat{f}_2 \rangle_\nu,
	\end{align*}
	where the second to last equality follows from the reversibility of $Q$ w.r.t. $\nu$; namely,
	\[
		\rho_{\omega}(\omega')\nu(\omega) = Q(\omega,\omega')\nu(\omega) = Q(\omega',\omega)\nu(\omega') = \rho_{\omega'}(\omega)\nu(\omega').
	\]
	This implies that (\ref{eq:comparison:ineq}) holds for every pair of increasing positive functions, and the theorem follows.	
\end{proof}

\noindent
We conclude this section with the proof of Claim \ref{claim:censoring:increasing}.

\begin{proof}[Proof of Claim \ref{claim:censoring:increasing}]
	 From the definition of $T^*$ we get $T^*f(F,\sS) = f(\sS)$ for any $(F,\sS)\in\JC$. 
	 Let $(F,\sS),(F,\tau)\in \JC$ be such that $\sS\geq \tau$. Then,
	\begin{align*}
	\hat{f}(F,\sS) = Q_AT^*f(F,\sS) = \sum_{(F',\sS')\in\JC} Q_A\big((F,\sS),(F',\sS')\big) f(\sS').
	\end{align*}
	Recall that $Q_A\big((F,\sS),(F',\sS')\big)>0$ iff $F=F'$ and $\sS,\sS'$ differ only in $\mathcal{I}_{A}(F)$, the set of isolated vertices in $A$. If this is the case, then
	$$ Q_A\big((F,\sS),(F,\sS')\big) = \frac{1}{2^{|\mathcal{I}_{A}(F)|}}. $$
	For $\xi\in\Omega_{\mathcal{I}_{A}(F)}$, let $\sS_\xi$ denote the configuration obtained from $\sS$ by changing the spins of vertices in $\mathcal{I}_{A}(F)$ to $\xi$; $\tau_\xi$ is defined similarly. (Recall that $\Omega_{\mathcal{I}_{A}(F)}$ denotes the set of Ising configurations on the set $\mathcal{I}_{A}(F)$.) Then, $\sigma_\xi \ge \tau_\xi$ for any $\xi\in\Omega_{\mathcal{I}_{A}(F)}$ and
	$$ \hat{f}(F,\sS) = \frac{1}{2^{|\mathcal{I}_{A}(F)|}} \sum_{\xi\in\Omega_{\mathcal{I}_{A}(F)}} f(\sS_\xi) \geq \frac{1}{2^{|\mathcal{I}_{A}(F)|}} \sum_{\xi\in\Omega_{\mathcal{I}_{A}(F)}} f(\tau_\xi) = \hat{f}(F,\tau). $$ 
	This shows that $\hat{f}$ is increasing. 
\end{proof}

\section{Proof of Theorem~\ref{thm:mixingtime:new}}
\label{app:main-proof}

In \cite{MS}, Mossel and Sly show that ASSM (see Definition \ref{dfn:prelim:sm}) implies optimal $O(n \log n)$ mixing of the Glauber dynamics on any $n$-vertex graph of bounded degree \cite{HS}.
Our proof of Theorem \ref{thm:mixingtime:new} follows the approach in \cite{MS}. The key new novelty is the use of Theorem~\ref{thm:censoring}.

\begin{proof}[Proof of Theorem \ref{thm:mixingtime:new}]
	
	
	Let $\{X_t^+\}$, $\{X_t^-\}$ be two instances of the chain such that $X_0^+ $ is the ``all plus'' configuration
	and $X_0^-$ is the ``all minus'' one. Since the chain is monotone there exists a monotone grand coupling
	of  $\{X_t^+\}$ and $\{X_t^-\}$ such that $X_t^+ \ge X_t^-$ for all $t \ge 0$. The existence of a monotone grand coupling implies that the extremal ``all plus'' and ``all minus'' are the worst possible starting configurations, and thus,
	$$ \taumix(P,\varepsilon) \le \Tcoup(\varepsilon)$$
	where $\Tcoup(\varepsilon)$ is the minimum $t$ such that $\Pr[X_t^+ \neq X_t^-] \le \varepsilon$, assuming $\{X_t^+\}$ and $\{X_t^-\}$ are coupled using the monotone coupling.
	Hence, it is sufficient to find $t$ such that for all $v \in V$
	$$
	\Pr[X_t^+(v) \neq X_t^-(v)] \le \frac{\varepsilon}{n},
	$$
	since the result would follow from a union bound over the vertices.

	Choose $R\in\mathbb{N}$ such that 
	the ASSM property holds; see Lemma~\ref{lemma:isw:SSM}.
	Let $s\in\mathbb{N}$ be arbitrary and fixed.
	For each $v\in V$, we define two instances $\{Y_t^+\}$
	and
	$\{Y_t^-\}$  of the censored chain that 
	until time $s$ evolves as the chain $P$ and after time $s$ 
	it evolves according to $P_{B(v,R)}$. 
	By assumption $P_{B(v,R)}$ is also monotone, so the evolutions of $\{Y_t^+\}$ and $\{Y_t^-\}$ can be coupled as follows:
	up to time $s$,  $\{Y_t^+\}$ and $\{Y_t^-\}$ are coupled by setting $Y_t^+ = X_t^+$ and $Y_t^- = X_t^-$ for all $0 \le t \le s$;  for $t > s$ the monotone coupling for $P_{B(v,R)}$  is used.
	Then, we have $X_t^+ \geq X_t^-$ and $Y_t^+\geq Y_t^-$ for all $t \ge 0$. 
	
	Since $P\leq P_{B(v,R)}$ by assumption, and the distribution $\nu^+$ (resp., $\nu^-$) of $X_0^+$ (resp., $X_0^-$) is such that $\nu^+/\mu$ (resp., $\nu^-/\mu$) is trivially increasing (resp., decreasing), Theorem \ref{thm:censoring} implies 
	$ Y_t^+ \succeq X_t^+$ and  $ X_t^- \succeq Y_t^-$ for all $t \ge 0$. Hence,
	$$ 
	Y_t^+ \succeq X_t^+ \succeq X_t^- \succeq Y_t^-. 
	$$
	Thus,
	\begin{align*}
	\Pr[X_t^+(v) \neq X_t^-(v)] 
	&= \Pr[X_t^+(v)=+] - \Pr[X_t^-(v)=+] \\
	&\le \Pr[Y_t^+(v)=+] - \Pr[Y_t^-(v)=+] \\
	&= \Pr[Y_t^+(v) \neq Y_t^-(v)],
	\end{align*}
	where the first and third equations follow from the monotonicity of $\{X_t^+\}$, $\{X_t^-\}$, $\{Y_t^+\}$ and $\{Y_t^-\}$ and the inequality from the fact that $Y_t^+ \succeq X_t^+$ and $Y_t^- \preceq X_t^-$.
	
		Recall our earlier definitions of $B(v,R)$ as the ball of radius $R$
	and $S(v,R)$ as the external boundary of $B(v,R)$; i.e., $B(v,R)=\{u\in V: \dist(u,v)\leq R \}$ 
	and let $S(v,R)=B(v,R+1)\backslash B(v,R)$.
	For ease of notation let $A = B(v,R+1) = B(v,R) \cup S(v,R)$
	and for $\sigma^+,\sigma^- \in \Omega_A$ let $\mathcal{F}_s(\sigma^+,\sigma^- )$ be the event
	$\{X_s^+(A) = \sS^+,X_s^-(A) = \sS^-\}$.
	Then, for $t > s$ we have
	\begin{align}
	\label{eq:triangular}
	\Pr[Y_t^+(v) \neq Y_t^-(v)\mid \mathcal{F}_s(\sigma^+,\sigma^- )]
	&\le \Big| \Pr[Y_t^+(v)=+\mid \mathcal{F}_s(\sigma^+,\sigma^- )] - \mu(v=+\mid\tau^+) \Big| \notag\\
	&+ \Big| \mu(v=+ \mid \tau^+)-\mu(v=+ \mid \tau^-) \Big|\notag\\
	&+ \Big| \Pr[Y_t^-(v)=+\mid \mathcal{F}_s(\sigma^+,\sigma^- )] - \mu(v=+\mid\tau^-) \Big|,
	\end{align}
	where 
	$\mu = \mu_{G,\beta}$,
	$\tau^+ = \sS^+(S(v,R))$ and $\tau^- = \sS^-(S(v,R))$.

	Observe that $\mu(\cdot\mid\tau^+)$ and $\mu(\cdot\mid\tau^-)$ are the stationary measures of $\{Y_t^+\}$ and $\{Y_t^-\}$ respectively, and recall that by assumption
	$$
	\max_{\sigma \in \Omega} \,\taumix(P_{B(v,R)},\sigma) \le T.
	$$
	Hence, for $t= s+T \log_4\lceil 8|A|\rceil$, we have
	\begin{equation}
	\label{eq:recursive:first}
	\Big| \Pr[Y_t^+(v)=+\mid \mathcal{F}_s(\sigma^+,\sigma^- )] - \mu(v=+\mid\tau^+) \Big| \leq \frac{1}{8|A|},
	\end{equation}
	and similarly
	\begin{equation}
	\label{eq:recursive:second}
	\Big| \Pr[Y_t^-(v)=+\mid \mathcal{F}_s(\sigma^+,\sigma^- )] - \mu(v=+\mid\tau^-) \Big| \leq \frac{1}{8|A|}.
	\end{equation}
	
	We bound next $|\mu(v=+ \mid \tau^+)-\mu(v=+ \mid \tau^-)|$.
	For $u \in S(v,R)$,
	let $a_u$ be defined as in (\ref{eq:ssm:au}) and let $S(v,R)=\{u_1,u_2,\ldots,u_l\}$ with $l=|S(v,R)|$. Let $\tau_0,\tau_1,\ldots,\tau_l$ be a sequence of configurations on $S(v,R)$ such that $\tau_j(u_k)=\tau^+(u_k)$ for $j< k\leq l$ and $\tau_j(u_k)=\tau^-(u_k)$ for $1\leq k\leq j$. That is, $\tau_0=\tau^+$, $\tau_l=\tau^-$ and $\tau_{j}$ is obtained from $\tau_{j-1}$ by changing the spin of $u_j$ from $\tau^+(u_j)$ to $\tau^-(u_j)$. The triangle inequality then implies that
	\begin{align}
	\label{eq:ssm:bound}
	\Big| \mu(v=+\mid \tau^+)-\mu(v=+\mid\tau^-) \Big| &\leq \sum_{j=1}^{l} \Big| \mu(v=+\mid\tau_{j-1})-\mu(v=+\mid\tau_{j}) \Big|\notag\\
	&\leq \sum_{j=1}^{l} \Ind\{\tau^+(u_j)\neq \tau^-(u_j)\} \cdot a_{u_j}\notag\\
	&= \sum_{u\in S(v,R)} \Ind\{\sS^+(u)\neq \sS^-(u)\} \cdot a_{u}.
	\end{align}
	Hence, plugging (\ref{eq:recursive:first}), (\ref{eq:recursive:second}) and (\ref{eq:ssm:bound}) into (\ref{eq:triangular}), we get
	\begin{align*}
	\Pr[Y_t^+(v) \neq Y_t^-(v) \mid \mathcal{F}_s(\sigma^+,\sigma^- )]  
	&\leq \frac{1}{4|A|} + \sum_{u \in S(v,R)} \Ind\{\sS^+(u)\neq \sS^-(u)\} \cdot a_{u}.
	\end{align*}
	
	Now, if $X_s^+(A) = X_s^-(A)$, then $Y_t^+(A) = Y_t^-(A)$ for all $t\geq s$.
	Therefore,
	\begin{align*}
	\Pr[Y_t^+(v) &\neq Y_t^-(v)] = \sum_{\sS^+ \neq \sS^- \in \Omega_A} \Pr[Y_t^+(v) \neq Y_t^-(v) \mid \mathcal{F}_s(\sigma^+,\sigma^- )]  \Pr[\mathcal{F}_s(\sigma^+,\sigma^- )] \\
	&\leq \frac{\Pr[X_s^+(A) \neq X_s^-(A)]}{4|A|} + \sum_{\sS^+ \neq \sS^- \in \Omega_A}  \sum_{u \in S(v,R)} \Ind\{\sS^+(u)\neq \sS^-(u)\} \cdot a_{u} \cdot \Pr[\mathcal{F}_s(\sigma^+,\sigma^- )] \\
	&=\frac{\Pr[X_s^+(A) \neq X_s^-(A)]}{4|A|} +  \sum_{u \in S(v,R)} \Pr[X_s^+(u)\neq X_s^-(u)] \cdot a_{u}.
	\end{align*}
	By union bound, 
	\begin{align*}
	\frac{\Pr[X_s^+(A) \neq X_s^-(A)]}{4|A|} &\leq \frac{1}{4|A|} \sum_{u\in A} \Pr[X_s^+(u) \neq X_s^-(u)] \leq \frac{1}{4} \max_{u\in V} \Pr[X_s^+(u)\neq X_s^-(u)].
	\end{align*}
	\noindent
	Moreover, the ASSM property (see Lemma \ref{lemma:isw:SSM}) implies that
	\begin{align*}
	\sum_{u\in S(v,R)} \Pr[X_s^+(u)\neq X_s^-(u)] \cdot a_u &\leq \max_{u\in V} \Pr[X_s^+(u)\neq X_s^-(u)] \sum_{u\in S(v,R)} a_u\\
	&\leq \frac{1}{4} \max_{u\in V} \Pr[X_s^+(u)\neq X_s^-(u)].
	\end{align*}
	Thus, we conclude that for every $v \in V$
	\begin{equation*}
	\Pr[X_t^+(v) \neq X_t^-(v)] \leq \Pr[Y_t^+(v) \neq Y_t^-(v)] \leq \frac{1}{2} \max_{u\in V} \Pr[X_s^+(u)\neq X_s^-(u)]
	\end{equation*}
	for $t= s+T \log_4\lceil 8|A|\rceil$. Taking the maximum over $v$
	\begin{equation*}
	\max_{v\in V} \Pr[X_t^+(v)\neq X_t^-(v)] \leq \frac{1}{2} \max_{v\in V} \Pr[X_s^+(v)\neq X_s^-(v)].
	\end{equation*}
	Iteratively, we get that for $\hat{T} = T \log_4\lceil 8|A|\rceil \log_2 \lceil \frac{n}{\varepsilon}\rceil$
	\begin{equation*}
	\max_{v\in V} \Pr[X_{\hat{T}}^+(v)\neq X_{\hat{T}}^-(v)] \leq \frac{\varepsilon}{n}.
	\end{equation*}
	This implies that $\taumix(P,\varepsilon) \le T \log_4\lceil 8|A|\rceil \log_2 \lceil \frac n \varepsilon \rceil$, so taking $\varepsilon=1/4$ it follows that $\Tmix(P)=O(T\log n)$ as desired. Moreover, since for $\varepsilon > 0$
	$$
	(\taurel(P)-1) \log (2\varepsilon)^{-1} \le \taumix(P,\varepsilon),
	$$
	taking $\varepsilon = n^{-1}$ yields that $\taurel(P) = O(T)$; see Theorem 12.5 in \cite{LP}.
\end{proof}

\section{Proof of Theorem~\ref{thm:censoring}}
\label{app:censoring}

\begin{proof}[Proof of Theorem~\ref{thm:censoring}]
	By assumption, $X_t$ has distribution $\nu P^t$ while $\hat{X}_t$ has distribution $\nu\hat{P}^t$ where $\hat{P}^t=P_{A_1}\dots P_{A_t}$. Since $\{P_A\}_{A\subseteq V}$ is a censoring for $P$, we have $P\leq P_A$ for all $A\subseteq V$. We show first that this implies $P^t\leq \hat{P}^t$. 
	
	Recall that $P_{A_i}$ may be viewed as an operator from $L_2(\mu)$ to $L_2(\mu)$. The reversibility of $P_{A_i}$ w.r.t. $\mu$ implies that $P_{A_i}$ is self-adjoint; i.e., $P_{A_i}^*=P_{A_i}$. Also, since $P$ is monotone, $P^k f$ is increasing for any integer $k>0$ and any increasing function $f$; see Proposition 22.7 in \cite{LP}. Combining these facts, we have that for any pair of increasing positive functions $f,g:\mathbb{R}^{|\Omega|}\to\mathbb{R}$
	\[
	\langle f, P^t g \rangle_\mu = \langle f, P(P^{t-1} g) \rangle_\mu \leq \langle f, P_{A_1}(P^{t-1} g) \rangle_\mu = \langle P_{A_1}f, P^{t-1} g \rangle_\mu.
	\]
	Note also that $P_{A_1}$ is monotone, so $P_{A_1}f$ is increasing. Iterating this argument, we obtain
	\[
	\langle f, P^t g \rangle_\mu \leq \langle P_{A_1}f, P^{t-1} g \rangle_\mu \leq \dots \leq \langle P_{A_t}\dots P_{A_1}f,g \rangle_\mu = \langle f, \hat{P}^t g \rangle_\mu.
	\]
	This shows that $P^t\leq \hat{P}^t$. 
	
	To prove $X_t\preceq \hat{X}_t$, we need to show that for any increasing function $g$
	\begin{equation}\label{eqn:censoring-proof}
	\sum_{\sS\in\Omega} \nu P^t(\sS) g(\sS) \leq \sum_{\sS\in\Omega} \nu \hat{P}^t(\sS) g(\sS).	
	\end{equation}
	Let $h:\mathbb{R}^{|\Omega|}\to\mathbb{R}$ be the function given by $h(\tau) = \nu(\tau)/\mu(\tau)$ for $\tau\in\Omega$. Then we have
	\begin{align*}
	\sum_{\sS\in\Omega} \nu P^t(\sS) g(\sS) &= \sum_{\sS\in\Omega} \Big( \sum_{\tau\in\Omega} \nu(\tau) P^t(\tau,\sS) \Big) g(\sS) = \sum_{\sS,\tau\in\Omega} \nu(\tau) P^t(\tau,\sS) g(\sS)\\
	&= \sum_{\sS,\tau\in\Omega} \mu(\tau)P^t(\tau,\sS) g(\sS)h(\tau) = \langle h, P^tg \rangle_{\mu}.
	\end{align*}
	Similarly, 
	$$ \sum_{\sS\in\Omega} \nu \hat{P}^t(\sS) g(\sS) = \langle h, \hat{P}^tg \rangle_{\mu}. $$
	The function $h$ is increasing by assumption, and thus (\ref{eqn:censoring-proof}) follows immediately from the fact that $P^t\leq \hat{P}^t$. This establishes part 1 of the theorem. Part 2 of the theorem follows from part 1 and Lemma 2.4 in \cite{PW}. 
\end{proof}

\section{Block dynamics}
\label{app:block}

As an application of the technology introduced in Section~\ref{sec:iso}, in this section we study the mixing and relaxation times of the block dynamics. Let $G=(V,E)$ be a graph of maximum degree at most $d$. Let $D=\{B_1,\ldots,B_r\}$ be a family of $r$ subsets of $V$ such that $\cup_{i=1}^r B_i=V$. Given a configuration $\sS_t\in\Omega$ at time $t$, one step of the block dynamics is given by:
\begin{enumerate}
	\item Pick $k\in\{1,2,\ldots,r\}$ uniformly at random;
	\item Sample $\sS_{t+1}(B_k)$ from $\mu(\cdot\mid\sS_t(V\bs B_k))$ and set $\sS_{t+1}(v)=\sS_t(v)$ for all $v\notin B_k$.
\end{enumerate}
\noindent
Let $\PHB^D$ be the transition matrix of the block dynamics with respect to $D$. For ease of notion, we write $\PHB=\PHB^D$ and consider the collection $D$ of blocks to be fixed. We show that when $\bB<\bB_c(d)$ the block dynamics has mixing time $O(r\log n)$ and relaxation time $O(r)$, which proves Theorem~\ref{thm:blocks:intro} from the introduction. This is done using the general framework introduced in Section~\ref{sec:iso}, which requires showing that the block dynamics is monotone, and that it has a censoring. 

The following is a standard fact about the block dynamics. 
\begin{lemma}\label{lem:block-mono}
	For all graphs $G=(V,E)$, all $\bB>0$ and any collection of blocks $D=\{B_1,\ldots,B_r\}$ such that $\cup_{i=1}^r B_i=V$, the block dynamics for the Ising model is monotone.
\end{lemma}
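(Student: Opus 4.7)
The plan is to exhibit a monotone grand coupling of the block dynamics, which by the definition of monotonicity in Section \ref{sec:background} will complete the proof. At each step, the coupling proceeds in two stages: first, pick a block index $k \in \{1,\ldots,r\}$ uniformly at random, using the \emph{same} $k$ in every copy of the chain; second, update the configuration inside $B_k$ using shared randomness, while leaving spins outside $B_k$ unchanged. Only the block update is nontrivial, since spins outside $B_k$ keep whatever ordering they already had.

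The key ingredient is the monotonicity of the Ising model with respect to boundary conditions, recalled in Section \ref{sec:background}: for any $\eta_1 \geq \eta_2$ on $V \setminus B_k$, one has $\mu(\cdot \mid \eta_1) \succeq \mu(\cdot \mid \eta_2)$ on $\Omega_{B_k}$. Thus the family $\{\mu(\cdot \mid \eta)\}_{\eta \in \Omega_{V\setminus B_k}}$ is monotone in the stochastic-domination order. I will invoke a standard extension of Strassen's theorem (Holley's coupling theorem, or equivalently a coupling-from-the-past argument applied to the monotone single-site heat-bath dynamics restricted to $B_k$ with the given boundary) to produce a random map $W : \Omega_{V\setminus B_k} \to \Omega_{B_k}$ whose marginal at $\eta$ is $\mu(\cdot\mid\eta)$ for every $\eta$, and that almost surely satisfies $W(\eta_1) \geq W(\eta_2)$ whenever $\eta_1 \geq \eta_2$. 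Given such a $W$, the one-step grand coupling is defined by sampling $W$ once and setting $\sigma'(B_k) = W(\sigma(V\setminus B_k))$ and $\sigma'(V\setminus B_k) = \sigma(V\setminus B_k)$ for every $\sigma$ in the grand coupling. If $\sigma \geq \tau$, then $\sigma(V\setminus B_k) \geq \tau(V\setminus B_k)$, so by monotonicity of $W$ we have $\sigma'(B_k) \geq \tau'(B_k)$; combined with the already ordered spins outside $B_k$, this gives $\sigma' \geq \tau'$, which is the desired monotone grand coupling property.

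The main technical point is producing $W$ \emph{simultaneously} for all boundary conditions rather than only for a single pair, the latter being immediate from Strassen's theorem. The cleanest route is via the heat-bath Glauber dynamics on $B_k$ with fixed boundary: this chain is itself monotone via the standard single-site coupling, so running it with common randomness from all starting configurations on $B_k$ and over all boundary conditions $\eta$, then taking a long-time limit on the finite state space $\Omega_{B_k}$ (or, equivalently, applying coupling-from-the-past), yields a random function that simultaneously samples from every $\mu(\cdot\mid\eta)$ and is almost surely monotone in $\eta$. This is where the monotonicity of the underlying Ising model is used in an essential way, and completes the construction.
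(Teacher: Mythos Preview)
Your proposal is correct. Both your argument and the paper's construct a monotone grand coupling by choosing the same block in every copy and then producing a monotone block update; the difference lies only in how the block update is realized. The paper gives a direct, elementary construction: it fixes an ordering $v_1,\ldots,v_\ell$ of $B_k$ and samples the spins sequentially, drawing $X_{t+1}^\sigma(v_j)$ from $\mu\big(v_j = \cdot \mid X_{t+1}^\sigma(v_1),\ldots,X_{t+1}^\sigma(v_{j-1}),\,X_t^\sigma(V\setminus B_k)\big)$ using a single shared uniform $r_t(v_j)$ for all copies. By the chain rule this yields an exact sample from the block conditional, and monotonicity follows by a one-line induction on $j$ since the marginal probability of $+$ at $v_j$ is monotone in the conditioning (FKG). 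Your route instead invokes Strassen/Holley or CFTP with the monotone single-site chain on $B_k$ to assert the existence of a simultaneous monotone sampler $W$. This is perfectly valid and has the advantage of being a reusable black box; the paper's sequential construction has the advantage of being fully explicit, requiring no limiting argument or appeal to CFTP termination, and making the ``straightforward to check'' monotonicity claim truly a one-line verification.
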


\begin{proof}
	A grand coupling is constructed as follows: let $\{X_t^\sS\}_{t\geq 0}$ denote the chain that starts from $\sS\in\Omega$; i.e., $X_0^\sS=\sS$. At time $t$, for all chains $\{X_t^\sS:\sS\in\Omega\}$ we choose the same uniform random block $B$ and then fix some order $\{v_1,\ldots,v_\ell\}$ of the vertices in $B$. For each $j=1,\ldots,\ell$ and $\sS\in\Omega$, the spin of $X_{t+1}^\sS(v_j)$ is sampled from the conditional distribution given $X_{t+1}^\sS(v_1),\ldots,X_{t+1}^\sS(v_{j-1})$ and $X_{t}^\sS(V\bs B)$. To update each $v_j$, we can use the standard grand coupling for the single-site Glauber dynamics; i.e., for all $\sS\in\Omega$ choose the same uniform random number $r_t(v_j)$ from $[0,1]$, and set $X_{t+1}^\sS(v_j)$ to be ``+'' if and only if 
	$$ r_t(v_j)\leq \mu\big(v_j=+\mid X_{t+1}^\sS(v_1),\ldots,X_{t+1}^\sS(v_{j-1}),X_{t}^\sS(V\bs B)\big). $$
	It is straightforward to check that this gives a monotone grand coupling.
\end{proof}

\noindent
A censoring for the block dynamics is constructed as follows. For any $A\subseteq V$, let $\PHB_A$ denote the transition matrix of the censored chain that given a configuration $\sS_t\in\Omega$ generates $\sS_{t+1}$ as follows:
\begin{enumerate}
	\item Pick $k\in\{1,2,\ldots,r\}$ uniformly at random;
	\item Sample $\sS_{t+1}(A\cap B_k)$ from $\mu(\cdot\mid\sS_t(V\bs (A\cap B_k)))$ and set $\sS_{t+1}(v)=\sS_t(v)$ for all $v\notin A\cap B_k$.
\end{enumerate}
Note that if the chosen block $B_k$ has no intersection with $A$, then the censored chain $\PHB_A$ will not update any spin in this step.

\begin{lemma}\label{lem:block-censoring}
	The collection of matrices $\{\PHB_A\}_{A\subseteq V}$ is a censoring for the block dynamics.
\end{lemma}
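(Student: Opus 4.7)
My plan is to verify in turn the four conditions of Definition~\ref{dfn:isw:censoring} for $\PHB_A$: (i) $\PHB_A$ acts on $A$; (ii) $\PHB_A$ is monotone; (iii) $\PHB_A$ is reversible with respect to $\mu$; and (iv) $\PHB \le \PHB_A$ in the partial order from Section~\ref{sec:background}. Parts (i)--(iii) are routine and I would dispense with them quickly: (i) is immediate from the definition, since the only vertices whose spins can change are those in $A \cap B_k \subseteq A$; (ii) follows by the same grand coupling used to prove Lemma~\ref{lem:block-mono}, except that, after selecting block $B_k$, only the vertices in $A \cap B_k$ are resampled in the prescribed order (if $A \cap B_k = \emptyset$ the step is trivial); (iii) follows by writing $\PHB_A = \frac{1}{r}\sum_{k=1}^{r} P_{A \cap B_k}$, where each $P_{A \cap B_k}$ is a standard heat-bath update on a fixed block (and is therefore self-adjoint w.r.t.\ $\mu$), so their convex combination is as well.

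The main task is (iv), which I would prove block-by-block. Writing $\PHB = \frac{1}{r}\sum_k P_{B_k}$ and $\PHB_A = \frac{1}{r}\sum_k P_{A \cap B_k}$, it suffices to show that $P_{B_k} \le P_{A \cap B_k}$ for every $k$. Since $A \cap B_k \subseteq B_k$, this reduces to the following general claim: for any blocks $B' \subseteq B$ with corresponding heat-bath transition matrices $P_{B'}, P_B$ for the ferromagnetic Ising model, one has $P_B \le P_{B'}$, i.e.\ $\langle f, P_B g\rangle_\mu \le \langle f, P_{B'} g\rangle_\mu$ for every increasing positive $f,g$.

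To prove this claim I would exploit the projection structure of heat-bath updates: $P_B g(\sigma) = \mu(g \mid \sigma(V\setminus B))$ and, because $V\setminus B \subseteq V\setminus B'$, the tower property gives the key identity $P_B P_{B'} = P_B$. Setting $h := P_{B'} g$, this yields
\[
\langle f, P_{B'} g\rangle_\mu - \langle f, P_B g\rangle_\mu
= \mathbb{E}_\mu\bigl[f\,h\bigr] - \mathbb{E}_\mu\bigl[f\,\mathbb{E}[h \mid V\setminus B]\bigr]
= \mathbb{E}_\mu\bigl[\mathrm{Cov}_\mu(f, h \mid V\setminus B)\bigr].
\]
Monotonicity of the Ising model (Section~\ref{sec:background}) implies that $h = P_{B'} g$ is increasing whenever $g$ is, because applying a monotone heat-bath update to an increasing function preserves the property. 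Conditioned on any boundary configuration on $V\setminus B$, the measure $\mu(\cdot \mid V\setminus B)$ is itself a ferromagnetic Ising model on $B$, so the FKG inequality gives $\mathrm{Cov}_\mu(f, h \mid V\setminus B) \ge 0$ pointwise. Taking expectations yields the comparison, and averaging over $k$ completes the proof.

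The main obstacle is the Dirichlet form comparison in (iv): unlike the ordinary heat-bath Glauber censoring of Peres--Winkler, here we are comparing two heat-bath block moves on different blocks, and the argument genuinely requires both the projection identity $P_B P_{B'} = P_B$ and the FKG inequality for the conditional ferromagnetic measure. Once this is in hand, all four conditions of Definition~\ref{dfn:isw:censoring} are verified and the lemma follows.
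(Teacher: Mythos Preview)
Your proposal is correct and follows essentially the same approach as the paper: both reduce to a block-by-block comparison $R_{B_i}\le R_{A\cap B_i}$, invoke the projection structure of heat-bath updates (the paper uses $R_{B_i}=R_{A\cap B_i}R_{B_i}R_{A\cap B_i}$, you use the equivalent tower identity $P_B P_{B'}=P_B$), and conclude via the FKG inequality for the conditional ferromagnetic Ising measure on $B_i$. Your conditional-covariance formulation is a mild streamlining of the paper's $\hat f_j=R_{A\cap B_i}f_j$ decomposition, but the substance is identical.
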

\noindent
Theorem~\ref{thm:blocks:intro} then follows straightforwardly from Lemma~\ref{lem:block-mono}, Lemma~\ref{lem:block-censoring} and Theorem~\ref{thm:mixingtime:new}.

\begin{proof}[Proof of Theorem~\ref{thm:blocks:intro}]
	By Lemma~\ref{lem:block-mono} the block dynamics is monotone and by Lemma~\ref{lem:block-censoring} the collection $\{\PHB_A\}_{A\subseteq V}$ is a censoring for $\PHB$. Furthermore, Lemma~\ref{lemma:isw:SSM} implies that ASSM holds for some constant $R>0$. Hence, to apply Theorem~\ref{thm:mixingtime:new} it is sufficient to bound $\Tmix(\PHB_{B(v,R)})$ for all $v\in V$. For this, we shall use a crude coupling argument. For each vertex $w$ in $B(v,R)$, the probability that $w$ is updated in one step is at least $1/r$ since at least one of the $r$ blocks contains it. Thus, $w$ is not updated after $T=\lceil r\ln(2|B(v,R)|)\rceil$ steps with probability at most 
	$$ \Big(1-\frac{1}{r}\Big)^T \leq \e^{-T/r} \leq \frac{1}{2|B(v,R)|}. $$
	Consequently, the probability that all vertices in $B(v,R)$ are updated at least once after $T$ steps is at least $1/2$ by union bound. 
	
	Now, consider two instances of the block dynamics chain with arbitrary starting configurations in $B(v,R)$ coupled as in the proof of Lemma~\ref{lem:block-mono}.
	Suppose a vertex $w\in B(v,R)$ is updated at time $t$. Observe that $w$ is set to ``$+$'' with probability at least
	\[
	\frac{\e^{-\bB d}}{\e^{\bB d}+\e^{-\bB d}} > \frac{1}{2}\e^{-2\bB d}
	\]
	and to ``$-$'' with at least the same probability. Hence, if the uniform random number $r_t(w)$ we pick in the coupling satisfies $r_t(w)\leq \frac{1}{2}\e^{-2\bB d}$, then $w$ is set to ``$+$'' in both configurations. Similarly, if $r_t(w)\geq 1-\frac{1}{2}\e^{-2\bB d}$, then $w$ is always set to ``$-$''. This implies that the spins at $w$ couple with probability at least $\e^{-2\bB d}$ each time $w$ is updated. Moreover, the event $r_t(w)\in[0,\frac{1}{2}\e^{-2\bB d}] \cup [1-\frac{1}{2}\e^{-2\bB d},1]$ is independent for distinct $w$ and $t$. Therefore, after all vertices in $B(v,R)$ are updated, the probability that the two configurations couple is at least $\e^{-2\bB d|B(v,R)|}$.
	
	Combining this with the fact that with probability at least $1/2$ all vertices in $B(v,R)$ are updated after $T$ steps, we get that the probability that two configurations couple at time $T$ is at least 
	$$\frac{1}{2} \cdot \e^{-2\bB d|B(v,R)|}=\Omega(1).$$ 
	This shows that for any $v\in V$
	$$ \Tmix(\PHB_{B(v,R)})=O(T)=O(r), $$
	and the result then follows from Theorem~\ref{thm:mixingtime:new}.
\end{proof}
\noindent
We conclude this section with the proof of Lemma \ref{lem:block-censoring}.

\begin{proof}[Proof of Lemma~\ref{lem:block-censoring}]
	The reversibility of $\PHB_A$ follows by definition for all $A\subseteq V$. Moreover, the argument in the proof of Lemma~\ref{lem:block-mono} also shows that $\PHB_A$ is monotone for all $A\subseteq V$. Thus, it suffices to show that $\PHB\leq \PHB_A$ for all $A\subseteq V$.
	
	For any $D\subseteq V$, let $R_D$ be the transition matrix corresponding to the heat-bath update in $D$; namely, sampling vertices in $D$ from the (conditional) Ising distribution conditioned on the configuration of the vertices outside $D$. Thus,
	\[
		\PHB = \frac{1}{r} \sum_{i=1}^r R_{B_i}
	\]
	and for any $f_1,f_2\in \mathbb{R}^{|\Omega|}$ we have
	\[
		\langle f_1, \PHB f_2 \rangle_\mu = \frac{1}{r} \sum_{i=1}^r \langle f_1, R_{B_i} f_2 \rangle_\mu.
	\]
	The corresponding equality holds for $\PHB_A$ and $R_{A\cap B_i}$. Hence, it suffices to show that for any $i\in\{1,\ldots,r\}$ and any increasing positive functions $f_1,f_2 \in\mathbb{R}^{|\Omega|}$ we have
	\begin{equation}\label{eqn:block-censoring}
		\langle f_1, R_{B_i} f_2 \rangle_\mu \leq \langle f_1, R_{A\cap B_i} f_2 \rangle_\mu.
	\end{equation}
	For $D\subseteq V$, the matrix $R_D$ can be viewed as an operator from $L_2(\mu)$ to $L_2(\mu)$. Recall that $R_D^*$ denotes the adjoint operator of $R_D$. The following properties of the block dynamics follow immediately from its definition: $R_{B_i}=R_{B_i}^2=R_{B_i}^*$, $R_{A\cap B_i} = R_{A\cap B_i}^2=R_{A\cap B_i}^*$ and $R_{B_i} = R_{A\cap B_i}R_{B_i}R_{A\cap B_i}$; see \cite{BCSV}.
	Let $\hat{f}_1=R_{A\cap B_i} f_1$ and $\hat{f}_2=R_{A\cap B_i} f_2$. Then
	\begin{equation*}
		\langle f_1, R_{A\cap B_i} f_2 \rangle_\mu 
		= \langle f_1, R_{A\cap B_i}^2 f_2 \rangle_\mu
		= \langle R_{A\cap B_i} f_1, R_{A\cap B_i} f_2 \rangle_\mu
		= \langle \hat{f}_1, \hat{f}_2 \rangle_\mu,
	\end{equation*}
	and
	\begin{equation*}
		\langle f_1, R_{B_i} f_2 \rangle_\mu 
		= \langle f_1, R_{A\cap B_i} R_{B_i}^2 R_{A\cap B_i} f_2 \rangle_\mu
		= \langle R_{B_i}R_{A\cap B_i} f_1, R_{B_i}R_{A\cap B_i} f_2 \rangle_\mu
		= \langle R_{B_i}\hat{f}_1, R_{B_i}\hat{f}_2 \rangle_\mu.
	\end{equation*}
	
	Given $\sS\in\Omega$, let $\rho_{\sS}(\cdot)=R_{B_i}(\sS,\cdot)$; i.e., $\rho_\sS$ is a distribution over $\Omega_{B_i}$ that results from applying the transition $R_{B_i}$ to $\sS$.
	It follows that
	$$ R_{B_i}\hat{f}_1(\sS) = \sum_{\tau\in\Omega} R_{B_i}(\sS,\tau)\hat{f}_1(\tau) = \Exp_{\rho_{\sS}} [\hat{f}_1]. $$
	Similarly, we have $R_{B_i}\hat{f}_2(\sS) = \Exp_{\rho_{\sS}} [\hat{f}_2]$. 
	Notice that the distribution $\rho_{\sS}$ is a (conditional) Ising model distribution on $B_i$, and thus it is positively correlated for any $\sS\in\Omega$; see Theorem 22.16 in \cite{LP}. 
	Also, since $R_{A\cap B_i}$ is monotone, $\hat{f}_1$ and $\hat{f}_2$ are increasing functions; see, e.g., Proposition 22.7 in \cite{LP}. This implies that for any $\sS\in\Omega$
	\[
		R_{B_i}\hat{f}_1(\sS) \, R_{B_i}\hat{f}_2(\sS) = \Exp_{\rho_{\sS}} [\hat{f}_1] \, \Exp_{\rho_{\sS}} [\hat{f}_2] \leq \Exp_{\rho_{\sS}} [\hat{f}_1 \, \hat{f}_2].
	\]
	 We then deduce that
	\begin{align*}
		\langle R_{B_i}\hat{f}_1,R_{B_i}\hat{f}_2 \rangle_\mu 
		&= \sum_{\sS\in\Omega} R_{B_i}\hat{f}_1(\sS) \, R_{B_i}\hat{f}_2(\sS) \mu(\sS)
		\leq \sum_{\sS\in\Omega} \Exp_{\rho_{\sS}} [\hat{f}_1 \, \hat{f}_2] \mu(\sS)\\
		&= \sum_{\sS,\tau\in\Omega} \hat{f}_1(\tau) \hat{f}_2(\tau) \rho_\sS(\tau)\mu(\sS)
		= \sum_{\sS,\tau\in\Omega} \hat{f}_1(\tau) \hat{f}_2(\tau) \rho_\tau(\sS)\mu(\tau)\\
		&= \langle \hat{f}_1,\hat{f}_2 \rangle_\mu
	\end{align*}
	where in the second to last equality we use the reversibility of $R_{B_i}$ w.r.t. $\mu$; namely,  
	$$ \rho_\sS(\tau)\mu(\sS) = R_{B_i}(\sS,\tau)\mu(\sS) = R_{B_i}(\tau,\sS)\mu(\tau) = \rho_\tau(\sS)\mu(\tau) $$
	for all $\sS,\tau\in\Omega$. 
	This shows that (\ref{eqn:block-censoring}) holds for any two increasing positive functions $f_1,f_2$ and the theorem follows.
\end{proof}

\section{Monotone SW dynamics}
\label{section:msw}

As a second application of our technology for analyzing monotone Markov chains, in this section we consider a monotone variant of the SW dynamics, which we call the \textit{Monotone SW dynamics}. Let $G=(V,E)$ be an arbitrary graph of maximum degree $d=O(1)$. Given an Ising configuration $\sS_t\in\Omega$ at time $t$, one step of the Monotone SW dynamics is given by:
\begin{enumerate}
	\item Consider the set of agreeing edges $E(\sS_t) = \{(v,w)\in E: \sigma_t(v) = \sigma_t(w)\}$;
	\item Independently for each edge $e\in E(\sS_t)$, delete $e$ with probability 
	$\exp(-2\beta)$ and keep $e$ with probability $1-\exp(-2\beta)$; this yields $F_t\subseteq E(\sS_t)$;
	\item For each connected component $C\subseteq V$ in the subgraph $(V,F_t)$ do the following:
	\begin{enumerate}[(i)]
		\item With probability $1/2^{|C|-1}$, choose a spin $s_C$ uniformly at random from $\{+,-\}$ and assign spin $s_C$ to all vertices in $C$; 
		\item Otherwise, with probability $1- 1/2^{|C|-1}$, every vertex in $C$ keeps the same spin as in~$\sS_t$.
	\end{enumerate}
\end{enumerate}
We denote the transition matrix of this chain by $\PM$. Notice that in step 3, all isolated vertices (components of size $1$) are updated with new random spins, and that the probability of a connected component $C$ being updated decays exponentially with the size of $C$. For comparison, recall that in the SW dynamics all connected components are updated in each step, while in the Isolated-vertex dynamics only the isolated vertices are. Using the machinery from \cite{Ullrich1}, we can show that $\PM$ is ergodic and reversible with respect to $\mu$. 
\begin{claim}\label{claim:MSW-ergodic}
	For all graphs $G$ and all $\beta>0$, the Monotone SW dynamics for the Ising model is ergodic and reversible with respect to $\mu$.
\end{claim}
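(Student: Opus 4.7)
The plan is to leverage the Edwards--Sokal joint representation framework developed in the proof of Lemma~\ref{lemma:isw:censoring} (following \cite{BCSV,Ullrich1}). First I would define a transition matrix $Q^{\textsc{m}}$ on the joint space $\JC = 2^E \times \IC$ encoding step~3 of the Monotone SW dynamics. Explicitly, for $(F,\sS),(F,\tau)\in\JC$ with $F \subseteq E(\sS)\cap E(\tau)$, set
\[
Q^{\textsc{m}}\big((F,\sS),(F,\tau)\big) = \prod_{C} w_C\big(\sS|_C,\tau|_C\big),
\]
where the product ranges over connected components $C$ of $(V,F)$, and $w_C(s,s) = 1 - 2^{-|C|}$, $w_C(s,-s) = 2^{-|C|}$ for $s\in\{+,-\}$; all other entries are $0$. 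Note that $\sS$ and $\tau$ are forced to be monochromatic on each $C$ by $F\subseteq E(\sS)\cap E(\tau)$, so this is well defined.

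Next I would verify that $\PM = T\,Q^{\textsc{m}}\,T^*$, where $T$ and $T^*$ are the operators from \eqref{eqn:T} and \eqref{eqn:T*}. This is a direct computation: $T$ produces the percolation $F_t$ (step~2), and $Q^{\textsc{m}}$ implements the component-wise update of step~3, since resampling a component $C$ uniformly with probability $2^{-(|C|-1)}$ and keeping it otherwise gives exactly the flip probability $2^{-|C|}$ and stay probability $1 - 2^{-|C|}$. Reversibility of $\PM$ with respect to $\mu$ then follows if $Q^{\textsc{m}}$ is self-adjoint w.r.t.\ $\nu$, which reduces to a per-component check: the two-state chain on $\{+,-\}$ with transitions $w_C$ is symmetric and hence reversible w.r.t.\ the uniform measure. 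Since the $\nu$-marginal on configurations compatible with a fixed $F$ is uniform over monochromatic colorings of the components, $Q^{\textsc{m}} = (Q^{\textsc{m}})^*$, and therefore
\[
\PM^* = (TQ^{\textsc{m}}T^*)^* = T(Q^{\textsc{m}})^*T^* = \PM,
\]
which gives reversibility w.r.t.\ $\mu$.

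For ergodicity I would argue directly by exhibiting a uniform lower bound on $\PM(\sS,\tau)$ for all $\sS,\tau\in\IC$. From any $\sS$, step~2 deletes \emph{every} edge in $E(\sS)$ with probability at least $\e^{-2\beta|E|} > 0$, producing $F_t = \emptyset$. In this event every vertex is its own connected component, so $|C|=1$ and each vertex is resampled with probability $1/2^{|C|-1} = 1$, uniformly from $\{+,-\}$. Hence $\PM(\sS,\tau) \geq \e^{-2\beta|E|} \cdot 2^{-n} > 0$ for all $\sS,\tau$, yielding both irreducibility and aperiodicity.

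The only delicate step is the setup: writing down $Q^{\textsc{m}}$ correctly and confirming the factorization $\PM = TQ^{\textsc{m}}T^*$, since the component-wise update probabilities $1/2^{|C|-1}$ must be carefully translated into per-component transition weights on the two monochromatic colorings. Once the factorization is in place, the remaining reversibility and ergodicity arguments are routine and essentially identical in form to what was done for $\PIso$ in Section~\ref{subsec:isw:censoring}.
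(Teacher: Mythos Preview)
Your ergodicity argument is identical to the paper's: both observe that with probability at least $\e^{-2\beta|E|}$ every vertex becomes isolated, whence every vertex is resampled uniformly and $\PM(\sS,\tau)>0$ for all $\sS,\tau$.

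For reversibility you take a genuinely different (and more economical) route. You work on the joint space $\JC$ and factor $\PM = T Q^{\textsc{m}} T^*$ via a single self-adjoint kernel $Q^{\textsc{m}}$ that encodes the per-component $2\times 2$ lazy flip; your computation of the stay/flip probabilities $1-2^{-|C|}$ and $2^{-|C|}$ is correct, and since $\nu(F,\sS)=\nu(F,\tau)$ whenever $F\subseteq E(\sS)\cap E(\tau)$, the symmetry of $Q^{\textsc{m}}$ immediately gives $(Q^{\textsc{m}})^*=Q^{\textsc{m}}$ and hence $\PM^*=\PM$. The paper instead introduces a larger \emph{marked} joint space $\MJC$ of triples $(F,\sS,\MC)$, a marking operator $S$, and a resampling kernel $K$, obtaining the five-matrix decomposition $\PM = TSK S^*T^*$ (Claims~\ref{claim:S*}--\ref{claim:MSW-decomp}); reversibility then follows from $K^*=K$. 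Your three-matrix factorization is perfectly adequate for Claim~\ref{claim:MSW-ergodic} and mirrors what was done for $\PIso$ in Section~\ref{subsec:isw:censoring}. The paper's extra layer is not needed here; it is introduced because the proof of the censoring inequality $\PM\le\PM_A$ in Lemma~\ref{lemma:MSW-censoring} requires isolating the ``which components get resampled'' randomness so that the final kernel is a product measure amenable to Harris' inequality---your $Q^{\textsc{m}}$ collapses that randomness and would not directly support that later argument.
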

\noindent
The proof of Claim~\ref{claim:MSW-ergodic} is postponed to Section \ref{sec:MSW-censoring}. We show next that, just like the Isolated-vertex dynamics, the Monotone SW dynamics also has mixing time $O(\log n)$ and relaxation time $\Theta(1)$ when $\bB<\bB_c(d)$. That is, we establish Theorem \ref{thm:msw:intro} from the introduction.
For this, 
we use the framework from Section~\ref{sec:iso} for monotone Markov chains. The first step is then to establish that the Monotone SW dynamics is indeed monotone.
\begin{lemma}
	\label{lemma:MSW-mono}
	For all graphs $G$ and all $\bB>0$, the Monotone SW dynamics for the Ising model is monotone.
\end{lemma}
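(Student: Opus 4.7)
The plan is to establish monotonicity by constructing an explicit monotone grand coupling, adapting the coupling used for the Isolated-vertex dynamics in Lemma~\ref{lemma:isw:monotonicty}. The first step will be to rewrite step~3 of the chain so that it can be driven by randomness shared across all starting configurations. Specifically, I observe that step~3 is equivalent to the following: independently sample $s_t(v) \in \{+,-\}$ uniformly at each vertex $v$, and then for each connected component $C$ of $(V, F_t)$ update every vertex in $C$ to the common value iff the $s_t(u)$ for $u \in C$ all coincide. The probability that the $s_t(u)$ agree across $C$ equals $2 \cdot 2^{-|C|} = 1/2^{|C|-1}$, and conditional on agreement the common value is uniform, so this matches step~3 exactly.

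With this reformulation, the grand coupling uses shared edge thresholds $\{r_t(e)\}_{e \in E}\subseteq [0,1]$ and shared uniform spins $\{s_t(v)\}_{v \in V}$ to drive every chain $\{X_t^\sigma\}_{t\geq 0}$ simultaneously, in direct analogy with the coupling of Lemma~\ref{lemma:isw:monotonicty}. The key technical ingredient will be the following containment lemma for connected components: if $X_t^\sigma \geq X_t^\tau$ and $v$ satisfies $X_t^\sigma(v) = X_t^\tau(v) = -$, then $C_\sigma(v) \subseteq C_\tau(v)$, where $C_\sigma(v)$ and $C_\tau(v)$ denote the components of $v$ in $(V, F_t^\sigma)$ and $(V, F_t^\tau)$, respectively; symmetrically, if $X_t^\sigma(v) = X_t^\tau(v) = +$ then $C_\tau(v) \subseteq C_\sigma(v)$. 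To prove the first statement, any path from $v$ in $F_t^\sigma$ must consist entirely of vertices with $\sigma$-spin $-$ (because the edges along it are agreeing in $\sigma$ and $\sigma(v)=-$), so by $\sigma \geq \tau$ these vertices are also $-$ in $\tau$; the edge thresholds $r_t$ are shared, so the same path lies in $F_t^\tau$ as well.

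Given the containment lemma, monotonicity will follow from a short case analysis. Assume $X_t^\sigma \geq X_t^\tau$ and suppose for contradiction that at some vertex $v$ we have $X_{t+1}^\sigma(v) = -$ and $X_{t+1}^\tau(v) = +$. Split by $(X_t^\sigma(v), X_t^\tau(v))$. If $(+,-)$, then the spin at $v$ changed in both copies, forcing $s_t(v)=-$ from the $\sigma$-side and $s_t(v)=+$ from the $\tau$-side, impossible. If $(-,-)$, then $v$ is updated in $\tau$ to $+$, so all $s_t(u)$ on $C_\tau(v)$ equal $+$; the inclusion $C_\sigma(v) \subseteq C_\tau(v)$ then forces $v$ to be updated in $\sigma$ to $+$, contradicting $X_{t+1}^\sigma(v) = -$. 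If $(+,+)$, then $v$ is updated in $\sigma$ to $-$, so all $s_t(u)$ on $C_\sigma(v)$ equal $-$, and the inclusion $C_\tau(v) \subseteq C_\sigma(v)$ forces $v$ to be updated in $\tau$ to $-$, again a contradiction.

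The main obstacle will be verifying the containment lemma cleanly, because the edge sets $F_t^\sigma$ and $F_t^\tau$ are themselves not nested: an agreeing edge in $\sigma$ need not be agreeing in $\tau$, nor vice versa. Monotonicity of the initial configurations enters precisely through the restriction that every path in $F_t^\sigma$ (resp.\ $F_t^\tau$) beginning at a $-$-vertex (resp.\ $+$-vertex) is forced to be monochromatic under $\sigma$, hence also under $\tau$ by $\sigma \geq \tau$; once this has been written out, the sharing of the $r_t$ thresholds takes care of the rest, and everything else reduces to the three-case dichotomy above.
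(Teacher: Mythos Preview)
Your proposal is correct and follows essentially the same approach as the paper: the reformulation of step~3 via per-vertex uniform spins $s_t(v)$, the grand coupling driven by shared edge thresholds $r_t(e)$ and shared spins $s_t(v)$, and the component-containment argument are all identical to the paper's proof. The only cosmetic difference is the organization of the case analysis---you split on the pair $(X_t^\sS(v),X_t^\tau(v))$ after assuming a violation, whereas the paper splits on whether $v$ is updated in each chain---but the underlying logic is the same.
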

\noindent
The proof of Lemma~\ref{lemma:MSW-mono} is provided in Section~\ref{sec:MSW-mono}.
Next, we construct a censoring for the Monotone SW dynamics as follows. For $A\subseteq V$, let $\PM_A$ be the transition matrix for the Markov chain that given a configuration $\sS_t\in\Omega$ generates $\sS_{t+1}$ as follows:
\begin{enumerate}
	\item Consider the set of agreeing edges $E(\sS_t) = \{(v,w)\in E: \sigma_t(v) = \sigma_t(w)\}$;
	\item Independently for each edge $e\in E(\sS_t)$, delete $e$ with probability 
	$\exp(-2\beta)$ and keep $e$ with probability $1-\exp(-2\beta)$; this yields $F_t\subseteq E(\sS_t)$;
	\item For each connected component $C\subseteq V$ of the subgraph $(V,F_t)$ that is completely contained in $A$ (i.e., $C\subseteq A$), do the following: 
	\begin{enumerate}[(i)]
		\item With probability $1/2^{|C|-1}$, choose a spin $s_C$ uniformly at random from $\{+,-\}$ and assign spin $s_C$ to all vertices in $C$; 
		\item Otherwise, with probability $1- 1/2^{|C|-1}$, every vertex in $C$ keeps the same spin as in~$\sS_t$;
	\end{enumerate}
	All other vertices keep the same spin as in $\sS_t$.
\end{enumerate}
\noindent
We emphasize that in step 3 only the components completely contained in $A$ may be assigned new random spins.

\begin{lemma}
	\label{lemma:MSW-censoring}
	The collection of matrices $\{\PM_{A}\}_{A \subseteq V}$ is a censoring for the Monotone SW dynamics.
\end{lemma}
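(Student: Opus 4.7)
My plan is to mirror the proof of Lemma~\ref{lemma:isw:censoring} for the Isolated-vertex dynamics, working in the Edwards--Sokal joint space $\JC = 2^E \times \Omega$ with measure $\nu$ from (\ref{eqn:joint-measure}) and using the operators $T$, $T^*$ from (\ref{eqn:T}) and (\ref{eqn:T*}). The proof naturally splits into three parts: (i) reversibility of $\PM_A$ with respect to $\mu$; (ii) monotonicity of $\PM_A$; and (iii) the key comparison $\PM \leq \PM_A$. The first two parts are largely routine; the essential new idea is needed for (iii).

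First I introduce the joint-space matrix
\[
Q^M((F,\sigma),(F,\tau)) \;=\; \prod_{C} M_C(\sigma(C),\tau(C)),
\]
where the product runs over the connected components $C$ of $(V,F)$ and $M_C = (1-p_C)\,I_C + p_C\,U_C$ with $p_C = 1/2^{|C|-1}$ and $U_C$ the uniform-projection operator on the (constant) spin of $C$. Let $Q^M_A$ be defined analogously but with $I_C$ in place of $M_C$ for every component $C \not\subseteq A$. A direct computation gives $\PM = T Q^M T^*$ and $\PM_A = T Q^M_A T^*$. Each $M_C$ is symmetric in its two arguments and preserves the uniform conditional measure on the spin of $C$; consequently $(Q^M_A)^* = Q^M_A$ in $L_2(\nu)$, and so $\PM_A^* = (T Q^M_A T^*)^* = T Q^M_A T^* = \PM_A$, giving reversibility for (i). For (ii), the same grand coupling used in Lemma~\ref{lemma:MSW-mono} applies, with the simple modification that the component update is skipped whenever the component is not contained in $A$.

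The main content is (iii). The key structural observation is the factorization $Q^M = Q^M_A \cdot Q^M_{\mathrm{out}}$, where $Q^M_{\mathrm{out}}$ applies $M_C$ only to components $C \not\subseteq A$; the two factors act on disjoint sets of components and therefore commute. Using this together with $(Q^M_A)^* = Q^M_A$, for any increasing positive $f_1,f_2$ on $\Omega$ one obtains
\[
\langle f_1, \PM f_2 \rangle_\mu \;=\; \langle Q^M_A T^* f_1,\, Q^M_{\mathrm{out}} T^* f_2 \rangle_\nu, \qquad \langle f_1, \PM_A f_2 \rangle_\mu \;=\; \langle Q^M_A T^* f_1,\, T^* f_2 \rangle_\nu.
\]
Both $T^* f_2$ and $Q^M_A T^* f_1$ are increasing functions on $\JC$ (the latter by the analog of Claim~\ref{claim:censoring:increasing}, since each $M_C$ is a convex combination of monotonicity-preserving operators $I$ and $U_C$). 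Consequently, the desired inequality $\PM \leq \PM_A$ reduces to the FKG-type bound
\[
\langle g_1, Q^M_{\mathrm{out}} g_2 \rangle_\nu \;\leq\; \langle g_1, g_2 \rangle_\nu \qquad \text{for all increasing positive } g_1,g_2.
\]

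This FKG-type inequality is where the proof departs from the Iso argument, which exploited the idempotence identities $Q = Q^2 = Q_A Q Q_A$ that are unavailable here, and this is the main obstacle. My approach is to condition on $F$: then $\nu(\cdot \mid F)$ is the uniform product measure over the (constant) spins of the components of $(V,F)$. For a single component $C$, a two-point FKG computation on $\{+,-\}$ yields $\langle g_1, U_C g_2 \rangle \leq \langle g_1, g_2 \rangle$, and hence $\langle g_1, M_C g_2 \rangle \leq \langle g_1, g_2 \rangle$, for any increasing $g_1,g_2$. Since the operators $\{M_C : C \not\subseteq A\}$ commute and each preserves monotonicity, I can iterate this single-component bound over all $C \not\subseteq A$ (successively passing $M_C$'s across the inner product using increasingness of the intermediate functions), then average over $F$, to obtain the displayed FKG-type inequality. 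Substituting back yields $\langle f_1, \PM f_2 \rangle_\mu \leq \langle f_1, \PM_A f_2 \rangle_\mu$, completing the proof of $\PM \leq \PM_A$ and hence of the lemma.
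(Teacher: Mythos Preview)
Your proof is correct and takes a genuinely different route from the paper's. The paper does \emph{not} work in $\JC$ for this lemma; instead it introduces a larger ``marked joint'' space $\MJC = \{(F,\sigma,\MC): (F,\sigma)\in\JC,\ \MC\subseteq\mathcal{C}(F)\}$ together with a marking operator $S$ (and its adjoint $S^*$), and decomposes $\PM_A = T S K_A S^* T^*$ where $K_A$ resamples the spins of marked components contained in $A$. The point of the extra coordinate $\MC$ is precisely to recover the idempotence $K_A = K_A^2 = K_A^*$ and the identity $K = K_A K K_A$ that your $Q^M$ lacks; with these in hand the paper's argument becomes a verbatim replay of the Isolated-vertex proof, using Harris' inequality for the product distribution over the spins of the marked components.

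Your approach sidesteps the lack of idempotence differently: you use only the commutative factorization $Q^M = Q^M_A\, Q^M_{\mathrm{out}}$, move $Q^M_A$ across the inner product by self-adjointness, and then reduce $\PM \le \PM_A$ to the inequality $\langle g_1, Q^M_{\mathrm{out}} g_2\rangle_\nu \le \langle g_1, g_2\rangle_\nu$ for increasing $g_1,g_2$. Your iterated two-point FKG argument for this (conditioning on $F$, applying Chebyshev's sum inequality to each $U_C$, and peeling off the $M_C$'s one at a time using that each $M_C$ preserves increasingness) is sound. The trade-off is that the paper's lifting gives a cleaner structural parallel with Lemma~\ref{lemma:isw:censoring} (one Harris application in a product space), whereas your argument is more elementary---no auxiliary space or measure---but requires the extra iteration step. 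Both yield the same conclusion with comparable effort.
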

\noindent
The proof of Lemma~\ref{lemma:MSW-censoring} is provided in Section~\ref{sec:MSW-censoring}.

Theorem \ref{thm:msw:intro} then follows immediately from Lemma~\ref{lemma:MSW-mono}, Lemma~\ref{lemma:MSW-censoring} and Theorem~\ref{thm:mixingtime:new}.

\begin{proof}[Proof of Theorem~\ref{thm:msw:intro}]
	By Lemma \ref{lemma:MSW-mono} the Monotone SW dynamics is monotone, and by Lemma \ref{lemma:MSW-censoring} the collection $\{\PM_A\}_{A \subseteq V}$ is a censoring for $\PM$. Lemma \ref{lemma:isw:SSM} shows that there exists a constant $R$ such that ASSM holds when $\bB<\bB_c(d)$ (see Section~\ref{sec:iso}). Furthermore, a crude coupling argument, analogous to that in the proof of Theorem~\ref{thm:iso:intro}, implies that $\taumix(\PM_{B(v,R)})=O(1)$ for all $v\in V$. The result then follows from Theorem \ref{thm:mixingtime:new}.
\end{proof}

\subsection{Monotonicity of the Monotone SW dynamics}\label{sec:MSW-mono}
In this section, we prove Lemma~\ref{lemma:MSW-mono}. The proof is similar to that of Lemma~\ref{lemma:isw:monotonicty} for the Isolated-vertex dynamics.

\begin{proof}[Proof of Lemma~\ref{lemma:MSW-mono}]
	Step 3 of the Monotone SW dynamics is equivalent to the following two steps:
	\begin{enumerate}
		\item[3$'$.] For each vertex $v\in V$, choose a spin $s_v$ uniformly at random from $\{+,-\}$;
		\item[4$'$.] For each connected component $C=\{v_1,\dots,v_k\}$ in the subgraph $(V,F_t)$ where $k=|C|$: 
		\begin{enumerate}[(i)]
			\item If $s_{v_1}=s_{v_2}=\dots=s_{v_k}$, then assign spin $s_{v_i}$ to $v_i$ for all $i$; 
			\item Otherwise, each $v_i$ keeps the same spin as in $\sS_t$.
		\end{enumerate}
	\end{enumerate}
	In step 4$'$, the probability that a connected component $C$ is updated with the new spin is exactly $1/2^{|C|-1}$; thus, step 3$'$ and 4$'$ are equivalent to step 3 of the Monotone SW dynamics.
	
	Let $\{X_t^\sS\}_{t\geq 0}$ be an instance of the Monotone SW dynamics starting from $\sS\in\Omega$; i.e., $X_0^\sS=\sS$. We construct a grand coupling for the Monotone SW dynamics as follows. At time $t$:
	\begin{enumerate}
		\item For every edge $e \in E$, pick a number $r_{t}(e)$ uniformly at random from $[0,1]$;
		\item For every vertex $v\in V$, choose a uniform random spin $s_t(v)$ from $\{+,-\}$; 
		\item For every $\sigma \in \Omega$:
		\begin{enumerate}[(i)]
			\item Obtain $F_t^\sigma \subseteq E$ by including the edge $e = \{u,v\}$ in $F_t^\sigma$ iff $X_t^\sS(u) = X_t^\sS(v)$ and $r_t(e)\leq 1-\e^{-2\bB}$;
			\item For every connected component $C=\{v_1,\dots,v_k\}$ in the subgraph $(V,F_t^\sS)$ and every $i=1,\dots,k$, set $X_{t+1}^\sS(v_i) = s_t(v_i)$ if $s_t(v_1)=\dots=s_t(v_k)$; otherwise, set $X_{t+1}^\sS(v_i)=X_t^\sS(v_i)$.
		\end{enumerate}
	\end{enumerate}
	We show next that this grand coupling is monotone.
	
	Suppose $X_t^\sS\geq X_t^\tau$. We need to show that $X_{t+1}^\sS \geq X_{t+1}^\tau$. Let $v\in V$ be arbitrary. If the spin of $v$ is updated in step 3(ii) in both $X_{t+1}^\sS$ and $X_{t+1}^\tau$, then $X_{t+1}^\sS(v)=s_t(v)=X_{t+1}^\tau(v)$. Similarly, if $v$ keeps its original spin in step 3(ii) in both $X_{t+1}^\sS$ and $X_{t+1}^\tau$, then $X_{t+1}^\sS (v) =X_t^\sS(v) \geq X_t^\tau(v)=X_{t+1}^\tau(v)$. 
	
	It remains to consider the cases where $v$ is updated in exactly one of $X_{t+1}^\sS$ and $X_{t+1}^\tau$. Assume first that $v$ is updated in $X_{t+1}^\sS$ but keeps its original spin in $X_{t+1}^\tau$. Suppose for sake of contradiction that $X_{t+1}^\sS(v) < X_{t+1}^\tau(v)$; i.e., $X_{t+1}^\sS(v)=-, X_{t+1}^\tau(v)=+$. Then, $X_t^\sS(v) = X_t^\tau(v)=+$ by assumption. Let $C_t^\sS(v)$ (resp., $C_t^\tau(v)$) be the connected component in the subgraph $(V,F_t^\sS)$ (resp., $(V,F_t^\tau)$) that contains $v$. Since $X_t^\sS\geq X_t^\tau$, all vertices assigned ``+'' in $X_t^\tau$ are also ``+'' in $X_t^\sS$; thus, by the way edges are coupled we have $C_t^\sS(v) \supseteq C_t^\tau(v)$. The fact that $v$ is updated in $X_{t+1}^\sS$ implies that $s_t(u)=s_t(v)$ for all $u\in C_t^\sS(v)$, and in particular, for all $u\in C_t^\tau(v)$. Then, $v$ should also be updated in $X_{t+1}^\tau$, which contradicts our assumption. The case in which $v$ is updated in $X_{t+1}^\tau$ but not in $X_{t+1}^\sS$ follows by an analogous argument.
\end{proof}
\noindent
In similar manner, we can show that $\PM_A$ is also monotone for all $A\subseteq V$.
\begin{coro}\label{coro:MSW-mono}
	$\PM_A$ is monotone for all $A\subseteq V$.
\end{coro}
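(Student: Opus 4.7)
The plan is to obtain $\PM_A$ from exactly the same grand coupling constructed in the proof of Lemma~\ref{lemma:MSW-mono}, with a single modification in step 3(ii): for every $\sigma\in\Omega$ and every connected component $C$ of $(V,F_t^\sigma)$, we only attempt the component update when $C\subseteq A$; otherwise, every vertex of $C$ retains its current spin $X_t^\sigma$. Since the randomness $\{r_t(e)\}_{e\in E}$ and $\{s_t(v)\}_{v\in V}$ is still shared across all $\sigma$, and the decision to update a component $C$ is a deterministic function of $F_t^\sigma$ (the condition $C\subseteq A$) together with the shared spins $\{s_t(u):u\in C\}$, this is a valid grand coupling for $\PM_A$.

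To verify monotonicity, assume $X_t^\sigma \ge X_t^\tau$ and fix an arbitrary $v\in V$. The three routine cases from the proof of Lemma~\ref{lemma:MSW-mono} go through verbatim: if $v$ is updated in both chains then $X_{t+1}^\sigma(v)=s_t(v)=X_{t+1}^\tau(v)$; and if $v$ is updated in neither then $X_{t+1}^\sigma(v)=X_t^\sigma(v)\ge X_t^\tau(v)=X_{t+1}^\tau(v)$. The only case requiring care is the mixed one in which, say, $v$ is updated in $X_{t+1}^\sigma$ but not in $X_{t+1}^\tau$; suppose for contradiction that $X_{t+1}^\sigma(v)=-$ while $X_{t+1}^\tau(v)=+$, which forces $X_t^\sigma(v)=X_t^\tau(v)=+$.

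As in Lemma~\ref{lemma:MSW-mono}, since both chains assign $+$ to $v$, monotonicity of the edge-coupling on the ``$+$''-component yields $C_t^\sigma(v)\supseteq C_t^\tau(v)$, where $C_t^\sigma(v)$ and $C_t^\tau(v)$ denote the components of $v$ in $(V,F_t^\sigma)$ and $(V,F_t^\tau)$ respectively. The fact that $v$ was updated in $X_{t+1}^\sigma$ tells us two things: (a)~$C_t^\sigma(v)\subseteq A$, and (b)~$s_t(u)=s_t(v)$ for all $u\in C_t^\sigma(v)$. From (a) and the containment above we get $C_t^\tau(v)\subseteq A$, and restricting (b) to $C_t^\tau(v)\subseteq C_t^\sigma(v)$ gives $s_t(u)=s_t(v)$ for all $u\in C_t^\tau(v)$. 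But these are precisely the conditions that would force $v$ to be updated in $X_{t+1}^\tau$, contradicting our assumption. The symmetric mixed case (with $v$ updated only in $X_{t+1}^\tau$) is handled by the analogous argument on the ``$-$''-side, using that $C_t^\tau(v)\supseteq C_t^\sigma(v)$ when both chains assign $-$ to $v$.

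I do not expect any real obstacle here: the corollary is essentially a bookkeeping extension of Lemma~\ref{lemma:MSW-mono}, and the only new ingredient is the transfer of the containment $C_t^\sigma(v)\subseteq A$ to $C_t^\tau(v)\subseteq A$, which comes for free from the component-containment already established in the proof of the lemma.
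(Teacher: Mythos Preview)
Your proposal is correct and follows essentially the same argument as the paper's proof: both modify the grand coupling of Lemma~\ref{lemma:MSW-mono} by adding the condition $C\subseteq A$ in step~3(ii), and both derive the contradiction in the mixed case from the containment $C_t^\tau(v)\subseteq C_t^\sigma(v)\subseteq A$ together with the agreement of the shared spins on $C_t^\sigma(v)$.
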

\begin{proof}
	We use the grand coupling from the proof of Lemma~\ref{lemma:MSW-mono} with a slight modification. 
	Namely, in step 3(ii) a component $C$ is updated under the additional condition that $C\subseteq A$. Suppose $\{X_t^\sS\},\{X_t^\tau\}$ are two instances of $\PM_A$ starting from $\sS$ and $\tau$, respectively, and that $X_t^\sS\geq X_t^\tau$. Let $v\in V$. If $v$ is updated in both $X_{t+1}^\sS$ and $X_{t+1}^\tau$, or it is updated in neither of the two chains, then $X_{t+1}^\sS(v) \geq X_{t+1}^\tau(v)$. Now, suppose $v$ is updated in $X_{t+1}^\sS$ but not in $X_{t+1}^\tau$, and for the sake of contradiction that $X_{t+1}^\sS(v) < X_{t+1}^\tau(v)$. Then, $X_{t+1}^\sS(v)=-, X_{t+1}^\tau(v)=+$ and $X_t^\sS(v) = X_t^\tau(v)=+$. This implies that all vertices in $C_t^\sS(v)$ (the connected component in $(V,F_t^\sS)$ containing $v$) receive the same uniform random spin and that $C_t^\sS(v)\subseteq A$. Since $C_t^\tau(v) \subseteq C_t^\sS(v)$, the same property holds for $C_t^\tau(v)$. Thus, $v$ is also updated in $X_{t+1}^\tau$, leading to a contradiction. 
	The case when $v$ is updated in $X_{t+1}^\tau$ but not in $X_{t+1}^\sS$ follows by an analogous argument.
\end{proof}

\subsection{Censoring for the Monotone SW dynamics}\label{sec:MSW-censoring}
In this section we prove Lemma~\ref{lemma:MSW-censoring}. The ideas in this proof are similar to those in the proof of Lemma~\ref{lemma:isw:censoring}. Namely, we introduce a ``joint'' configuration space denoted by $\MJC$. Configurations in $\MJC$ are triples $(F,\sS,\MC)$ where $F$ is a subset of the edges, $\sS$ a spin assignment to the vertices and $\MC$ a set of ``marked'' connected components of the subgraph $(V,F)$. We show that the transition matrix $\PM$ of the Monotone SW dynamics can then be decomposed as the product of five matrices, four of which correspond to projections or liftings between the spaces $\Omega$, $\JC$ and $\MJC$ and one that corresponds to a trivial resampling in $\MJC$.

\begin{proof}[Proof of Lemma~\ref{lemma:MSW-censoring}]
	For all $A\subseteq V$, we need to show that $\PM_A$ is reversible with respect to $\mu$, monotone and that $\PM\leq \PM_A$. Monotonicity of $\PM_A$ follows from Corollary~\ref{coro:MSW-mono}. To prove the other two facts, we establish a decomposition of the matrices $\PM_A$ and $\PM$ as a product of simpler matrices, in similar fashion to what was done for the matrices $\PIso_A$ and $\PIso$ in Section~\ref{subsec:isw:censoring}. 
	
	Recall that $\JC= 2^E\times \Omega$ is the joint configuration space. 
	For $F\subseteq E$, let $\mathcal{C}(F)$ denote the set of all connected components of the subgraph $(V,F)$. We define the \textit{marked joint configuration space} $\MJC\subseteq 2^E\times \Omega\times 2^{2^V}$ by
	\[
		\MJC = \{ (F,\sS,\MC): (F,\sS)\in\JC,\MC\subseteq \mathcal{C}(F) \}.
	\]
	Connected components in $\MC$ are said to be \textit{marked}. 
	Observe that both $\PM$ and $\PM_A$ ``lift'' a configuration from $\Omega$ to one in $\JC$ (by adding the edges in step 2), which is then lifted to a configuration in $\MJC$ (by marking the components that will be updated in step 3).
	
	Let $\num$ be the \textit{marked joint measure} on $\MJC$ where each $(F,\sS,\MC)\in\MJC$ is assigned probability
	\[
		\num(F,\sS,\MC) = \nu(F,\sS) \prod_{C\in\MC} \frac{1}{2^{|C|-1}} \prod_{C\in\mathcal{C}(F)\bs\MC} \Big( 1-\frac{1}{2^{|C|-1}} \Big).
	\]
	Recall that $\nu$ is the joint Edwards-Sokal measure defined in \eqref{eqn:joint-measure}. Drawing a sample from the marked joint measure $\num$ can be achieved in the following way: first draw a sample $(F,\sS)$ from the joint measure $\nu$, and then for each connected component $C\in\mathcal{C}(F)$ independently mark $C$ (i.e., include $C$ in $\MC$) with probability $1/2^{|C|-1}$. 
	
	Let $S$ be the $|\JC|\times|\MJC|$ matrix given by
	\[
		S\big((F_1,\sS),(F_2,\tau,\MC)\big) = \1\big((F_1,\sS)=(F_2,\tau)\big) \prod_{C\in\MC} \frac{1}{2^{|C|-1}} \prod_{C\in\mathcal{C}(F_1)\bs\MC} \Big( 1-\frac{1}{2^{|C|-1}} \Big),
	\]
	where $(F_1,\sS)\in\JC$ and $(F_2,\tau,\MC)\in\MJC$. Hence, the matrix $S$ corresponds to the process of marking the connected components of a joint configuration as described above. Let $L_2(\num)$ denote the Hilbert space $(\mathbb{R}^{|\MJC|}, \langle\cdot,\cdot\rangle_{\num})$. We can view $S$ as an operator from $L_2(\num)$ to $L_2(\nu)$. The adjoint operator of $S$ can be obtained straightforwardly.
	\begin{claim}\label{claim:S*}
		The adjoint operator $S^*:L_2(\nu)\to L_2(\num)$ of $S$ is given by the $|\MJC|\times|\JC|$ matrix
		\[
		S^*\big((F_2,\tau,\MC),(F_1,\sS)\big) = \1\big((F_2,\tau)=(F_1,\sS)\big).
		\]
	\end{claim}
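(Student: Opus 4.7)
The plan is to verify the defining identity of the adjoint, namely $\langle Sf, g\rangle_\nu = \langle f, S^*g\rangle_{\num}$ for all $f \in \mathbb{R}^{|\MJC|}$ and $g \in \mathbb{R}^{|\JC|}$, directly by expansion; since the adjoint of an operator between finite-dimensional inner product spaces is unique, this is sufficient. The computation should go through cleanly because the definition of $\num$ was chosen precisely to absorb the marking weights appearing in $S$.

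First I would expand the left-hand side. Writing $p_F(\MC) := \prod_{C \in \MC} 2^{-(|C|-1)} \prod_{C \in \mathcal{C}(F)\setminus \MC}(1 - 2^{-(|C|-1)})$ for the marking weight, the definition of $S$ together with the indicator $\1((F_1,\sS)=(F_2,\tau))$ collapses one summation and gives
\[
\langle Sf, g\rangle_\nu \;=\; \sum_{(F,\sS)\in \JC} \nu(F,\sS)\, g(F,\sS) \sum_{\MC \subseteq \mathcal{C}(F)} p_F(\MC)\, f(F,\sS,\MC).
\]
Next I would expand the right-hand side using the proposed formula for $S^*$. Its indicator $\1((F_2,\tau)=(F_1,\sS))$ collapses the sum over $\JC$ so that $(S^*g)(F,\sS,\MC) = g(F,\sS)$. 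Substituting the definition $\num(F,\sS,\MC) = \nu(F,\sS)\, p_F(\MC)$ yields
\[
\langle f, S^*g\rangle_{\num} \;=\; \sum_{(F,\sS,\MC)\in \MJC} \nu(F,\sS)\, p_F(\MC)\, f(F,\sS,\MC)\, g(F,\sS),
\]
and regrouping this as $\sum_{(F,\sS)\in \JC}\sum_{\MC \subseteq \mathcal{C}(F)}$ (which is legitimate by the very definition of $\MJC$) matches the expression obtained for $\langle Sf,g\rangle_\nu$ term by term.

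There is no substantial obstacle here: the claim is a bookkeeping exercise whose only content is to observe that $\num$ was constructed by multiplying $\nu$ by exactly the marking weights $p_F(\MC)$ that appear inside $S$, so that forgetting the marks (the action of $S^*$) is the adjoint of the weighted marking procedure (the action of $S$). The one point that deserves a brief sentence in the write-up is that both inner products are over finite probability spaces, so the identity $\langle Sf,g\rangle_\nu = \langle f,S^*g\rangle_{\num}$ for all $f,g$ uniquely determines $S^*$, completing the verification.
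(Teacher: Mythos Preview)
Your proposal is correct and follows essentially the same approach as the paper: both verify the adjoint identity by expanding the two inner products and using that $\num(F,\sS,\MC) = \nu(F,\sS)\,p_F(\MC)$, so the marking weight in $S$ is absorbed into $\num$. The only cosmetic difference is that the paper isolates the entrywise identity $\nu(F_1,\sS)\,S\big((F_1,\sS),(F_2,\tau,\MC)\big) = \num(F_2,\tau,\MC)\,S^*\big((F_2,\tau,\MC),(F_1,\sS)\big)$ before summing, whereas you compare the full sums directly.
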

	\noindent
	Note that $S^*$ corresponds to dropping all the marks from the components to obtain a configuration in $\JC$.
	
	For $A\subseteq V$ and $F\subseteq E$, let $\mathcal{C}_A(F)$ be the subset of $\mathcal{C}(F)$ that contains all the connected components completely contained in $A$; let $\mathcal{C}_A^c(F)=\mathcal{C}(F)\bs \mathcal{C}_A(F)$. For $F\subseteq E$ and $\MC\subseteq \mathcal{C}(F)$, let $\MC^c=\mathcal{C}(F)\bs \MC$ be the set of all unmarked connected components. Also recall that for $\sS\in\Omega$, $E(\sigma) = \{\{u,v\} \in E: \sigma(u) = \sigma(v)\}$. 
	Given $A \subseteq V$, we define an $|\MJC|\times|\MJC|$ matrix $K_A$ indexed by the configurations of $\MJC$, which corresponds to resampling all \textit{marked} connected components that are completely contained in $A$. That is, for $(F_1,\sS,\MC_1),(F_2,\tau,\MC_2)\in\MJC$:
	\begin{align*}
		K_A\big( (F_1,\sS,\MC_1),(F_2,\tau,\MC_2) \big) ={}& 
		\1(F_1=F_2) \1(F_1\subseteq E(\sS)\cap E(\tau)) \1(\MC_1=\MC_2)\\
		&\1\big( \sS(\mathcal{C}_A^c(F_1)) = \tau(\mathcal{C}_A^c(F_1)) \big)
		\1\big( \sS(\MC_1^c) = \tau(\MC_1^c) \big) \cdot 
		2^{-|\mathcal{C}_A(F_1)\cap \MC_1|}
	\end{align*}
	where for a collection $\mathcal{U}$ of subsets of vertices, we write $\sS(\mathcal{U}) = \tau(\mathcal{U})$ if $\sS(U)=\tau(U)$ for all $U\in\mathcal{U}$.
	
	The matrix $K_A$ defines an operator from $L_2(\num)$ to $L_2(\num)$. In the following claim, which is proved later, several key properties of the matrix $K_A$ are established.
	\begin{claim}\label{claim:K*}
		For all $A \subseteq V$, $K_A=K_A^2=K_A^*$.
	\end{claim}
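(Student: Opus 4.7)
The plan is to verify both identities $K_A = K_A^*$ and $K_A = K_A^2$ directly from the defining formula for $K_A$.

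For self-adjointness, I will establish detailed balance, namely $\num(\omega_1) K_A(\omega_1, \omega_2) = \num(\omega_2) K_A(\omega_2, \omega_1)$ for all $\omega_1, \omega_2 \in \MJC$. The transition weight $K_A(\omega_1, \omega_2)$ is manifestly symmetric under swapping the spin components: the constraints $F_1 = F_2$, $\MC_1 = \MC_2$, $F_1 \subseteq E(\sS) \cap E(\tau)$, $\sS(\mathcal{C}_A^c(F_1)) = \tau(\mathcal{C}_A^c(F_1))$, and $\sS(\MC_1^c) = \tau(\MC_1^c)$ are all symmetric in $\sS$ and $\tau$, and the scalar factor $2^{-|\mathcal{C}_A(F_1) \cap \MC_1|}$ depends only on $F_1$ and $\MC_1$. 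So it suffices to show $\num(\omega_1) = \num(\omega_2)$ whenever $K_A(\omega_1, \omega_2) > 0$. In that case $F_1 = F_2 = F$ and $\MC_1 = \MC_2 = \MC$, so the product factors $\prod_{C \in \MC} 1/2^{|C|-1}$ and $\prod_{C \in \MC^c}(1 - 1/2^{|C|-1})$ in the definition of $\num$ match for $\omega_1$ and $\omega_2$, and $\nu(F, \sS) = \frac{1}{Z_{\textsc{j}}} p^{|F|}(1-p)^{|E \setminus F|} \1(F \subseteq E(\sS))$ depends on the spin component only through the indicator $\1(F \subseteq E(\sS))$, which equals $1$ at both endpoints by assumption.

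For idempotence, fix $\omega_1 = (F_1, \sS, \MC_1)$ and $\omega_3 = (F_3, \tau, \MC_3)$ and expand $K_A^2(\omega_1, \omega_3) = \sum_{\omega_2 \in \MJC} K_A(\omega_1, \omega_2) K_A(\omega_2, \omega_3)$. Any contributing $\omega_2$ must have $F_2 = F_1 = F_3$ and $\MC_2 = \MC_1 = \MC_3$; and by transitivity of the two agreement constraints, $\sS$ and $\tau$ must also agree on $\mathcal{C}_A^c(F_1)$ and on $\MC_1^c$, which is exactly the condition for $K_A(\omega_1, \omega_3) > 0$. So the off-diagonal case is settled immediately. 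When the single-step transition $\omega_1 \to \omega_3$ is allowed, the valid $\omega_2$ are parametrized by independently choosing one of two spins for each component in $\mathcal{C}_A(F_1) \cap \MC_1$: spins on all other components are forced to agree with both $\sS$ and $\tau$, and spins within each component of $(V, F_1)$ are forced to be constant by the constraint $F_1 \subseteq E(\omega_2)$. There are thus $2^{|\mathcal{C}_A(F_1) \cap \MC_1|}$ valid intermediates, each contributing $2^{-2|\mathcal{C}_A(F_1) \cap \MC_1|}$, summing to $2^{-|\mathcal{C}_A(F_1) \cap \MC_1|} = K_A(\omega_1, \omega_3)$.

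The argument is pure bookkeeping rather than anything conceptually deep; the one point worth flagging is that in any valid intermediate $\omega_2$ the spin assignment must be constant on every connected component of $(V, F_1)$ (because $F_1 \subseteq E(\omega_2)$), so the count of intermediates is exactly $2^{|\mathcal{C}_A(F_1) \cap \MC_1|}$ and not larger.
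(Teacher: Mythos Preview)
Your proof is correct and follows essentially the same approach as the paper. For self-adjointness you argue exactly as the paper does: symmetry of $K_A$ together with $\num(\omega_1)=\num(\omega_2)$ whenever $K_A(\omega_1,\omega_2)>0$. For idempotence the paper dispenses with the matter in one probabilistic sentence (assigning independent uniform spins to each marked component in $A$ twice is the same as doing it once), whereas you spell out the explicit count of intermediates; your version is more detailed but amounts to the same computation.
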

	\noindent
	For ease of notation we set $K = K_V$. Recall that the matrix $T$ defined in \eqref{eqn:T} is an operator from $L_2(\nu)$ to $L_2(\mu)$ and $T^*$ defined in \eqref{eqn:T*} is its adjoint operator. The following claim is an analogue of Fact~4.5 in \cite{BCSV} for the Monotone SW dynamics.
	\begin{claim}\label{claim:MSW-decomp}
		For all $A \subseteq V$, $\PM_A = TSK_AS^*T^*$.
	\end{claim}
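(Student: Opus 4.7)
The plan is to verify the identity by giving a probabilistic interpretation of each of the five factors and then checking that their composition implements a single transition of $\PM_A$. Viewed as stochastic transitions between the appropriate configuration spaces: $T:\Omega\to\JC$ implements the percolation step (each edge of $E(\sigma)$ independently kept with probability $p$); $S:\JC\to\MJC$ marks each component $C\in\mathcal{C}(F)$ independently with probability $1/2^{|C|-1}$; $K_A:\MJC\to\MJC$ resamples a uniform spin for every marked component that is contained in $A$ (leaving all other components intact); and $S^*,T^*$ are the deterministic projections that drop the marks and then the edge set, respectively. A short computation confirms that each factor is stochastic; the only nontrivial check is for $K_A$, where the constraints leave exactly one free uniform spin per marked component in $A$, so the row sums equal $2^{|\mathcal{C}_A(F_1)\cap\MC_1|}\cdot 2^{-|\mathcal{C}_A(F_1)\cap\MC_1|}=1$.

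Given this interpretation, I would expand $(TSK_AS^*T^*)(\sigma,\sigma')$ for arbitrary $\sigma,\sigma'\in\Omega$. The delta factors in $T$, $S$, $S^*$, $T^*$ collapse all intermediate summations except those over $F\subseteq E(\sigma)$ and $\MC\subseteq \mathcal{C}(F)$. The remaining expression factorizes over components of $(V,F)$: for any component $C\not\subseteq A$, the contribution reduces to $\1(\sigma|_C=\sigma'|_C)$, since $K_A$ never touches such $C$ and summing over whether it is marked gives $(1-1/2^{|C|-1})+1/2^{|C|-1}=1$; while for $C\subseteq A$, summing over whether $C$ is marked yields
\begin{equation*}
\Big(1-\tfrac{1}{2^{|C|-1}}\Big)\,\1\bigl(\sigma'|_C=\sigma|_C\bigr) \;+\; \tfrac{1}{2^{|C|-1}}\cdot\tfrac{1}{2}\,\1\bigl(\sigma'|_C\text{ is monochromatic}\bigr),
\end{equation*}
which is precisely the probability that component $C$ evolves to $\sigma'|_C$ under step 3 of $\PM_A$. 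Combined with the percolation weight $p^{|F|}(1-p)^{|E(\sigma)\setminus F|}$ contributed by $T$ and summing over $F\subseteq E(\sigma)$, this yields exactly $\PM_A(\sigma,\sigma')$.

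The main bookkeeping obstacle will be tracking the compatibility constraints inside $K_A$, in particular $\1(F_1\subseteq E(\sigma)\cap E(\tau))$ together with the two component-agreement indicators $\sigma(\mathcal{C}_A^c(F_1))=\tau(\mathcal{C}_A^c(F_1))$ and $\sigma(\MC_1^c)=\tau(\MC_1^c)$. These turn out to be automatically satisfied: since $F\subseteq E(\sigma)$ already forces each component of $(V,F)$ to be monochromatic under $\sigma$, assigning a fresh uniform spin per marked component in $A$ preserves monochromaticity, so $F\subseteq E(\sigma')$ holds for free; and the two agreement indicators encode exactly the statement that unmarked components and components outside $A$ are not modified, which matches the probabilistic description above. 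Once this bookkeeping is carried out, the identification with $\PM_A$ is immediate, and the argument reduces, as in Fact~4.5 of \cite{BCSV}, to checking that the factorized product of per-component weights agrees with the per-component update rule of $\PM_A$.
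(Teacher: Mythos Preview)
Your proposal is correct and follows essentially the same route as the paper's proof: both expand the matrix product $TSK_AS^*T^*$ entrywise, use the delta factors in $T$, $S$, $S^*$, $T^*$ to collapse the intermediate sums to a sum over $F\subseteq E(\sigma)$ and $\MC\subseteq\mathcal{C}(F)$, and then identify the resulting expression with the definition of $\PM_A$. The only cosmetic differences are that the paper writes out the case $A=V$ and remarks that general $A$ is analogous, while you treat general $A$ directly and organize the verification as an explicit per-component factorization; these are the same computation presented slightly differently.
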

	\noindent
	The reversibility of $\PM_A$ with respect to $\mu$ follows immediately from Claims~\ref{claim:K*} and \ref{claim:MSW-decomp} for all $A \subseteq V$: 
	\[
		\PM_A^* = (TSK_AS^*T^*)^* = TSK_AS^*T^* = \PM_A,
	\]
	so it is self-adjoint and thus reversible.
	
	To establish that $\PM \le \PM_A$, it is sufficient to show that for every pair of increasing and positive functions $f_1,f_2: \R^{|\Omega|} \rightarrow \R$ on $\Omega$, we have
	\begin{equation}\label{eq:MSW-censoring}
	\langle f_1, \PM f_2 \rangle_\mu \leq \langle f_1, \PM_A f_2 \rangle_\mu.
	\end{equation} 
	Let $\hat{f}_1=K_AS^*T^*f_1$ and $\hat{f}_2=K_AS^*T^*f_2$. Then, using Claims~\ref{claim:K*} and \ref{claim:MSW-decomp},
	\begin{equation*}
	\langle f_1, \PM_A f_2 \rangle_\mu 
	= \langle f_1, TSK_A^2S^*T^* f_2 \rangle_\mu
	= \langle K_AS^*T^* f_1, K_AS^*T^* f_2 \rangle_{\num} 
	= \langle \hat{f}_1,\hat{f}_2 \rangle_{\num}.
	\end{equation*}
	Similarly, since $K=K_AK^2K_A$ by Claim~\ref{claim:K*}, we have
	\begin{equation*}
	\langle f_1, \PM f_2 \rangle_\mu 
	= \langle f_1, TSK_AK^2K_AS^*T^* f_2 \rangle_\mu
	= \langle KK_AS^*T^* f_1, KK_AS^*T^* f_2 \rangle_{\num} 
	= \langle K\hat{f}_1,K\hat{f}_2 \rangle_{\num}.
	\end{equation*}
	Thus, it is sufficient for us to show that $\langle K\hat{f}_1,K\hat{f}_2 \rangle_{\num} \leq \langle \hat{f}_1,\hat{f}_2 \rangle_{\num}$.
	
	Consider the partial order on $\MJC$ where $(F_1,\sS,\MC_1)\geq(F_2,\tau,\MC_2)$ if and only if $F_1=F_2$, $\MC_1=\MC_2$ and $\sS\geq \tau$. The following property of the matrix $K_AS^*T^*$ will be useful.
	\begin{claim}\label{claim:hatf-increasing}
		Suppose $f:\mathbb{R}^{|\Omega|}\to\mathbb{R}$ is an increasing positive function. Then, $\hat{f}:\mathbb{R}^{|\MJC|}\to\mathbb{R}$ where $\hat{f}=K_AS^*T^*f$ is also increasing and positive.
	\end{claim}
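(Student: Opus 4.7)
The plan is to compute $\hat f = K_A S^* T^* f$ explicitly on a configuration $(F,\sS,\MC) \in \MJC$ and then read off monotonicity directly from the formula.

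First I would unwind the two lifting operators. From the definition of $T^*$ in \eqref{eqn:T*} we have $T^*f(F,\tau) = f(\tau)$ for every $(F,\tau)\in\JC$, and from Claim~\ref{claim:S*} we likewise have $S^*(T^*f)(F,\tau,\MC) = f(\tau)$ for every $(F,\tau,\MC)\in\MJC$. In particular $S^*T^*f$ depends only on the spin coordinate $\tau$, not on $F$ or $\MC$. Positivity of $S^*T^*f$ is immediate from positivity of $f$.

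Next I would apply $K_A$. By the definition of $K_A$, the entry $K_A((F,\sS,\MC),(F',\tau,\MC'))$ is nonzero only if $F'=F$, $\MC'=\MC$, $F\subseteq E(\tau)$, and $\tau$ agrees with $\sS$ everywhere except possibly on the vertices contained in components of $\mathcal{C}_A(F)\cap\MC$ (the marked components completely inside $A$). Moreover in that case the value is exactly $2^{-|\mathcal{C}_A(F)\cap\MC|}$. Let $\MC_A(F,\MC)=\mathcal{C}_A(F)\cap\MC$ and, for a spin assignment $\xi$ to these components, let $\sS_\xi$ denote the configuration obtained from $\sS$ by setting each component $C\in\MC_A(F,\MC)$ constantly to $\xi(C)$. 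Combining with the previous paragraph, I would obtain the clean formula
\begin{equation*}
\hat f(F,\sS,\MC) \;=\; \frac{1}{2^{|\MC_A(F,\MC)|}} \sum_{\xi\in\{+,-\}^{\MC_A(F,\MC)}} f(\sS_\xi),
\end{equation*}
which is manifestly positive.

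Finally, to verify monotonicity, suppose $(F,\sS,\MC) \ge (F,\tau,\MC)$ in the partial order on $\MJC$, i.e.\ $\sS \ge \tau$ (with the same $F$ and $\MC$). Then for every assignment $\xi$ to the components in $\MC_A(F,\MC)$ the resulting configurations satisfy $\sS_\xi \ge \tau_\xi$ coordinatewise, because the replaced coordinates are identical while the unchanged coordinates still satisfy $\sS(v)\ge\tau(v)$. Since $f$ is increasing on $\Omega$, $f(\sS_\xi)\ge f(\tau_\xi)$ for every $\xi$, and averaging over $\xi$ yields $\hat f(F,\sS,\MC) \ge \hat f(F,\tau,\MC)$. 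This establishes the claim; there is no real obstacle once the closed-form expression for $\hat f$ is in hand, and in particular the argument mirrors the analogous step in the proof of Claim~\ref{claim:censoring:increasing} for the Isolated-vertex dynamics, with isolated vertices replaced by marked components in $\MC_A(F,\MC)$.
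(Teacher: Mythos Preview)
Your proposal is correct and follows essentially the same approach as the paper: compute $S^*T^*f(F,\sS,\MC)=f(\sS)$, express $\hat f$ as a uniform average of $f(\sS_\xi)$ over spin assignments $\xi$ to the marked components in $\mathcal{C}_A(F)\cap\MC$, and deduce monotonicity from $\sS_\xi\ge\tau_\xi$ term by term. The only cosmetic difference is that the paper indexes the sum by the set $\Phi_{U_A}$ of component-constant configurations on $U_A=\bigcup_{C\in\mathcal{C}_A(F)\cap\MC}C$, whereas you index directly by $\{+,-\}^{\mathcal{C}_A(F)\cap\MC}$; these are in obvious bijection.
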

	\noindent
	Given $\omega\in\MJC$, let $\rho_{\omega}=K(\omega,\cdot)$ be the distribution over $\MJC$ that results after applying $K$ (i.e., assigning uniform random spins to all marked connected components) from $\omega$. We get
	\[
		K\hat{f}_1(\omega) = \sum_{\omega'\in\MJC} K(\omega,\omega') \hat{f}_1(\omega') = \Exp_{\rho_{\omega}}[\hat{f}_1]
	\]
	and similarly $K\hat{f}_2(\omega)=\Exp_{\rho_{\omega}}[\hat{f}_2]$. Given $\omega$, the distribution $\rho_{\omega}$ is a product distribution over the spin assignments of all the marked connected components in $\omega$. Therefore, $\rho_{\omega}$ is positive correlated for any $\omega\in\MJC$ by Harris inequality (see, e.g., Lemma~22.14 in \cite{LP}). Since $\hat{f}_1$ and $\hat{f}_2$ are increasing by Claim~\ref{claim:hatf-increasing}, we deduce that for any $\omega\in\MJC$
	\[
		K\hat{f}_1(\omega) K\hat{f}_2(\omega) = \Exp_{\rho_{\omega}}[\hat{f}_1] \, \Exp_{\rho_{\omega}}[\hat{f}_2] \leq \Exp_{\rho_{\omega}}[\hat{f}_1\hat{f}_2].
	\]
	Hence,
	\begin{align*}
		\langle K\hat{f}_1,K\hat{f}_2 \rangle_{\num} &= \sum_{\omega\in\MJC} K\hat{f}_1(\omega) K\hat{f}_2(\omega) \num(\omega) \leq \sum_{\omega\in\MJC} \Exp_{\rho_{\omega}}[\hat{f}_1\hat{f}_2] \num(\omega)\\
		&= \sum_{\omega,\omega'\in\MJC} \hat{f}_1(\omega')\hat{f}_2(\omega') \rho_{\omega}(\omega') \num(\omega) = \sum_{\omega,\omega'\in\MJC} \hat{f}_1(\omega')\hat{f}_2(\omega') \rho_{\omega'}(\omega) \num(\omega')\\
		&\leq \langle \hat{f}_1,\hat{f}_2 \rangle_{\num}
	\end{align*}
	where the second to last equality follows from the fact that $K$ is reversible with respect to $\num$; namely,
	\[
		\rho_{\omega}(\omega') \num(\omega) = K(\omega,\omega')\num(\omega) = K(\omega',\omega)\num(\omega') = \rho_{\omega'}(\omega) \num(\omega').
	\]
	Hence, \eqref{eq:MSW-censoring} holds for every pair of increasing positive functions and the theorem follows.
\end{proof}

\subsection{Proof of auxiliary facts}
In this section we give proofs to Claims~\ref{claim:MSW-ergodic}, \ref{claim:S*}, \ref{claim:K*}, \ref{claim:MSW-decomp} and \ref{claim:hatf-increasing}.

\begin{proof}[Proof of Claim~\ref{claim:MSW-ergodic}]
	In one step of the Monotone SW dynamics, there is a positive probability that all vertices are isolated, in which case each vertex receives a uniform random spin. Thus, for any $\sS,\tau\in\Omega$ we have $\PM(\sS,\tau)>0$. This implies that the chain is ergodic (i.e., irreducible and aperiodic). The reversibility of $\PM$ with respect to $\mu$ follows from Claims~\ref{claim:K*} and \ref{claim:MSW-decomp}: $\PM^* = (TSKS^*T^*)^* = TSKS^*T^* = \PM$, so it is self-adjoint and thus reversible.
\end{proof}

\begin{proof}[Proof of Claim~\ref{claim:S*}]
	We need to show that for any $f\in\R^{|\JC|}$ and $g\in\R^{|\MJC|}$ we have $\langle f,Sg \rangle_\nu = \langle S^*f,g \rangle_{\num}$. Since
	\begin{align*}
		\langle f,Sg \rangle_\nu &= \sum_{(F_1,\sS)\in\JC} \nu(F_1,\sS) f(F_1,\sS) Sg(F_1,\sS)\\
		&= \sum_{\substack{(F_1,\sS)\in\JC\\(F_2,\tau,\MC)\in\MJC}} \nu(F_1,\sS) S\big((F_1,\sS),(F_2,\tau,\MC)\big) f(F_1,\sS)  g(F_2,\tau,\MC)
	\end{align*}
	and
	\begin{align*}
		\langle S^*f,g \rangle_{\num} &= \sum_{(F_2,\tau,\MC)\in\MJC} \num(F_2,\tau,\MC) S^*f(F_2,\tau,\MC) g(F_2,\tau,\MC)\\
		&= \sum_{\substack{(F_1,\sS)\in\JC\\(F_2,\tau,\MC)\in\MJC}} \num(F_2,\tau,\MC) S^*\big((F_2,\tau,\MC),(F_1,\sS)\big) f(F_1,\sS)  g(F_2,\tau,\MC),
	\end{align*}
	it suffices to show that for any $(F_1,\sS)\in\JC$ and $(F_2,\tau,\MC)\in\MJC$ we have
	\[
		\nu(F_1,\sS) S\big((F_1,\sS),(F_2,\tau,\MC)\big) = \num(F_2,\tau,\MC) S^*\big((F_2,\tau,\MC),(F_1,\sS)\big).
	\]
	This follows immediately from the definition of the matrices $S$ and $S^*$:
	\begin{align*}
		\nu(F_1,\sS) S\big((F_1,\sS)&,(F_2,\tau,\MC) \big)\\
		={}& \nu(F_1,\sS) \1\big((F_1,\sS)=(F_2,\tau)\big) \prod_{C\in\MC} \frac{1}{2^{|C|-1}} \prod_{C\in\mathcal{C}(F_1)\bs\MC} \Big( 1-\frac{1}{2^{|C|-1}} \Big)\\
		={}& \num(F_2,\tau,\MC) \1\big((F_1,\sS)=(F_2,\tau)\big)\\
		={}& \num(F_2,\tau,\MC) S^*\big((F_2,\tau,\MC),(F_1,\sS)\big).
	\end{align*}
	Hence, $S^*$ is the adjoint operator of $S$.
\end{proof}

\begin{proof}[Proof of Claim~\ref{claim:K*}]
	The matrix $K_A$ is symmetric for any $A\subseteq V$, and for $(F,\sS,\MC),(F,\tau,\MC)\in\MJC$ we have
	\[
		\frac{\num(F,\sS,\MC)}{\num(F,\tau,\MC)} = \frac{\nu(F,\sS)}{\nu(F,\tau)} = 1.
	\]
	Moreover, for $(F_1,\sS,\MC_1),(F_2,\tau,\MC_2)\in\MJC$ such that $K_A\big( (F_1,\sS,\MC_1),(F_2,\tau,\MC_2) \big) \neq 0$, we have $F_1=F_2$ and $\MC_1=\MC_2$. Combining these facts, we get
	\begin{equation*}
		\num(F_1,\sS,\MC_1) K_A\big( (F_1,\sS,\MC_1),(F_2,\tau,\MC_2) \big) = \num(F_2,\sS,\MC_2) K_A\big( (F_2,\sS,\MC_2),(F_1,\tau,\MC_1) \big).
	\end{equation*}
	This shows that $K_A$ is reversible with respect to $\num$ and so $K_A^*=K_A$ for all $A\subseteq V$.
	
	Since the matrix $K_A$ assigns an independent uniform random spin to each marked connected component contained in $A$, doing this process twice is equivalent to doing it once. This gives $K_A^2=K_A=K_A^*$ for all $A\subseteq V$ as claimed.
\end{proof}

\begin{proof}[Proof of Claim~\ref{claim:MSW-decomp}]
	We will prove the special case where $A=V$. The same argument works for arbitrary $A\subseteq V$. Recall that for $\sS\in\Omega$, $E(\sigma) = \{\{u,v\} \in E: \sigma(u) = \sigma(v)\}$. For any $\sS,\tau\in\Omega$, we have
	\begin{align*}
		&\PM(\sS,\tau) = \sum_{F\subseteq E(\sS)\cap E(\tau)} \Pr[(F,\sS)\mid \sS] \Pr[\tau\mid (F,\sS)],
	\end{align*}
	where
	\begin{align*}
	 \Pr[(F,\sS)\mid \sS]  &= p^{|F|} (1-p)^{|E(\sS)\bs F|}
	 \end{align*}
	 and
	 \begin{align*}
	 \Pr[\tau\mid (F,\sS)] &= \sum_{\MC\subseteq \mathcal{C}(F)} \prod_{C\in\MC} \Big(\frac{1}{2^{|C|-1}} \cdot \frac{1}{2}\Big) \prod_{C\in \mathcal{C}(F)\bs \MC} \Big(1-\frac{1}{2^{|C|-1}}\Big) \1(\sS(C)=\tau(C)).
	 \end{align*}
	Moreover, direct calculations show that for any $\sS\in\Omega$ and any $(F,\tau,\MC)\in\MJC$ we have
	\begin{align*}
		TSK\big(\sS,(F,\tau,\MC)\big)
		={}& \1(F\subseteq E(\sS)\cap E(\tau)) \1(\MC\subseteq \mathcal{C}(F))\cdot p^{|F|} (1-p)^{|E(\sigma)\setminus F|}\\
		& \cdot \prod_{C\in\MC} \Big(\frac{1}{2^{|C|-1}} \cdot \frac{1}{2}\Big) \prod_{C\in\mathcal{C}(F)\bs\MC} \Big( 1-\frac{1}{2^{|C|-1}} \Big) \1(\sS(C)=\tau(C)),
	\end{align*}
	and for any $(F,\xi,\MC)\in\MJC$, $\tau\in\Omega$
	$$S^*T^*\big((F,\xi,\MC),\tau\big)=\1(\xi=\tau).$$
	 Therefore, we deduce that for any $\sS,\tau\in\Omega$, 
	\begin{align*}
		TSKS^*T^*(\sS,\tau) &= \sum_{(F,\xi,\MC)\in\MJC} TSK\big(\sS,(F,\xi,\MC)\big) S^*T^*\big((F,\xi,\MC),\tau\big)\\
		&= \sum_{(F,\tau,\MC)\in\MJC} TSK\big(\sS,(F,\tau,\MC)\big)\\
		&= \PM(\sS,\tau).
	\end{align*}
	This implies that $\PM=TSKS^*T^*$ as claimed.
\end{proof}

\begin{proof}[Proof of Claim~\ref{claim:hatf-increasing}]
	By the definition of the matrices $T^*$ and $S^*$, we have $S^*T^* f(F,\sS,\MC) = f(\sS)$ for any $(F,\sS,\MC)\in\MJC$. Suppose $(F,\sS,\MC),(F,\tau,\MC)\in\MJC$ and $\sS\geq \tau$. Then,
	\[
		\hat{f} (F,\sS,\MC) = K_AS^*T^*f (F,\sS,\MC) = \sum_{(F',\sS',\MC')\in\MJC} K_A\big((F,\sS,\MC),(F',\sS',\MC')\big) f(\sS').
	\]
	Recall that $\mathcal{C}_A(F)$ is the set of all connected components in $(V,F)$ that are completely contained in $A$. Let 
	\[
		U_A = U_A(F,\MC) = \bigcup_{C\in \mathcal{C}_A(F) \cap \MC} C
	\]
	be the subset of vertices in the marked components completely contained in $A$. Let $\Phi_{U_A}\subseteq \{+,-\}^{U_A}$ be the set of all spin configurations on $U_A$ such that vertices from the same component receive the same spin. For $\xi\in\Phi_{U_A}$, we use $\sS_\xi$ (resp., $\tau_\xi$) to denote the configuration obtained from $\sS$ (resp., $\tau$) by replacing the spins of $U_A$ with $\xi$. Then, $\sS_\xi\geq\tau_\xi$ for any $\xi\in\Phi_{U_A}$. By definition of the matrix $K_A$, for any $(F,\sS,\MC),(F',\sS',\MC')\in\MJC$, $K_A\big((F,\sS,\MC),(F',\sS',\MC')\big)= 2^{-|\mathcal{C}_A(F)\cap \MC|}$ if and only if $F=F'$, $\MC=\MC'$ and $\sS,\sS'$ differ only in $U_A$; otherwise, it equals $0$. Thus, we deduce that
	\[
		\hat{f} (F,\sS,\MC) = 2^{-|\mathcal{C}_A(F)\cap \MC|} \sum_{\xi\in\Phi_{U_A}} f(\sS_\xi) \geq 2^{-|\mathcal{C}_A(F)\cap \MC|} \sum_{\xi\in\Phi_{U_A}} f(\tau_\xi) = \hat{f} (F,\tau,\MC).
	\]
	Hence, $\hat{f}$ is increasing for any $A\subseteq V$.
\end{proof}


%
%



                                        
\end{document}